\newtheorem{theorem}{Theorem}
\theoremstyle{plain}
\newtheorem{corollary}{Corollary}
\newtheorem{definition}{Definition}
\newtheorem{example}{Example}
\newtheorem{lemma}{Lemma}
\newtheorem{proposition}{Proposition}
\numberwithin{equation}{section}
\begin{document}
\title[Testing nonparametric shape restrictions ]{\textbf{TESTING
NONPARAMETRIC SHAPE RESTRICTIONS }}
\author{Tatiana Komarova}
\address{Economics Department\\
London School of Economics\\
Houghton Street\\
London WC2A 2AE\\
U.K.}
\email{t.komarova@lse.ac.uk}
\urladdr{http://www.authortwo.twouniv.edu}
\author{Javier Hidalgo}
\address{Economics Department\\
London School of Economics\\
Houghton Street\\
London WC2A 2AE\\
U.K.}
\email{f.j.hidalgo@lse.ac.uk}
\date{\textbf{7 June 2020}}
\subjclass[2000]{Primary 05C38, 15A15; Secondary 05A15, 15A18}
\keywords{Monotonicity, convexity, U-shape, S-shape, symmetry, quasi-convexity, log-convexity, $r$-convexity, mean convexity. B-splines. Khmaladze's transformation.
Distibution-free-estimation.}

\begin{abstract}
We describe and examine a test for a general class of shape constraints,
such as constraints on the signs of derivatives, U-(S-)shape, symmetry, quasi-convexity, log-convexity, $r$-convexity, 
among others, in a nonparametric framework using partial sums empirical
processes. We show that, after a suitable transformation, its asymptotic
distribution is a functional of the standard Brownian motion, so that
critical values are available. However, due to the possible poor
approximation of the asymptotic critical values to the finite sample ones,
we also describe a valid bootstrap algorithm.
\end{abstract}

\maketitle

\section{\textbf{INTRODUCTION}}

\label{section:intro}

Hypothesis testing is one of the most relevant tasks in empirical work. Such
tests include situations when the null and alternative hypothesis are
assumed to belong to a parametric family of models. In a second type of
tests, known as diagnostic or lack-of-fit tests, only the null hypothesis is
assumed to belong to a parametric family leaving the alternative
nonparametric. The latter type of testing has a distinguished and long
literature starting with the work of Kolmogorov, see Stephens \cite{Stephens}, for testing the probability distribution function and in a
time series context by Grenander and Rosenblatt \cite{GrenanderRos} for
testing the white noise hypothesis. In a regression model context a new
avenue of work started in Stute \cite{Stute97}, and Andrews \cite{Andrews97} with a more econometric emphasis, using partial sums empirical
methodology, see also Stute et al. \cite{Stute_etal98b} or Koul and Stute \cite{KoulStute99}, among others. The methodology has attracted a lot of
attention and it rivals tests based on a direct comparison between a
parametric and a nonparametric fit to the regression model as first examined
in H\"{a}rdle and Mammen \cite{HardleMammen} or Hong and White \cite{HongWhite}. One advantage of tests based on partial sums empirical
methodology, when compared to the approach in \cite{HardleMammen}, is that the former does not require the choice of a bandwidth
parameter for its implementation, see also Nikabadze and Stute \cite{NikabadzeStute} for some additional advantages. However, a possible drawback
is that the asymptotic distribution depends, among other possible features,
on the estimator of the parameters of the model under the null hypothesis in
a nontrivial way, as it was shown in Durbin \cite{Durbin73}, and hence
its implementation requires either bootstrap algorithms, see Stute et al. \cite{Stute_etal98a}, or the so-called Khmaladze's \cite{Khmaladze81}
martingale transformation, see also earlier work and ideas in Brown at al. \cite{BrownDurbinEvans}.

In this paper, though, we are interested in a third type of testing where
neither the null hypothesis nor the alternative has a specific parametric
form. This type of hypothesis testing can be denoted as testing for \emph{%
qualitative }or \emph{shape} restrictions. Many shape properties (including
monotonicity, convexity/concavity, strong convexity, log-convexity) are
widespread in economics and other disciplines. It is also often of interest
to analyse statistical and economic relationships that do not have a
persistent shape pattern on the whole domain but rather switch the patterns
in the domain (for example, \emph{U-shaped} or \emph{S-shaped} relations).

To fix ideas, consider the nonparametric regression model%
\begin{align}
y_{i}& =m\left( x_{i}\right) +u_{i}\text{,}  \label{1_1} \\
E[u_{i}|x_{i}]& =0,  \notag
\end{align}%
where $x$ has bounded support $\mathcal{X=}:\left[ \underline{x},\overline{x}%
\right] $ and $m\left( \cdot \right) $ is smooth. More specific conditions
on the sequences $\left\{ u_{i}\right\} _{i\in \mathbb{Z}}$ and $\left\{
x_{i}\right\} _{i\in \mathbb{Z}}$ will be given in Condition $C1$ in Section %
\ref{section:1step}. Our main aim is testing whether the regression function 
$m\left( x\right) $ possesses shape properties captured by a general null
hypothesis 
\begin{equation}
H_{0}:\text{ \ }m\in \mathcal{M}_{0},  \label{main_null}
\end{equation}%
where the class of interest $\mathcal{M}_{0}$ is a subset of smooth
functions from $\mathcal{X}$ to $\mathbb{R}$. Our requirement on the class $%
\mathcal{M}_{0}$ is given in Condition $C0$ below, which is formulated in a way
to directly relate it to the methodology we employ later to approximate and
estimate $m(\cdot )$.

Before we introduce Condition $C0$, let us introduce some useful notations.
Let $\mathcal{B}_{q,L}$ denote the set of all \emph{B-splines} of degree $q$
with knots that split $\mathcal{X}$ into $L^{\prime }$ equally spaced
intervals.\footnote{%
As it will be clear from our exposition later, the condition that these
intervals are equally spaced is not important and is only imposed for the
simplicity of the exposition. We only need that the system of knots has to
become increasingly dense in $\mathcal{X}$. Also, in Section \ref{section:BPsplines} we shall
motivate the choice of \emph{B-splines } in comparison to other nonparametric estimation
techniques.} A generic element in this set is written as a linear
combination $m_{\mathcal{B}}(x)\equiv \sum_{\ell =1}^{L}\beta _{\ell
}p_{\ell ,L}\left( x\right) $, where $L=L^{\prime }+q$, and $\left\{ p_{\ell
,L}\left( \cdot \right) \right\} $ is the collection of the \emph{B-splines}
base for the chosen system of knots (more details will be given in Section \ref{section:BPsplines}). Any element in $\mathcal{B}_{q,L}$ can be fully
characterised by the vector $(\beta _{1},\ldots ,\beta _{L})\in \mathbb{R}%
^{L}$ and constraints on this vector can be mapped into constraints on the
B-spline. For any set $S_{q,L}\subseteq \mathbb{R}^{L}$ we can, thus, define 
\begin{equation*}
\mathcal{M}_{s_{q,L}}=:\left\{ m_{B}(\cdot )\in \mathcal{B}_{q,L}\;|\;(\beta
_{1},\ldots ,\beta _{L})\in S_{q,L}\right\} .
\end{equation*}

\vskip 0.1in

\noindent \emph{\textbf{Condition C0}. There is a set $S_{q,L} \subseteq  \mathbb{R}^{L} $ for any $L'$ (and, hence, for any $L=L'+q$)  that satisfies the following properties: 
\begin{itemize} 
\item[(a)] $S_{q,L}$ does not depend on data $\left\{ x_{i}\right\} _{i\in \mathbb{Z}}$ and, thus, is non-stochastic; 
\item[(b)] the boundary of $S_{q,L}$ consists of a finite number of smooth surfaces; 
\item[(c)] 
\begin{equation}
\label{GT2}
\mathcal{H}\left( \mathcal{M}_0,\mathcal{M}_{S_{q,L}}\right) \rightarrow
0\quad \text{as}\quad L\rightarrow \infty \text{,}
\end{equation}
where $\mathcal{H}$ is the Hausdorff distance in the supremum norm in the space of continuous functions from $\mathcal{X}$ to $\mathbb{R}$. 
\end{itemize}}

\vskip 0.1in

Condition $C0$ essentially states that the membership in $\mathcal{M}_0$ can
be captured by restrictions on parameters $\{\beta_{\ell}\}_{\ell=1}^L$
which become necessary and sufficient restrictions as the system of knots
becomes increasingly dense in $\mathcal{X}$. In addition, these restrictions
on $\{\beta_{\ell}\}_{\ell=1}^L$ do not depend on the available data, which
adds to the attractiveness of our approach for implementation purposes. 

Our idea will be to test the null hypothesis 
\begin{equation}
H_{0}^{B}:\text{ \ }(\beta _{1},\ldots ,\beta _{L})\in S_{q,L}
\label{main_nullB}
\end{equation}%
formulated in terms of the approximation for $m(\cdot )$. As may be
expected, our methodology readily extends to situations when the domain $%
\mathcal{X}$ is partitioned into several intervals with different
null hypotheses in the spirit of (\ref{main_null}) formulated on
different intervals in the partitioning. An illustration of this situation
is given in Example \ref{example:deriv2} below.

We shall now illustrate several examples of shape classes $\mathcal{M}_{0}$
that satisfy condition $C0$. In Section \ref{section:BPsplines}, after a
more thorough discussion of B-splines we indicate the corresponding sets $%
S_{q,L}$ for each of these examples.  As we shall discuss in Section \ref{section:BPsplines}, the first two examples pertain to the case where in the constraints on the coefficients of the B-splines approximation are linear\footnote{Quasi-convexity in Example \ref{example:deriv2} may be an exception depending on how exactly one approaches the testing, as is clear from further discussion in Section \ref{section:BPsplines}.}, whereas in the last two examples these constraints are nonlinear. It is important to note that these
examples are an illustration of the scope of our approach rather than a full list of properties that our methodology could test.
\label{section:BPsplines}

\begin{example}[Constraints, may hold simultaneously, on the derivatives of $%
m\left( \cdot \right) $]
\label{example:deriv1} 
\begin{equation}
H_{0}:d_{r}\cdot m^{\left( r\right) }\left( x\right) \geq c_{r}\text{,}~\ \
\ \ r\in R\text{,}  \label{eq:example1}
\end{equation}%
where $R$ is a finite subset of $\mathbb{N}^{+}$, $d_{r}\in \left\{
-1,1\right\} $ and $c_{r}$ are known constants.

In this hypothesis, we specify conditions on inequalities for several
derivatives simultaneously. If $c_{r}=0$, then we have familiar restrictions
on the sign of the $r$-th derivative. Special cases in this example include
testing for \textbf{monotonicity} ($r=1$ and $c_{1}=0$), testing for \textbf{%
convexity/concavity} ($r=2$ and $c_{2}=0$), testing for \textbf{strong $%
\lambda $-convexity} ($r=2$ and $c_{2}=\lambda >0$), testing for \textbf{%
monotonicity and concavity simultaneously}, etc. 
\end{example}
The properties of monotonicity, convexity/concavity or the ones in terms of higher-order derivatives are so commonplace in economics and other disciplines that the applications are too many to list. For instance, a demand function is expected to be a decreasing
function of the price of a good, whereas a supply function is expected to be
increasing. In single-object auctions, the equilibria analyzed commonly are those
in which buyers play monotone strategy functions and mark-ups are often monotone functions of the bids (see e.g. \cite{Krishna}). In
other economic relationships it is often of importance whether the marginal returns
are increasing or decreasing, which naturally amounts to convexity and concavity,
respectively.

Note that in the context of isotone/monotone regressions the literature goes back 
to Brunk \cite{Brunk} and Wright \cite{Wright}, and in the context of convex regressions to 
Hildreth \cite{Hildreth}. A more detailed coverage of this literature is given in Section \ref{sec:methodology}.

As mentioned above, our methodology can be used when different properties
can be conjectured on different intervals of the partition $\mathcal{X}$.
This is illustrated in the next example.

\begin{example}[changing shape patterns  on
the domain; symmetry; quasi-convexity]
\label{example:deriv2} In this case, we can take a partitioning of $\mathcal{%
X}$ at points $\tau _{0}=\underline{x}<\tau _{1}<\ldots <\tau _{J-1}<\tau
_{J}=\overline{x}$, and formulate the null hypothesis on the signs of
various derivatives on the intervals in this partitioning. The null
hypothesis in this case would be 
\begin{align}
H_{0}& :d_{r_{j}}\cdot m^{\left( r_{j}\right) }\left( x\right) \geq 0,\quad
x\in \left( \tau _{j},\tau _{j+1}\right] ,  \label{eq:example2} \\
& \text{ for some fixed }r_{j}\in \mathbb{N},\quad j=0,\ldots ,J-1\text{,} 
\notag
\end{align}%
$d_{r_{j}}\in \left\{ -1,1\right\} $.\footnote{%
It goes without saying that we can have several constraints on each $[\tau
_{j},\tau _{j+1}],\;j=0,\ldots ,J-1$.} The partitioning points $\tau _{j}$
either have to be known or have to be consistently estimated at a suitable
rate before the testing procedure.

This type of hypotheses include, among others, the important scenarios of $U$%
\emph{-shape}, $S$\emph{-shape}. For example, \emph{U-shape} is the property of a function first decreasing
and then increasing. So, we write the null hypothesis of \emph{U-shape} with
the switch at $s_{0}$ as 
\begin{equation}
H_{0}:\text{ \ }\left\{ 
\begin{array}{c}
m\left( x^{1}\right) >m\left( x^{2}\right) \quad \text{ when }\quad
x^{1}<x^{2}\leq s_{0} \\ 
m\left( x^{1}\right) >m\left( x^{2}\right) \quad \text{ when }\quad
x^{1}>x^{2}\geq s_{0}.%
\end{array}%
\right.  \label{monotone_3}
\end{equation}

The issues associated with testing properties outlined in this example
relate to how functional pieces from different partition intervals are
connected to each other. We can incorporate various scenarios -- e.g., we
can require that different pieces are joint continuously, or smoothly at all
or some partition points.

It goes without saying that we can consider several different partitions and
test simultaneously properties formulated on various intervals of these
partitions. For example, one could consider a function defined on $[0,1]$
and conjecture that $m(\cdot)$ is convex on $(0,s_0$, concave on $(s_0,1]$,
decreasing on $(0,0.3s_0]$, increasing on $(0.3s_0, 0.5s_0+0.5]$, and
decreasing on $(0.5s_0, 1]$.

Examples of \emph{U-shaped} relationships in economics and other disciplines
can be found e.g. in \cite{CalabreseBaldwin}, \cite{Corrao_etal}, \cite{Goldin}, 
\cite{Groes_etal}. \cite{SuttonTrefler}, \cite{Weiman},   and in the
discussions in \cite{Kostyshak}, \cite{LindMelhum}  and \cite{Simonsohn}. Inverse \emph{U-shaped}
relationships include the case of the so-called single-peaked preferences,
which is an important class of preferences in psychology and economics.
Examples of \emph{S-shaped} relationships between unexpected earnings and
stock price in accounting can be found in \cite{DasLev} or \cite{FreemanTse}, among others. \emph{%
S-shaped} growth curves of the adopted population in a large society is a
generally accepted empirical feature of innovation diffusion (see
discussions in \cite{Rogers}  or \cite{Utterback}). Thus, testing for an 
\emph{S-shape} in this case would allow one to conclude whether technology
evolves as one would expect.

Another shape property in this class of shape-changing patterns are the properties of \textbf{%
quasi-convexity} and \textbf{%
quasi-concavity}. Formally, the class $\mathcal{M}_{0}$ in (\ref{main_null}) of quasi-convex functions is
defined as 
$$
\mathcal{M}_{0}=\bigg\{ \phi (\cdot ):\;\forall \,x_{1},x_{2}\in \mathcal{X}%
\; \; \forall \lambda \in \lbrack 0,1] \; \; \phi (\lambda x_{1}+(1-\lambda
)x_{2})\leq \max {\big \{}\phi (x_{1}),\phi (x_{2}){\big \}} \bigg\}.
$$
Since function $\phi $ is \textbf{quasi-concave} if and only if $-\phi $ is
quasi-convex, we can easily write a description of the class of
quasi-concave functions. A smooth function is quasi-convex if and only if 
it first  decreases up to some point and then increases (giving a special case of monotonicity  when such a switch point is located at one of the boundary points of the interval). 
Since this switch point may not be known a priori, it would have to be estimated. A further discussion of this is given later in Section \ref{section:BPsplines}. 

The concept of quasi-convexity/quasi-concavity is quite important in economics and other
disciplines. For example, it is known that indirect utility functions are
quasi-convex provided that the direct utility function is continuous. Thus
strict quasi-concavity (and continuity) of the production function provides
a sufficient condition for the differentiability of the cost function with
respect to input prices. 

Finally, we mention another property related to this framework, which is the \textbf{symmetry} of a function around some point $s_0$. Without a loss of generality, we can take $s_0$ to be the center of the domain  $[\underline{x}, \overline{x}]$\footnote{For a given $s_0$, one can only test for symmetry only on the interval $[s_0 - \min\{\overline{x}-s_0, s_0-\underline{x}\}, s_0+\min\{\overline{x}-s_0, s_0-\underline{x}\}]$, which is centered around $s_0$.} and write the class of symmetric functions as 
\begin{equation*}
\mathcal{M}_{0}=\left\{ \phi (\cdot ):\;\forall \,x \in [\underline{x},s_0] \quad \phi (s_0-x)=\phi (s_0+x) \right\}. 
\end{equation*}
\end{example}

\begin{example}[$r$-convexity; $\protect\rho $-convexity]
\label{example:rconvex} 
The null hypothesis of $\mathbf{r}$-\textbf{convexity}, $r \neq 0$, is
described by the following $\mathcal{M}_0$; 
\begin{multline*}
\mathcal{M}_0 = \bigg\{ \phi(\cdot): \; \phi(\cdot) \geq 0, \quad \forall \,
x_1, x_2 \in \mathcal{X} \quad \forall \lambda \in [0,1] \\
\phi(\lambda x_1+(1-\lambda )x_2)\leq \log \left(\lambda
e^{r\phi(x_1)}+(1-\lambda) e^{r\phi(x_2)}\right)^{\frac{1}{r}} \bigg\},
\end{multline*}
and for $r=0$ this property is described by 
\begin{multline*}
\mathcal{M}_0 = \bigg\{\phi(\cdot): \; \phi(\cdot) \geq 0, \quad \forall \,
x_1, x_2 \in \mathcal{X} \quad \forall \lambda \in [0,1] \\
\phi(\lambda x_1+(1-\lambda )x_2)\leq \lambda \phi(x_1) + (1-\lambda)
\phi(x_2) \bigg\}.
\end{multline*}
The definition of $\mathbf{r}$-\textbf{concave} functions would reverse the
inequalities. If we consider functions that are twice continuously
differentiable, then the class of $r$-convex functions can be described as
 (see e.g. Avriel \cite{Avriel72}) 
\begin{equation*}
\mathcal{M}_0 = \bigg\{\phi(\cdot): \; \phi(\cdot) \geq 0, \quad \forall \,
x \in \mathcal{X} \quad r \left(\phi^{\prime }(x) \right)^2 + \phi^{{\prime
\prime }}(x) \geq 0 \bigg\}.
\end{equation*}
A concept closely related to $r$-convexity/concavity is what is known as $%
\mathbf{\rho}$-convexity/concavity in the economics literature. Following \cite{CaplinNalebuff91a}, we can define the class of $\rho$-convex functions as
follows: for $\rho \neq 0$, 
\begin{multline*}
\mathcal{M}_0 = \bigg\{ \phi(\cdot): \; \phi(\cdot) > 0, \quad \forall \,
x_1, x_2 \in \mathcal{X} \quad \forall \lambda \in [0,1] \\
\phi(\lambda x_1+(1-\lambda )x_2)\leq \left(\lambda
\phi(x_1)^{\rho}+(1-\lambda) \phi(x_2)^{\rho}\right)^{\frac{1}{\rho}} \bigg\}%
,
\end{multline*}
and for $\rho=0$, 
\begin{multline*}
\mathcal{M}_0 = \bigg\{ \phi(\cdot): \; \phi(\cdot) > 0, \quad \forall \,
x_1, x_2 \in \mathcal{X} \quad \forall \lambda \in [0,1] \\
\phi(\lambda x_1+(1-\lambda )x_2)\leq \phi(x_1)^{\lambda}
\phi(x_2)^{1-\lambda}\bigg\}.
\end{multline*}
The definition of $\mathbf{\rho}$-\textbf{concave} functions would reverse
the inequalities. It is obvious that for any $r=\rho$, a real-valued
function $\phi$ is $r$-convex if and only if $e^{\phi}$ is $\rho$-convex.

It is also clear that $\rho$-convexity gives the standard definition of convexity when $\rho=1$ and gives log convexity for $\rho=0$.

In economics, $\rho $-concavity has been used to
measure the curvature of demand functions, production functions, and
distributions (i.e., density and distribution functions) of individual
characteristics. It has also been employed in the areas of imperfect
competition, auctions and mechanism design and public economics.\footnote{%
See, for instance, \cite{AndersonRenault03}, \cite{CaplinNalebuff91b}, \cite{Cowan}, \cite{Ewerhart},  \cite{Moyes}, among others.} Log-convexity
and log-concavity are of particular interest in economics.
\end{example}

Our final example illustrates further shape properties well developed in the
mathematical literature. This example first requires defining a mean
function. 
\begin{definition}[mean function; Niculescu \cite{Niculescu}]
A function $N: \mathbb{R}^{+} \times \mathbb{R}^{+} \rightarrow 
\mathbb{R}^{+}$ is called a \textit{mean function} if

\begin{itemize}
\item $N(x_1, x_2) =N(x_2, x_1)$

\item $N(x, x) =x$

\item $x_1<N(x_1, x_2) < x_2$ whenever $x_1<x_2$

\item $N(ax_1, ax_2) =aN(x_1, x_2)$ for all $a>0$
\end{itemize}	
\end{definition}

Examples of mean functions include the arithmetic mean (A), the
geometric mean (G), the harmonic mean (H), the logarithmic mean and the
identric mean.

\begin{example}[$MN$-convexity]
\label{example:meanconvex} For any two mean functions $M$ and $N$, the class of $\mathbf{MN}$-\textbf{%
convex} is defined as 
\begin{equation*}
\mathcal{M}_0 = \bigg\{ \phi(\cdot): \; \phi(\cdot) > 0, \quad \forall \,
x_1, x_2 \in \mathcal{X} \quad \phi(M(x_1, x_2)) \leq N(\phi(x_1), \phi(x_2))%
\bigg\}.
\end{equation*}
The definition of $\mathbf{MN}$-\textbf{concavity} would reverse the
inequalities.

When we have different combinations of arithmetic (A), geometric (G) and
harmonic (H) means, we end up with the following special cases of ${MN}$%
-convex functions (see e.g. \cite{Anderson_etal}):

\begin{enumerate}
\item $m$ is AG-convex if and only if $\frac{m^{\prime }(x)}{m(x)}$
is increasing 

\item $m$ is AH-convex  if and only if $\frac{m^{\prime }(x)}{m^2(x)%
}$ is increasing 

\item $m$ is GA-convex if and only if $x m^{\prime }(x)$ is
increasing 

\item $m$ is GG-convex if and only if $\frac{x m^{\prime }(x)}{m(x)%
}$ is increasing 

\item $m$ is GH-convex if and only if $\frac{x m^{\prime }(x)}{%
m^2(x)}$ is increasing 

\item $m$ is HA-convex if and only if $x^2 m^{\prime }(x)$ is
increasing 

\item $m$ is HG-convex if and only if $\frac{x^2 m^{\prime }(x)}{%
m(x)}$ is increasing 

\item $m$ is HH-convex if and only if $\frac{x^{2}m^{\prime}(x)}{%
m^{2}(x)}$ is increasing.
\end{enumerate}
\end{example}

\vskip 0.1in

We want to emphasize again that the examples given in this introduction are
meant to illustrate the scope of applicability of our testing methodology
rather than give an exhaustive list of potential application.

\vskip 0.1in

\subsection{\textbf{LITERATURE REVIEW ON TESTING FOR SHAPES}}

$\left. {}\right. $

There is a range of the literature related to testing for monotonicity or
convexity such as \cite{Bowman_etal}, Hall and
Heckman \cite{HallHeckman}, Ghosal et al. \cite{GhosalSenVV}, Juditsky and Nemirovski \cite{JudNem}, Wang and Meyer \cite{WangMeyer}, 
Chetverikov \cite{Chetverikov}, Schlee \cite{Schlee}, Diack and
Thomas-Agnan \cite{DiackTA}, Dumbgen and Spokoiny \cite{DumbgenSpok}, Abrevaya and Jiang \cite{AbrevayaJiang},
Baraud et al. \cite{Baraud_etal}, Dette et al. \cite{DetteHN}.

Some of the work referenced above (such as \cite{Baraud_etal},
\cite{DumbgenSpok}, \cite{JudNem}) focuses on the
regression function in the ideal Gaussian white noise model, while our
framework does not require such assumptions. In the aforementioned
literature some papers, such as \cite{Baraud_etal}, \cite{DiackTA} or \cite{HallHeckman}, allow the explanatory
variable to take only deterministic values. \cite{AbrevayaJiang}, \cite{Chetverikov} and \cite{GhosalSenVV} treat the
explanatory variable as random but either require its full stochastic
independence with the unobserved regression error (\cite{Chetverikov}, \cite{GhosalSenVV}) or require the distribution of the error to be
symmetric conditional on the explanatory variable, see \cite{AbrevayaJiang}, whereas we require a weaker mean independence of error
condition. Some of the above-mentioned papers, such as \cite{Bowman_etal}, \cite{HallHeckman} or \cite{WangMeyer}, do not give any asymptotic theory. Many of the
approaches are tailored to a specific type of shape and their extensions to
more general shape properties do not appear straightforward, if at all
possible. Moreover, these tests are often targeted to detecting specific
deviations from the null hypothesis. The violations of the null hypothesis,
however, can come from different sources. A parametric test for \emph{%
U-shapes} was suggested in \cite{LindMelhum}, whereas
\cite{Kostyshak}  proposes a non-parametric test based on
critical bandwidth giving sufficient conditions for its consistency. To the
best of our knowledge, there is no formal approach in the literature to
testing \emph{S-shapes} or general shape properties formulated on intervals
in a partitioning.

Also, to the best of our knowledge, this paper is the first to give a
unified framework and a nonparametric methodology for testing (possibly
simultaneously) a wide range of shape/qualitative constraints. Even for
shape properties described in terms of the signs of the derivatives of the
function, this methodology will, in particular, overcome some of the
problems discussed above. But, of course, we want to make it clear that its
applicability goes beyond such properties.

To be more specific, we propose a test based on a transformation, introduced
in Khmaladze \cite{Khmaladze81}, of the partial sums empirical process
similar to that in Stute \cite{Stute97}. Some of the properties of
the test is that it converges to a standard Brownian motion, so that
critical values of standard functionals such as Kolmogorov-Smirnov, Cram\'{e}r
-von-Mises or Anderson-Darling are readily obtained. As a consequence, our
testing procedure has the same asymptotic distribution regardless of the
shape constraints under consideration.

Another feature of our testing procedure is its flexibility as it is able to
test simultaneously for more than one shape constraint -- for instance, one
could test monotonicity and log-convexity simultaneously. Finally, the test is easy to implement as
it requires no more than \textquotedblleft recursive\textquotedblright\
least squares.

The remainder of the paper is organized as follows. Section \ref{sec:methodology} introduces and motivates our nonparametric estimator of $%
m\left( x\right) $ and it compares the methodology against rival
nonparametric estimators. It also describes the form that the sets $S_{q,L}$
in Condition $C0$ take for the examples given in the introduction. In
Section \ref{sec:Khmaladze}, we describe the test and examine its
statistical properties. Because the Monte-Carlo experiment suggests that the
asymptotic critical values are not a good approximation of the finite sample
ones, Section \ref{sec:bootstrap} introduces a bootstrap algorithm. Section \ref{sec:simulations_and_data} presents a
Monte-Carlo experiment and some empirical examples, whereas Section \ref{sec:conclusion}
concludes with a summary and possible extensions of the methodology. 
The proofs are confined to the Appendix A which employs a series of lemmas given
in Appendix B.


\section{\textbf{METHODOLOGY AND REGULARITY CONDITIONS}}

\label{sec:methodology}

Before we present the procedure for testing the hypothesis given in (\ref{main_null}), we introduce and motivate the type of nonparametric estimator
that we shall use to estimate the regression function $m\left( x\right) $
under $H_{0}$ and/or $H_{1}$.

When testing for the null hypothesis of either monotonicity or convexity, several nonparameteric estimators have been considered in the literature.
As mentioned in the introduction, early work on isotone/monotone regressions is Brunk \cite{Brunk} and Wright \cite{Wright}. Friedman and Tibshirani \cite{FT} combines local averaging and isotonic
regression, Mukerjee \cite{Mukerjee}
and Mammen \cite{Mammen91a} propose a two-step approach, which involves smoothing the data by
using kernel regression estimators in the first step, and then deals with
the isotonization of the estimator by projecting it into the space of all
isotonic functions in the second step. Hall and Huang \cite{HallHuang} proposes an alternative method based on tilting estimation which preserves
the optimal properties of the kernel regression estimator. Finally, a
different approach to isotonization is based on rearrangement methods using
a probability integral transformation, see Dette et al. \cite{DetteNP} or Chernozhukov et al. \cite{ChernozhukovFVG}. When the null
hypothesis is that of convexity, Hildreth \cite{Hildreth}  approach is based 
on estimating $m\left( \cdot \right) $ by least squares approach. Consistency, rate of convergence properties and pointwise convergence results for this estimator are established, respectively, in \cite{HansonPledger}, \cite{Mammen91b} and \cite{GroeneboomJW}. The global behaviour of
such an estimator is examined in \cite{GuntuboyinaSen}. Birke and Dette \cite{BirkeDette} examine an estimator based on first obtaining unconstrained
estimate of the derivative of the regression function which is isotonized
and then integrated.

Our main aim is to propose a testing methodology which is not only applicable for a wide range of
shape properties but also able to perform valid statistical inferences. The narrowness of the above mentioned techniques would
make it extremely difficult to make them work in our more general
context. Besides, the implementation of these techniques is not always trivial and/or
they often lack any asymptotic theory useful for the purpose of inference.

A different approach might be based on series estimation using polynomial
and, in particular, \emph{Bernstein} polynomials basis, see the survey by
Chen \cite{ChenHandbook}, for results on sieve estimation in both
nonparametric and semiparametric models. One motivation for the Bernstein
polynomials is due to a well-known result that any continuous function can
be approximated in the sup-norm by a Bernstein polynomial whose coefficients
are the values of the function at uniform knot points. This property makes
them an appealing candidate estimation method in our context. However, base
Bernstein polynomials have an undesirable property of being highly
correlated, making them difficult for our purposes -- in
particular, to obtain a valid Khmaladze's transformation, which plays a key
role in our results. This is discussed in more detail in Section \ref{section:BPsplines}.

Due to many of the aforementioned reasons, in this paper we shall employ 
\emph{B-splines} and/or penalized \emph{B-splines} known as \emph{P-splines}. Our testing will rely on the ability to write our null hypothesis in terms
of the coefficients of the \emph{B-spline} approximation, leading us to
utilizing Condition $C0$ and testing (\ref{main_nullB}). Note that monotone
regression splines (closely related to \emph{B-splines}) were introduced by Ramsay 
\cite{Ramsay} and later extended by Meyer \cite{Meyer} to estimate convex/concave function or functions that are, e.g., both
monotone and convex. A distinct feature in our approach is, first, that we
now take the use of \emph{B-splines} to a different level of generality and,
second, that we derive the asymptotic theory as the number of knots goes to
infinity, and, third, that we allow the number of constraints on the
coefficients of \emph{B-splines} to increase with the number of knots. Both
\cite{Ramsay} and \cite{Meyer} consider the
number of knots, and hence the number of constraints, to be fixed.


\subsection{\textbf{B-SPLINES (or P-SPLINES)}}

\label{section:BPsplines}

$\left. {}\right. $

\emph{B-splines} or \emph{P-splines} are constructed from polynomial pieces
joined at some specific points denoted knots. Their computation is obtained
recursively, see \cite{deBoor}, for any degree of the
polynomial. It is well understood that the choice of the number of knots
determines the trade-off between overfitting when there are too many knots,
and underfitting when there are too few knots. The main difference between 
\emph{B-splines} and \emph{P-splines} is that the latter tend to employ a
large number of knots but to avoid oversmoothing they incorporate a penalty
function based on the second difference of the coefficients of adjacent 
\emph{B-splines}, in contrast to the second derivative employed in
\cite{OSullivan86}, \cite{OSullivan88}, see \cite{EilersMarx}  for details.

The methodology and applications of constrained \emph{B-splines} and \emph{%
P-splines} are discussed by many authors, too many to review here. For more 
detailed discussions, see, among others, the monographs \cite{deBoor}
and \cite{Dierckx} for  B-splines and \cite{EilersMarx}, \cite{BollaertsEM} for P-splines. Some literature on shape-preserving splines (for ordinary
shapes such as monotonicity, convexity, etc.) includes, among others, \cite{LiNaikSwetits}, \cite{MammenTA},  \cite{Meyer} and \cite{Ramsay}.

In general, the \emph{B-spline} basis of degree $q$
\begin{itemize}
\item takes positive values on the domain spanned by $q+2$ adjacent knots,
and is zero otherwise;

\item consists of $q+1$ polynomial pieces each of degree $q$, and the
polynomial pieces join at $q$ inner knots;

\item at the joining points, the $q-1$th derivatives are continuous;

\item except at the boundaries, it overlaps with $2q$ polynomials pieces of
its neighbours;

\item at a given $x$, only $q+1$ \emph{B-splines} are nonzero.
\end{itemize}

Assume that one is interested in approximating the regression function $m\left( x\right) $ in the interval $\left[\underline{x},\overline{x}\right]$. Then we split the interval $\left[\underline{x},\overline{x}\right]$ into $L^{\prime }$ equal length subintervals
with $L^{\prime }+1$ knots\footnote{%
Although it is possible to have nonequidistant subintervals, for simplicity
we consider equally spaced knots. An alternative way to locate the knots is
based on the quantiles of the $x$ distribution.}, where each subinterval
will be covered with $q+1$ \emph{B-splines} of degree $q$. The total number
of knots needed will be $L^{\prime }+2q+1$ (each boundary point $\underline{x}$, $\overline{x}$ is
a knot of multiplicity $q+1$) and the number of \emph{B-splines} is $%
L=L^{\prime }+q$. Thus, $m\left( x\right) $ is approximated by a linear
combination of \emph{B-splines} of degree $q$ with coefficients $(\beta
_{1},\ldots ,\beta _{L})$ as 
\begin{equation}
m_{\mathcal{B}}\left( x;L\right) =\sum_{\ell =1}^{L}\beta _{\ell }p_{\ell
,L}\left( x;q\right) \text{,}  \label{m_bers}
\end{equation}%
and where henceforth we shall denote the knots as $\{z^{k}\}$, $%
k=1,\ldots,L^{\prime }+2q+1$, where $\underline{x}=z^{1}=\ldots =z^{q+1}$ and $%
\overline{x}=z^{L^{\prime }+q+1}=\ldots =z^{L^{\prime }+2q+1}$.

\emph{B-splines} share some properties which turns out to be very useful for
our purpose. Indeed, 
\begin{eqnarray*}
\left( a\right) ~\sum_{\ell =1}^{L}p_{\ell ,L}\left( x;q\right) &=&1\text{ \
\ for all }x\text{ \ and }q\text{.} \\
\left( b\right) \text{ }\frac{\partial }{\partial x}m_{\mathcal{B}}\left(
x;L\right) &=&:m_{\mathcal{B}}^{\prime }\left( x;L\right) =q\sum_{\ell
=1}^{L-1}\frac{\triangle \beta _{\ell +1}}{z^{l+1+q}-z^{l+1}}p_{\ell
+1,L}\left( x;q-1\right) ,
\end{eqnarray*}%
where $\triangle \beta _{\ell }=\beta _{\ell }-\beta _{\ell -1}$. In
particular, $\left( a\right) $ indicates that \emph{B-splines}, as is the
case with \emph{Bernstein} polynomials, are a partition of $1$. The property
(b) states that the derivative of a \emph{B-spline} of degree $q$ becomes a 
\emph{B-spline} of degree $q-1$. Using this expression for the derivative
and taking into account that the knot system effectively changes with the
first and the last knots now removed (thus, the multiplicity of $\underline{x}$ and $\overline{x}$
becomes $q$ rather than $q+1$), one can derive an expression for the second
derivative, and so on.

We now describe the structure that the sets $S_{q,L}$ take in the Examples \ref{example:deriv1}-\ref{example:meanconvex} given in the previous section. One general idea, especially useful when dealing with  shape constraints in Examples \ref{example:rconvex}-\ref{example:meanconvex}, for constructing $S_{q,L}$ is to enforce the shape property of interest at the chosen knots (for simplicity we can take equidistant ones). As $L\rightarrow \infty $, the knot system 
becomes dense in $\mathcal{X}$ and, thus,  the shape property of interest becomes satisfied on an increasingly dense system of points in $\mathcal{X}$.\footnote{On a related note, \cite{Dierckx1980} shows that for cubic splines, if the sign of the second derivative is enforced at the knot points, then the respective convexity/concavity property of the approximation will hold on the whole domain. A similar property can be established for the sign of the first derivative and monotonicity.}

\vskip 0.1in

\noindent \textbf{Example \ref{example:deriv1} (Cont.).} For our general null hypothesis in (\ref{eq:example1}), the set $S_{q,L}$  is given by 
\begin{multline*}
S_{q,L}=\bigg\{(\beta _{1},\ldots ,\beta _{L})\;|\;\forall \;r\in R,\quad
\forall \;z^{k},k=1,\ldots ,L+q+1, \\
d_{r}\cdot m_{\mathcal{B}}\left( z^{k};L\right) \geq c_{r}\bigg\}.
\end{multline*}

When considering special cases of (\ref{eq:example1}), the set(s) $S_{q,L}$ takes much simpler and/or familiar structures. Indeed, consider $R=\{1\}$ and 
$c_1=0$, then the constraints become on the regression function to be
(weakly) monotone and from the property $\left( b\right) $ of B-splines, it
is easy to see that one can take 
\begin{equation*}
S_{q,L} = \left\{(\beta_1, \ldots, \beta_L) \; | \; d_1
(\beta_{\ell+1}-\beta_{\ell})\geq 0, \quad \ell=1, \ldots, L-1 \right\}.
\end{equation*}

When $R=\{r\}$, $r>1$, $c_r=0$, then the conditions imposes on the set $S_{q,L}$ can be more elaborate than those in the monotonicity scenario. However, if the interval $\left[\underline{x},\overline{x}\right]$ is split into equidistance subintervals, the structure of $S_{q,L}$ can be described quite easily as
\begin{equation}
S_{q,L}=\left\{ (\beta _{1},\ldots ,\beta
_{L})\;|\;d_{r}\sum_{k=0}^{r}(-1)^{r-k}\binom{r}{k}\beta _{\ell
+k}\geq 0,\; \ell =q,\ldots ,L-q+1-r\right\} .  \label{eq:sqlderiv}
\end{equation}%
A more refined form of $S_{q,L}$, which may be beneficial for small $L$,
would also involve constraints that capture the behaviour of $%
m_{\mathcal{B}}\left( x;L\right) $ around the boundary in a more precise way. These constraints
are linear inequalities and involve relations among coefficients corresponding to the B-splines around the boundaries.  The form of these inequalities
would, however, be slightly different from the inequalities in (\ref{eq:sqlderiv}). Just to give an example, for $r=2$, $d_{r}=1$ and $c_r=0$
(thus, convexity testing), additional inequalities around the boundary would
have the following form: 
\begin{align*}
(q-1)\triangle \beta _{q+1}& \geq q\triangle \beta _{q},\;\; \ldots ,\;\;\triangle \beta
_{3}\geq 2\triangle \beta _{2}, \\
(q-1)\triangle \beta _{L-q+1}& \leq q\triangle \beta
_{L-q+2},\;\; \ldots ,\;\; \triangle \beta _{L-1}\leq 2\triangle \beta _{L}.
\end{align*}%
As $L\rightarrow \infty $, these additional constraints becomes less and
less important as constraints (\ref{eq:sqlderiv}) essentially capture the
whole convexity property. However, in a finite sample these constraints on
the coefficients around the boundary can be important for the power of the
test. 

\vskip 0.1in

\noindent \textbf{Example \ref{example:deriv2} (Cont.).} In (\ref{eq:example2}), for each
interval $[\tau _{j},\tau _{j+1}]$ in the partition, we have its own system
of knots, so now for this interval we denote the number of unique knots as $%
L_{j}^{\prime }+1$ and the number of base B-splines for that interval as $%
L_{j}=L_{j}^{\prime }+q$. The approximating \emph{B-spline} on that interval
is denoted as $m_{\mathcal{B};j}\left( x;L_{j}\right) =\sum_{\ell
=1}^{L_{j}}\beta _{j;\ell }p_{j;\ell ,L_{j}}\left( x;q\right) $ and the set
of the constraints on coefficients is denoted as $S_{j;q,L_{j}}$. It goes
without saying that we assume that $\tau _{j}$ and $\tau _{j+1}$ are
included into the knot system for the interval as boundary knots with the
multiplicity $q+1$. Denote the complete system of knots on the interval $%
(\tau _{j},\tau _{j+1}]$ as $\left\{ z_{j}^{k}\right\} _{k=1}^{L_{j}+q+1}$.

Define 
\begin{multline*}
S_{j; q,L_j} = \bigg\{(\beta_{j; 1}, \ldots, \beta_{j; L_j}) \; | \; \forall
\; r \in R, \quad \forall \; z_j^k, \quad k=1, \ldots, L_j+ q+1, \\
d_{r_j} \cdot m_{\mathcal{B}; j}\left( z_j^k;L_j\right) \geq 0 \bigg\}
\end{multline*}
and consider this subset of $\mathbb{R}^{L_j}$ to be embedded into $\mathbb{R%
}^{\sum_{j=0}^{J-1} L_j}$, where coefficients from other intervals are taken
into account too. Then without any additional constraints on how pieces on
different partition intervals are joined together, the constraint set in $%
\mathbb{R}^{\sum_{j=0}^{J-1} L_j} $ can be written as $\bigcap_{j=0}^{J-1}
S_{j; q,L_j}$.

If one wants to impose restrictions that the curves on the adjacent subintervals are 
joined continuously, then the constraint set is 
$$S_{cont} = \left(\bigcap_{j=0}^{J-1} S_{j;
q,L_j}\right) \bigcap \left(\bigcap_{j=0}^{J-2} \left\{ \beta_{j; L_j}=
\beta_{j+1; 1}\right\} \right),$$ whereas under the  restriction that the curves are smoothly joined, the constraint set becomes 
\begin{equation*}
S_{smooth} = S_{cont} \bigcap \left( \bigcap_{j=0}^{J-2} \left\{
\beta_{j; L_j}-\beta_{j; L_j-1}= \beta_{j+1; 2}-\beta_{j+1;
1}\right\}\right).
\end{equation*}

To give a specific example, for the null hypothesis of the smooth \emph{%
U-shape} function with switch at the known or estimated $s_{0}$, the
constraint set is 
\begin{multline*}
S_{smooth} = \bigg\{(\beta _{0;1},\ldots ,\beta _{0;L_{0}},\beta _{1;1},\ldots ,\beta
_{1;L_{1}})\;|\;\beta _{0;\ell +1}-\beta _{0;\ell }\leq 0,\;\forall \ell \leq L_{0}-1,\\ \beta _{1;\ell +1}-\beta _{1;\ell }\geq 0, \; 
\forall \leq L_{1}-1,\; \;  \beta _{0;L_{0}}=\beta _{1;1},\; \; \beta
_{0;L_{0}}-\beta _{0;L_{0}-1}=\beta _{1;2}-\beta _{1;1}\bigg\}.
\end{multline*}

For testing quasi-convexity, the set
of constraints is 
\begin{multline*}
S_{q,L}=\bigg\{(\beta _{1},\ldots ,\beta _{L})\;|\;\forall \;z^{k_{1}}\leq
z^{k_{2}}\leq z^{k_{3}},\quad k_{1},k_{2},k_{3}\in \{1,\ldots ,L+q+1\}, \\
m_{\mathcal{B}}\left( z^{k_{2}};L\right) \leq \max \left\{ m_{\mathcal{B}%
}\left( z^{k_{1}};L\right) ,m_{\mathcal{B}}\left( z^{k_{3}};L\right)
\right\} \bigg\}.
\end{multline*}%
By formulating the constraints in this way, we enforce quasi-convexity at the system of knots. Note, however, that quasi-convexity can also be tested by using the above methodology for testing U-shapes but would require estimating the switch point $s_0$. In this case we can make use of the relatively large literature on estimation of the mode or the maximum of a regression function by
nonparametric methods (see \cite{Eddy80}, \cite{Eddy82} or \cite{Muller} for the case when the function is continuous at $s_0$, and \cite{DelgadoHidalgo} 
or \cite{Hidalgo} for the case when the function is discontinuous at $s_0$). 

Finally, for testing symmetry around the interval center $s_0$, given symmetric system of knots on $[\underline{x},s_0]$ and $[s_0, \overline{x}$ and, hence, the same number $L$ of B-splines of degree $q$ on these intervals,   
\begin{multline*}
S_{q,L} = \bigg\{(\beta _{0;1},\ldots ,\beta _{0;L},\beta _{1;1},\ldots ,\beta
_{1;L})\;|\;\beta_{0;\ell}= \beta_{1; L+1-l} \quad \forall l=1, \ldots, L\bigg\}.
\end{multline*}%

\vskip 0.1in 

\noindent \textbf{Example \ref{example:rconvex} (Cont.).}
For the sake of brevity, we write the constraints for $r$-convexity only
(it would be straightforward for a reader to write the set of constraints
for $\rho$-convexity as these two notions are related): 
\begin{multline*}
S_{q,L} = \bigg\{ (\beta_1, \ldots, \beta_L) \; | \; \forall \; z^{k}  \quad k \in \{1, \ldots, L+ q+1\}, \\
r \left(m_{\mathcal{B}}'\left( z^{k};L\right)\right)^{2} + m_{\mathcal{B}}''\left( z^{k};L\right) \geq 0; \quad  
\beta_{\ell} > 0, \; \ell=1, \ldots, L \bigg\}.
\end{multline*}

\vskip 0.1in

\noindent \textbf{Example \ref{example:meanconvex} (Cont.).} Convexity in mean for any two given
mean functions $M$ and $N$ can be tested by using the set of constraints 
\begin{multline*}
S_{q,L}=\bigg\{(\beta _{1},\ldots ,\beta _{L})\;|\;\forall
\;z^{k_{1}},z^{k_{2}},\quad k_{1},k_{2}\in \{1,\ldots ,L+q+1\}, \\
m_{\mathcal{B}}\left( M\left( z^{k_{1}},z^{k_{2}}\right) ;L\right) \leq
N\left( m_{\mathcal{B}}\left( z^{k_{1}};L\right) ,m_{\mathcal{B}}\left(
z^{k_{2}};L\right) \right) , \\
\beta _{\ell }>0,\;\ell =1,\ldots ,L\bigg\}.
\end{multline*}%
This is a generic form of constraints for ${MN}$-convexity. Once we choose a
specific form of mean functions $M$ and $N$, it is possible to write these
constraints in a modified way. For illustration purposes, below we present
alternative ways to write the constraints for GA-convex and HG-convex 
functions.

For GA-convexity, we can consider 
\begin{multline*}
S_{q,L} = \bigg\{ (\beta_1, \ldots, \beta_L) \; | \; \forall \; z^{k_1} <
z^{k_2} , \quad k_1,k_2 \in \{1, \ldots, L+ q+1\}, \\
z^{k_1} m^{\prime }_{\mathcal{B}}\left(z^{k_1};L\right) \leq z^{k_2}
m^{\prime }_{\mathcal{B}}\left(z^{k_2};L\right), \quad \beta_{\ell} > 0, \;
\ell=1, \ldots, L\bigg\}.
\end{multline*}

For HG-convexity, we can consider 
\begin{multline*}
S_{q,L} = \bigg\{ (\beta_1, \ldots, \beta_L) \; | \; \forall \; z^{k_1} <
z^{k_2} , \quad k_1,k_2 \in \{1, \ldots, L+ q+1\}, \\
\frac{\left(z^{k_1}\right)^2 m^{\prime }_{\mathcal{B}}\left(z^{k_1};L\right)%
}{m^2_{\mathcal{B}}\left(z^{k_1};L\right)} \leq \frac{\left(z^{k_2}\right)^2
m^{\prime }_{\mathcal{B}}\left(z^{k_2};L\right)}{m^2_{\mathcal{B}%
}\left(z^{k_2};L\right)}, \text{ if }, \quad \beta_{\ell} > 0, \; \ell=1,
\ldots, L \bigg\}. 
\end{multline*}

\vskip 0.1in

Note that in Examples \ref{example:deriv1}-\ref{example:deriv2} the constraints on the coefficients of \emph{%
B-splines} happen to be linear\footnote{Quasi-convexity may be an exception depending on the exact approach.} and, therefore, especially easy to
implement. We want to emphasize that it is a consequence of the type of the
approximation basis that we employ. In other words, it is precisely the
properties of \emph{B-splines} that make the testing of these hypotheses
particularly easy. In Examples \ref{example:rconvex}-\ref{example:meanconvex} the constraints will in general be
non-linear in such coefficients, even though in some special cases -- such
as GA-convexity, HA-convexity, or AA-convexity (which of course results in
the usual convexity) -- the constraints will still happen to be linear.

\vskip 0.1in

\noindent \textbf{Bernstein polynomials and other sieve bases.} We now discuss our motivation to not use other sieves bases in our testing approach. As already mentioned above, a strong candidate for approximating the
regression function subject to qualitative properties is a \emph{Bernstein}
polynomial given by 
\begin{equation*}
m_{\mathcal{B}}\left( x;L\right) =\sum_{\ell =0}^{L}\beta _{\ell }\mathcal{B}%
_{\ell ,L}\left( \frac{x-\underline{x}}{\overline{x}-\underline{x}}\right) 
\text{, \ \ \ }x\in \mathcal{X}\text{,}
\end{equation*}%
where $\mathcal{B}_{\ell ,L}\left( \tilde{x}\right) =:\left( 
\begin{array}{c}
L \\ 
\ell%
\end{array}%
\right) \left( 1-\tilde{x}\right) ^{L-\ell }\tilde{x}^{\ell }$, $\tilde{x}%
\in \lbrack 0,1]$, denotes the $\ell th$ base Bernstein polynomial. What
makes it a strong candidate is the fact that we can take $\beta _{\ell
}=m\left( \underline{x}+\ell /L(\overline{x}-\underline{x})\right) $, and
hence, translate constrains on $m\left( x\right) $ into constraints on
coefficients $\{\beta _{\ell }\}_{\ell =1}^{L}$. However, our main
motivation not to use \emph{Bernstein} polynomials comes from the
observation that, contrary to \emph{B-splines }or \emph{P-splines}, \emph{%
Bernstein} polynomials are highly correlated. Indeed, using results in \cite{LeeParkYoo}, 
the eigenvalues of the matrix $\left\{ E\left( 
\mathcal{B}_{\ell _{1},L}\left( x\right) \mathcal{B}_{\ell _{2},L}\left(
x\right) \right) =a_{\ell _{1},\ell _{2}}\right\} _{\ell _{1},\ell
_{2}=0}^{L}$ are 
\begin{equation*}
\lambda _{k}=\frac{1}{2L+1}\left( 
\begin{array}{c}
2L+1 \\ 
L-k%
\end{array}%
\right) /\left( 
\begin{array}{c}
2L \\ 
L%
\end{array}%
\right) \text{,}
\end{equation*}%
which implies that $\lambda _{L}\leq \left( 2L+1\right) ^{-1}L^{1/2}2^{1-2L}$
since $\left( 
\begin{array}{c}
2L \\ 
L%
\end{array}%
\right) \geq L^{-1/2}2^{2L-1}$. In addition, it is not difficult to see that 
$a_{\ell _{1},\ell _{2}}^{2}/a_{\ell _{1},\ell _{1}}a_{\ell _{2},\ell
_{2}}\rightarrow e^{-1}$ if $\left\vert \ell _{1}-\ell _{2}\right\vert
=o\left( L^{1/2}\right) $, which yields some adverse and important technical
consequences for the test proposed in Section \ref{sec:Khmaladze}.

Finally, other sieve bases could potentially be used but, the implementation
of the test, even for case in Examples 1 and 2 would be laborious. One
popular sieve basis is the power series $1,x,\ldots ,x^{L}$ for any $L$.
However, the formulation of constraints becomes increasingly complicated
when $L$ increases even if one restricts their attention to such cases as in
Examples 1 and 2. For instance, for testing an increasing $m_{\mathcal{B}%
}\left( x,L\right) =:\sum_{\ell =0}^{L}\beta _{\ell }x^{\ell }$, one may require all coefficients in the first derivative polynomial $\sum_{\ell
=1}^{L} l \beta _{\ell }x^{\ell -1 } $ to be non-negative. These conditions
are clearly sufficient but do not become necessary even as $L \rightarrow
\infty$. This could lead to the test rejecting with probability one the null
hypothesis even if it is true, with the main reason for this being the
imposition of constraints that are not true or suitable. A better approach
would be to use the one-to-one relationship between the Bernstein
polynomials basis and the power basis $1,x,\ldots ,x^{L}$ for any $L$. A
similar comment applies if we want to employ the Legendre polynomial base --
as it can be seen from the relationship between the Legendre and Bernstein
polynomials given e.g. in Propositions 1 and
2 in \cite{Farouki}, and this, again, seems an unnecessary step.\footnote{%
The relations between the Bernstein basis and some other polynomial bases
have been addressed in the approximation literature. E.g., \cite{LiZhang} discusses not only the relation between the Bernstein basis and the
Legendre basis but also the relation between the Bernstein basis and
Chebyshev orthogonal bases.}

We shall mention nonetheless that there is no reason to believe that the
results or the methodology introduced in this paper cannot be implemented
using a power series approximation or potentially some other approximation
basis. However, this route seems more arduous than using \emph{B-splines}.
Moreover, in the context of testing shape properties we believe it is more
natural to employ local objects like \emph{B-splines} than the objects more
a global type, like most of the polynomial bases.  

\subsection{\textbf{First step in the testing methodology and regularity
conditions}}

\label{section:1step}

$\left. {}\right. $

In the remainder of the paper, we shall assume, without loss of generality, that  $\mathcal{X}=\left[ 0,1\right]$. 


We now describe the first step in our methodology of testing $H_{0}$ in (\ref{main_null}) by giving estimators of $m(\cdot )$ under the alternative and
the null hypothesis. To that end, write the \emph{B-splines} as a vector of
functions 
\begin{equation}
\boldsymbol{P}_{L}\left( x\right) =:\left( p_{1,L}\left( x;q\right) ,\ldots
,p_{L,L}\left( x;q\right) \right) ^{\prime }\text{; \ \ }\boldsymbol{P}_{k}=:%
\boldsymbol{P}_{L}\left( x_{k}\right) \text{.}  \label{p_appro}
\end{equation}%
Then, the standard series estimator of $m\left( x\right) $ is defined as the
projection of $y$ onto the space spanned by $\boldsymbol{P}_{L}\left(
x\right) $, that is%
\begin{align}
\widetilde{m}_{\mathcal{B}}\left( x_{i};L\right) & =\widetilde{b}^{\prime }%
\boldsymbol{P}_{i}\text{, }  \label{uncons} \\
\widetilde{b}& =\left( \widetilde{b}_{1},\ldots ,\widetilde{b}_{L}\right)
^{\prime }=\left( \frac{1}{n}\sum_{k=1}^{n}\boldsymbol{P}_{k}\boldsymbol{P}%
_{k}^{\prime }\right) ^{+}\frac{1}{n}\sum_{k=1}^{n}\boldsymbol{P}_{k}y_{k}%
\text{,}  \notag
\end{align}%
where $B^{+}$ denotes the Moore-Penrose inverse of the matrix $B$.

To obtain an estimator under the null hypothesis, we conduct a linear
projection subject to suitable constraints. For general null hypothesis (\ref{main_null}), we consider estimation under the constraints in (\ref{main_nullB}), where $S_{q,L}$ satisfies condition $C0$: 
\begin{equation}
\widehat{b}=\left( \widehat{b}_{1},\ldots ,\widehat{b}_{L}\right) =:\arg
\min _{\substack{ b_{1},\ldots ,b_{L}  \\ \text{s.t. }(b_{1},b_{2},\ldots
,b_{L})\in S_{q,L}}}\sum_{i=1}^{n}\left( y_{i}-\sum_{\ell =1}^{L}b_{\ell
}p_{\ell ,L}\left( x_{i};q\right) \right) ^{2}  \label{est_1}
\end{equation}%
This gives us the estimator of $m\left( \cdot \right) $ under $\left( \ref%
{main_null}\right) $: 
\begin{equation}
\widehat{m}_{\mathcal{B}}\left( x_{i};L\right) =\widehat{b}^{\prime }%
\boldsymbol{P}_{i}\text{.}  \label{m_estimate}
\end{equation}
When the estimation in (\ref{m_estimate})  involves only constraints linear in the parameters, such as in the case of the null hypothesis of an increasing function: 
\begin{equation}
	\label{conest_mon}
\widehat{b}=\left( \widehat{b}_{1},\ldots ,\widehat{b}_{L}\right) =:\arg
\min _{\substack{ b_{1},\ldots ,b_{L}  \\ \text{s.t. } b_{1}\leq b_{2}\leq
\ldots \leq b_{L}}}\sum_{i=1}^{n}\left( y_{i}-\sum_{\ell =1}^{L}b_{\ell
}p_{\ell ,L}\left( x_{i};q\right) \right) ^{2}, 
\end{equation}
then the constrained optimization problem becomes a quadratic programing problem with linear constraints. When the constraints are nonlinear, the constrained estimation may be slightly more challenging to implement but various global optimization techniques can be utilized or even some local optimization techniques as the range of plausible initial values 
of the coefficients of the B-spline approximation can be assessed from the data/model.\footnote{If the unconstrained least squares estimator is in the interior of $S_{q,L}$, then, of course, none of the constraints are binding and the constrained estimation is standard. The computational complications may only happen when some of the constraints are binding.}   
 
To get a better idea of what nonlinear constraints may look like,
take e.g. Example \ref{example:meanconvex}. GA-convexity and HA-convexity
there result in sets $S_{q,L}$ that have linear constraints only, whereas 
AG-, AH-, GG-, GH-, HG-, HH-convexities result in the quadratic
constraints of the form 
\begin{equation}
\beta ^{\prime }Q_{j}\beta \leq 0,\quad j=1,\ldots ,L+1-q.
\label{ex:quadratic_1}
\end{equation}

Indeed, take for example AH-convexity. Under enough smoothness, this
property can be written $\frac{d\,}{d\,x}\left( \frac{m^{\prime }\left(
x\right) }{m^{2}\left( x\right) }\right) \geq 0$, which is
equivalent to $m^{\prime \prime }\left( x\right) \geq \frac{2\left(
m^{\prime }\left( x\right) \right) ^{2}}{m\left( x\right) }$. Taking
into account that $m\left( x\right)$ is known to be positive, as in many economics examples, rewrite the last displayed
inequality as 
\begin{equation*}
m^{\prime \prime }\left( x\right) m\left( x\right) \geq 2\left( m^{\prime
}\left( x\right) \right) ^{2}\text{.}
\end{equation*}%
Since all $m_{\mathcal{B}}(x;L)$, $m_{\mathcal{B}}^{\prime
}(x;L)$, $m_{\mathcal{B}}^{\prime \prime }(x;L)$ are
linear at in $\beta $, then for any $x$, this constraint
is quadratic in $\beta $. When this inequality is enforced at $
L+1-q=L^{\prime }+1$ unique knots,\footnote{Recall an earlier discussion in Section \ref{section:BPsplines} of a general approach to constructing $S_{q,L}$ by enforcing the shape property of interest at the knots, which guarantees that  the shape property becomes satisfied on an increasingly dense system of points in $\mathcal{X}$ as $L \rightarrow \infty$.} it gives us a system of
quadratic inequalities (\ref{ex:quadratic_1}). It is worth noting that due to the nature of the B-splines, the matrices $Q_{j}$ 
are quite sparse. In fact, each $Q_{j}$  contains a non-zero $q\times q$ diagonal block while all the
other elements in $Q_{j}$ are zero. This means that each
inequality in (\ref{ex:quadratic_1}) will effectively contain only $q$ adjacent coefficients in $\beta $. The analogous
conclusion will imply to other properties in Example \ref{example:meanconvex}
which result in non-linear constraints.\footnote{%
There is a rather rich literature on the quadratic optimization subject to
quadratic constraints. A review can be found, e.g., in Park and Boyd (2017).}

\vskip 0.1in

We now introduce our regularity conditions.

\begin{description}
\item[\emph{Condition C1}] $\left\{ (x_{i},u_{i})^{\prime }\right\} _{i\in 
\mathbb{Z}}$ is a sequence of independent and identically distributed random
vectors, where $x_{i}$ has support on $\mathcal{X}=:\left[ 0,1\right] $ and
its probability density function, $f_{X}\left( x\right) $, is bounded away
from zero. In addition, $E[u_{i}|x_{i}]=0$, $E[u_{i}^{2}|x_{i}]=\sigma
_{u}^{2}$, and $u_{i}$ has finite $4th$ moments. \vskip 0.1in


\item[\emph{Condition C2}] $m\left( x\right) $ is three times continuously
differentiable on $\left[ 0,1\right] $. \vskip0.1in

\item[\emph{Condition C3}] As $n\rightarrow \infty $, $L$ satisfies $%
L^{2}/n+n/L^{4}=o\left( 1\right) $.
\end{description}

\vskip 0.1in

Condition $C1$ can be weakened to allow for heteroscedasticity, e.g. $E\left[
u_{i}^{2}\mid x\right] =\sigma _{u}^{2}\left( x\right) $ as it was done in
\cite{Stute97}. However, the latter condition complicates the
technical arguments and for expositional simplicity we omit a detailed
analysis of this case. However, in our empirical applications we present
examples with heteroscedastic errors and illustrate how to deal with it in
practice. Condition $C3$ bounds the rate at which $L$ increases to infinity
with $n$.

Condition $C2$ is a smoothness condition on the regression function $m\left(
x\right) $. It guarantees that the approximation error or bias 
\begin{equation}
m^{bias}\left( x\right) =:m_{\mathcal{B}}\left( x;L\right) -m\left( x\right)
\label{m_bias}
\end{equation}%
is $O(L^{-3})$, see see Agarwal and Studden's \cite{AgarwalStudden} Theorems 3.1
and 4.1 or Zhou et al. \cite{Zhou}. It can be weakened to say that
the second derivatives are H\"{o}lder continuous of degree $\eta >0$. In
that case, $C3$ had to be modified to $L^{2}/n+n/L^{2+2\eta }=o\left(
1\right) $. In case of using \emph{P-splines} we refer to Claeskens \cite{Claeskens_etal} Theorem 2.

\subsection{\textbf{THE TESTING METHODOLOGY}}

$\left. {}\right. $

Having presented our estimators of $m\left( x\right) $ using or not the
constraints induced by the null hypothesis, we now discuss the main part of
the methodology for testing shape restrictions outlined in the introduction.

We shall focus on the null hypothesis $\left( \ref{main_nullB}\right) $ in
terms of the coefficients $\{ \beta _{\ell }\} _{\ell =1}^{L}$, with the
alternative hypothesis being the negation of the null. Since the boundary of 
$S_{q,L}$ is taken to consist of a finite number of smooth surfaces, $%
S_{q,L} $ can be described by a finite number of restrictions. In this way, our testing problem translates  into the more familiar testing scenario when the null hypothesis is given
as a set of constraints on the parameters of the model. However the main and
key difference is that, in our scenario, the number of such constraints
increases with the sample size.

When testing for constraints among the parameters in a regression model, one
possibility is via the (Quasi) Likelihood Ratio principle which compares the
fits obtained by constrained and unconstrained estimates $\widehat{m}_{%
\mathcal{B}}\left( x_{i};L\right) $ and $\widetilde{m}_{\mathcal{B}}\left(
x_{i};L\right) $, respectively. That is, 
\begin{equation*}
\mathcal{LR}_{n}\left( L\right) =\frac{1}{n}\sum_{i=1}^{n}\left( y_{i}-%
\widehat{m}_{\mathcal{B}}\left( x_{i};L\right) \right) ^{2}-\frac{1}{n}%
\sum_{i=1}^{n}\left( y_{i}-\widetilde{m}_{\mathcal{B}}\left( x_{i};L\right)
\right) ^{2}\text{.}
\end{equation*}%
Although results in \cite{ChenChristensenJOE} might be used
to obtain its asymptotic distribution, this route appears difficult to
implement, since there is a high probability that both estimators $\widehat{m%
}_{\mathcal{B}}\left( x_{i};L\right) $ and $\widetilde{m}_{\mathcal{B}%
}\left( x_{i};L\right) $ coincide numerically. The latter is the case when
the constraints are not binding.

A second possibility is to employ the Wald principle, which involves
checking if the constraints in $\left( \ref{main_nullB}\right) $ hold true
for the unconstrained estimator $\widetilde{b}$ of the parameters $\beta
=:\left\{ \beta _{\ell }\right\} _{\ell =1}^{L}$ given in $\left( \ref%
{uncons}\right) $, or, in other words, if the data supports the set of
constraints 
\begin{equation*}
(\widetilde{b}_{1},\widetilde{b}_{2},\ldots ,\widetilde{b}_{L})\in S_{q,L}.
\end{equation*}%
Generally $S_{q,L}$ will involve some inequalities, and even when $L$ is
fixed, the test would be difficult to implement or even to compute the
critical values based on its asymptotic distribution. In our scenario,
however, we would have two potential further technical complications. First,
the number of constraints increases with the sample size $n$, which makes,
from both a theoretical and practical point of view, this route very
arduous, if at all feasible. Second, when some constraints describing $%
S_{q,L}$ are binding, we are then dealing with estimation at the boundary
which implies that the asymptotic distribution cannot be Gaussian.

A third way to test for the null hypothesis $\left( \ref{main_nullB}\right) $
is to implement the Lagrange Multiplier,\emph{\ LM}, test -- that is to
check if the residuals%
\begin{equation}
\widehat{u}_{i}=y_{i}-\widehat{m}_{\mathcal{B}}\left( x_{i};L\right) \text{,}%
\quad i=1,\ldots ,n\text{.}  \label{res_con}
\end{equation}%
and $x_{i}$ satisfy the orthogonality condition imposed by Condition $C1$.
That is, we might base our test on whether or not the set of moment
conditions 
\begin{equation}
\mathcal{K}_{n}\left( x\right) =\frac{1}{n}\sum_{i=1}^{n}\mathcal{I}%
_{i}\left( x\right) \widehat{u}_{i}\text{, \ \ \ \ }x\in \left[ 0,1\right]
\label{T_n}
\end{equation}%
are significantly different from zero, where $\mathcal{I}$ is the indicator function 
and we abbreviate $\mathcal{I}\left( x_{i}<x\right) $ as $\mathcal{I}%
_{i}\left( x\right) $. This approach was described and examined in \cite{Stute97} or \cite{Andrews97} with a more
econometric emphasis. Tests based on $\mathcal{K}_{n}\left( x\right) $ are
known as testing using partial sum empirical processes. Recall that in a
standard regression model the \emph{LM} test is based on the first order
conditions%
\begin{equation*}
\mathcal{LM}_{n}\left( L\right) =\frac{1}{n}\sum_{i=1}^{n}p_{\ell ,L}\left(
x_{i};q\right) \widehat{u}_{i}\text{,}
\end{equation*}%
which has the interpretation of whether the residuals and regressors, $%
p_{\ell ,L}\left( x_{i};q\right) $, satisfies the orthogonality moment
condition induced by Condition $C1$.

However to motivate the reasons to employ a transformation of $\mathcal{K}%
_{n}\left( x\right) $, given in $\left( \ref{k_1}\right) $ or $\left( \ref%
{k_2}\right) $ below, as the basis for our test statistic, it is worth
examining the structure of $\mathcal{K}_{n}\left( x\right) $ given in $%
\left( \ref{T_n}\right) $. For that purpose, we observe that%
\begin{equation}
\mathcal{K}_{n}\left( x\right) =\frac{1}{n}\sum_{i=1}^{n}\mathcal{I}%
_{i}\left( x\right) u_{i}-\sum_{\ell =1}^{L}\left( \widehat{b}_{\ell }-\beta
_{\ell }\right) \mathcal{P}_{n,\ell }\left( x;q\right) +\frac{1}{n}%
\sum_{i=1}^{n}\mathcal{I}_{i}\left( x\right) m^{bias}\left( x_{i}\right) 
\text{,}  \label{T_n1}
\end{equation}%
where $m^{bias}\left( x_{i}\right) $ was given in $\left( \ref{m_bias}%
\right) $ and 
\begin{equation}
\mathcal{P}_{n,\ell }\left( x;q\right) =:\frac{1}{n}\sum_{i=1}^{n}\mathcal{I}%
_{i}\left( x\right) p_{\ell ,L}\left( x_{i};q\right) \text{.}  \label{p_L}
\end{equation}%
Now, the third term on the right of $\left( \ref{T_n1}\right) $ is $O\left(
L^{-3}\right) $ by Agarwal and Studden's \cite{AgarwalStudden} Theorems 3.1
and 4.1, and then Condition $C1$. On the other hand, standard arguments and
Condition $C1$ imply that 
\begin{equation*}
n^{1/2}\mathcal{U}_{n}\left( x\right) =\frac{1}{n^{1/2}}\sum_{i=1}^{n}%
\mathcal{I}_{i}\left( x\right) u_{i}\overset{weakly}{\Rightarrow }\mathcal{U}%
\left( x\right) =:\sigma _{u}\mathcal{B}\left( F_{X}\left( x\right) \right) 
\text{,}
\end{equation*}%
where $\mathcal{B}\left( z\right) $ denotes the standard Brownian motion and 
$F_{X}\left( x\right) $ the distribution function of $x_{i}$.

Next, we discuss the contribution due to $\sum_{\ell =1}^{L}\left( \widehat{b%
}_{\ell }-\beta _{\ell }\right) \mathcal{P}_{n,\ell }\left( x;q\right) $. In
standard lack-of-fit testing problems with $L$ finite, when $\mathcal{K}%
_{n}\left( x\right) =O_{p}\left( n^{-1/2}\right) $, its contribution is
nonnegligible as first showed by \cite{Durbin73} and later by
\cite{Stute97} in a regression model context. However the proof
of Theorem \ref{M_n} suggests that $\sum_{\ell =1}^{L}\left( \widehat{b}%
_{\ell }-\beta _{\ell }\right) \mathcal{P}_{n,\ell }\left( x;q\right)
=O_{p}\left( \left( L/n\right) ^{1/2}\right) $, so that as $L$ increases
with the sample size, it yields that the normalization factor for $\mathcal{K%
}_{n}\left( x\right) $ is of order $n^{\alpha }$ for some $\alpha <1/2$.

Thus the previous arguments suggest that under our conditions, we would have 
\begin{equation*}
\left( \frac{n}{L}\right) ^{1/2}\mathcal{K}_{n}\left( x\right) =:-\left( 
\frac{n}{L}\right) ^{1/2}\sum_{\ell =1}^{L}\left( \widehat{b}_{\ell }-\beta
_{\ell }\right) \mathcal{P}_{n,\ell }\left( x;q\right) \left( 1+o_{p}\left(
1\right) \right) \text{.}
\end{equation*}%
Results by Newey \cite{Newey97}, Lee and Robinson \cite{LeeRobinson}
or Chen and Christensen \cite{ChenChristensenJOE} in a more general context,
might suggest that the left side of the last displayed expression might
converge to a Gaussian process when $\beta $ is in the interior of the set $%
S_{q,L}$. However, when $\beta $ is at the boundary of $S_{q,L}$, then the
asymptotic distribution is not Gaussian, and so to obtain the asymptotic
distribution of $\left( n/L\right) ^{1/2}\sum_{\ell =1}^{L}\left( \widehat{b}%
_{\ell }-\beta _{\ell }\right) \mathcal{P}_{n,\ell }\left( x;q\right) $ for
inference purposes appears quite difficult, if at all possible.

So, the purpose of the next section is to examine a transformation of $%
\mathcal{K}_{n}\left( x\right) $ such that its statistical behaviour will be
free from $\sum_{\ell =1}^{L}\left( \widehat{b}_{\ell }-\beta _{\ell
}\right) \mathcal{P}_{n,\ell }\left( x;q\right) $. The consequence of the
transformation would then be twofold. First, we would obtain that the
transformation of $n^{1/2}\mathcal{K}_{n}\left( x\right) $ is $O_{p}\left(
1\right) $, which leads to better statistical properties of the test, and
secondly and more importantly, the test will be pivotal in the sense that $%
\sigma _{u}^{2}$ becomes the only unknown (although easy to estimate) of its
asymptotic distribution. One consequence of our results is that the
asymptotic distribution becomes independent of the null hypothesis under
consideration.

\section{\textbf{KHMALADZE'S TRANSFORMATION}}

\label{sec:Khmaladze}

This section examines a transformation of $\mathcal{K}_{n}\left( x\right) $
whose asymptotic distribution is free from the statistical behaviour of $%
\left\{ \widehat{b}_{\ell }\right\} _{\ell =1}^{L}$. To that end, we propose
a \textquotedblleft \emph{martingale}\textquotedblright\ transformation
based on ideas by \cite{Khmaladze81}, see also \cite{BrownDurbinEvans} for earlier work. The (linear) transformation, denoted $%
\mathcal{T}$, should satisfy that 
\begin{eqnarray}
\left( \emph{i}\right) \text{ }n^{1/2}\left( \mathcal{TU}_{n}\right) \left(
x\right) \overset{weakly}{\Rightarrow }\mathcal{U}\left( x\right) &=&:\sigma
_{u}\mathcal{B}\left( F_{X}\left( x\right) \right)  \notag \\
\left( \emph{ii}\right) \text{ \ \ \ \ \ \ \ \ \ \ \ \ \ \ \ \ }%
n^{1/2}\left( \mathcal{TP}_{L}\right) \left( x\right) &=&0
\label{properties} \\
\left( \emph{iii}\right) \text{ \ \ \ \ \ \ \ \ \ \ \ \ }n^{1/2}\left( 
\mathcal{T}m^{bias}\right) \left( x\right) &=&o\left( 1\right) \text{,} 
\notag
\end{eqnarray}%
where%
\begin{equation*}
\mathcal{P}_{L}\left( x\right) =:E\left( \mathcal{P}_{n}\left( x;L\right)
\right) =\int_{0}^{x}\boldsymbol{P}_{L}\left( z\right) f_{X}\left( z\right)
dz\text{ \ \ }\left( \mathcal{P}_{n}\left( x;L\right) =:\left\{ \mathcal{P}%
_{n,\ell }\left( x;q\right) \right\} _{\ell =1}^{L}\right) \text{,}
\end{equation*}%
with $\boldsymbol{P}_{L}\left( x\right) $ and $\mathcal{P}_{n,\ell }\left(
x;q\right) $ given respectively in $\left( \ref{p_appro}\right) $ and $%
\left( \ref{p_L}\right) $.

\subsection{\textbf{ALL THE CONSTRAINTS ON $\protect\beta_{\ell}$ ARE LINEAR}%
}

\label{sec:Khmlinear}

$\left. {}\right. $

We first present the form of Khmaladze's transformation when all the
constraints characterizing $S_{q,L}$ are linear and, thus, the surfaces forming the boundary of $%
S_{q,L}$ are hyperplanes.

For any $x<1$, denote 
\begin{equation}
A_{L}\left( x\right) =\int_{x}^{1}\left( \boldsymbol{P}_{L}\left( z\right) 
\boldsymbol{P}_{L}^{\prime }\left( z\right) \right) f_{X}\left( z\right) dz%
\text{.}  \label{L_x}
\end{equation}%
We then define the transformation $\mathcal{T}$ as%
\begin{equation*}
\left( \mathcal{TW}\right) \left( x\right) =\mathcal{W}\left( x\right)
-\int_{0}^{x}\boldsymbol{P}_{L}^{\prime }\left( z\right) A_{L}^{+}\left(
z\right) \left( \int_{z}^{1}\boldsymbol{P}_{L}\left( w\right) \mathcal{W}%
\left( dw\right) \right) f_{X}\left( z\right) dz\text{, \ \ }x<1\text{.}
\end{equation*}%
It is easy to see that the transformation $\mathcal{T}$ satisfies condition $%
\left( \emph{ii}\right) $ in $\left( \ref{properties}\right) $, so that the
main concern will be to show that $\left( \emph{i}\right) $ and $\left( 
\emph{iii}\right) $ hold true. It is important to note that the constraints that proved to be binding in the estimation under the null hypothesis have to be incorporated when conducting the Khmaladze's transformation and, thus, $\boldsymbol{P}_{L}$ in $\left( \ref{p_appro}\right) $ has to be redefined. To give a specific example, consider the monotonicity testing and, thus, the constrained estimation as in (\ref{conest_mon}). Suppose the estimation results in one binding constraint  $\widehat{b}_{\ell _{0}}=\widehat{b}_{\ell _{0}+1} $.  Then, we redefine $P_k$ in $\left( \ref{p_appro}\right)$ as  
\begin{eqnarray*}
\boldsymbol{P}_{k} &=&:\boldsymbol{P}_{L}\left( x_{k}\right), \quad \text{where} \\
\boldsymbol{P}_{L}\left( x\right) &=&:\left( p_{1,L}\left( x;q\right)
,\ldots ,p_{\ell _{0},L}\left( x;q\right) +p_{\ell _{0}+1,L}\left(
x;q\right) ,p_{\ell _{0}+2,L}\left( x;q\right) ,...p_{L,L}\left( x;q\right)
\right) ^{\prime }\text{,}
\end{eqnarray*}%
so that we effectively incorporate the binding constraint. The analogous methodology applies with several binding constraints. 

In what follows then, we shall make no distinction among various cases of binding constraints to
simplify the exposition of the arguments. However, we shall emphasize that the use of the
\textquotedblleft correct\textquotedblright $P_{L}\left( x\right) $ is crucial for the power of the test. Using $P_{L}\left( x\right) $ 
as given in $\left( \ref{p_appro}\right)$ without taking into account the binding the constraints will make the test to have only trivial
power (see the discussion in Section \ref{sec:power}).

The transformation $\mathcal{T}$ has only a theoretical value and
as such, from an inferential point of view, we need to replace it by its
sample analogue, which we shall denote by $\mathcal{T}_{n}$. To that end,
for any $x\in \mathcal{X}$, define 
\begin{eqnarray}
F_{n}\left( x\right) &=&\frac{1}{n}\sum_{k=1}^{n}\mathcal{I}_{k}\left(
x\right)  \label{notation} \\
\text{\ }C_{n}\left( x\right) &=&\frac{1}{n}\sum_{k=1}^{n}\boldsymbol{P}%
_{k}u_{k}\mathcal{J}_{k}\left( x\right) ;~\ \ \ A_{n}\left( x\right) =\frac{1%
}{n}\sum_{k=1}^{n}\boldsymbol{P}_{k}\boldsymbol{P}_{k}^{\prime }\mathcal{J}%
_{k}\left( x\right)  \notag
\end{eqnarray}%
where $\mathcal{I}\left( x\leq x_{k}\right) =:\mathcal{J}_{k}\left( x\right)
=1-\mathcal{I}_{k}\left( x\right) $.

In what follows,  we shall abbreviate 
\begin{equation}
C_{n,i}=:C_{n}\left( \widetilde{x}_{i}\right) ;\text{ \ }A_{n,i}=:A_{n}%
\left( \widetilde{x}_{i}\right) \text{,}  \label{A_n_i}
\end{equation}%
where $\widetilde{x}_{i}=x_{i}$ if $x_{i}+n^{-\varsigma }<z^{k\left(
x_{i}\right) }$ and $=z^{k\left( x_{i}\right) }$ otherwise, with $z^{k\left(
x\right) }$ denoting the closest knot $z^{k}$, $k=1,\ldots ,L$, bigger than $%
x$ and $1/2<\varsigma <1$. The motivation to make this \textquotedblleft
trimming\textquotedblright\ is because when $x_{i}$ is too close to $%
z^{k\left( x_{i}\right) }$, the \emph{B-spline} is close but not equal to
zero, which induces some technical complications in the proof of our main
results. However, in small samples this\emph{\ }\textquotedblleft
trimming\textquotedblright\ does not appear to be needed, becoming a purely
technical argument.

We define the sample analogue of $\left( \mathcal{TW}\right) \left( x\right) 
$ as 
\begin{equation}
\left( \mathcal{T}_{n}\mathcal{W}\right) \left( x\right) =\mathcal{W}\left(
x\right) -\frac{1}{n}\sum_{i=1}^{n}\boldsymbol{P}_{i}^{\prime
}A_{n,i}^{+}\int_{\widetilde{x}_{i}}^{1}\boldsymbol{P}_{L}\left( w\right) 
\mathcal{W}\left( dw\right) \mathcal{I}_{i}\left( x\right) \text{.}
\label{Khma_n}
\end{equation}

The transformation in $\left( \ref{Khma_n}\right) $ has a rather simple
motivation. Suppose that we have ordered the observations according to $%
x_{i} $, that is $x_{i-1}\leq x_{i}$, $i=2,...,n$, which would not affect
the statistical behaviour of $\mathcal{K}_{n}\left( x\right) $. The latter
follows by the well known argument that%
\begin{equation}
\sum_{i=1}^{n}g\left( x_{i}\right) =\sum_{i=1}^{n}g\left( x_{\left( i\right)
}\right) \text{,}  \label{g_i}
\end{equation}%
where $x_{\left( i\right) }$ is the $i-th$ order statistic of $\left\{
x_{i}\right\} _{i=1}^{n}$. So, we have that 
\begin{equation}
\mathcal{K}_{n}\left( x_{j}\right) =\frac{1}{n}\sum_{i=1}^{n}\widehat{u}_{i}%
\mathcal{I}_{i}\left( x_{j}\right) =:\frac{1}{n}\sum_{i=1}^{j}\widehat{u}_{i}%
\text{.}  \label{cusum}
\end{equation}%
Now, $\widehat{u}_{i}=u_{i}-\boldsymbol{P}_{i}^{\prime }A_{n}^{+}\left(
0\right) C_{n}\left( 0\right) $, where 
\begin{eqnarray*}
C_{n}\left( 0\right) &=&\frac{1}{n}\sum_{k=1}^{n}\boldsymbol{P}_{k}u_{k}%
\mathcal{I}_{k}\left( x_{i}\right) +\frac{1}{n}\sum_{k=1}^{n}\boldsymbol{P}%
_{k}u_{k}\mathcal{J}_{k}\left( x_{i}\right) \\
&=&\frac{1}{n}\sum_{k=1}^{i-1}\boldsymbol{P}_{k}u_{k}+\frac{1}{n}%
\sum_{k=i}^{n}\boldsymbol{P}_{k}u_{k}\text{.}
\end{eqnarray*}%
So, if instead of $C_{n}\left( 0\right) $ and $A_{n}\left( 0\right) $, we
employed $C_{n,i}=n^{-1}\sum_{k=i}^{n}\boldsymbol{P}_{k}u_{k}$ and $%
A_{n,i}=n^{-1}\sum_{k=i}^{n}\boldsymbol{P}_{k}\boldsymbol{P}_{k}^{\prime }$,
we would replaced $\widehat{u}_{i}$ by 
\begin{equation}
v_{i}=u_{i}-\boldsymbol{P}_{i}^{\prime }A_{n,i}^{+}C_{n,i}  \label{u_rec}
\end{equation}%
in $\left( \ref{cusum}\right) $, so that it has a martingale difference
structure as $E\left[ v_{i}\mid past\right] =0$, in comparison with $%
\widehat{u}_{i}$ where $E\left[ \widehat{u}_{i}\mid past\right] \neq 0$.
This is the idea behind the so-called (recursive) \emph{Cusum} statistic
first examined in \cite{BrownDurbinEvans} and
developed and examined in length in \cite{Khmaladze81}. Observe that $\left( %
\ref{u_rec}\right) $ becomes the \textquotedblleft prediction
error\textquotedblright of $u_{i}$ when we use the \textquotedblleft
last\textquotedblright\ $j=i,\ldots,n$ observations.


Thus, the preceding argument yields the Khmaladze's transformation 
\begin{equation}
\left( \mathcal{T}_{n}\mathcal{K}_{n}\right) \left( x\right) =\mathcal{M}%
_{n}\left( x\right) =:\frac{1}{n^{1/2}}\sum_{i=1}^{n}v_{i}\mathcal{I}%
_{i}\left( x\right) \text{.}  \label{k_1}
\end{equation}%
Observe that, using $\left( \ref{g_i}\right) $, we could write $\mathcal{M}%
_{n}\left( x\right) $ as 
\begin{equation*}
\mathcal{M}_{n}\left( x\right) =:\frac{1}{n^{1/2}}\sum_{i=1}^{n}v_{\left(
i\right) }\mathcal{I}_{\left( i\right) }\left( x\right) \text{.}
\end{equation*}%
Now, Lemma \ref{bias} in Appendix B implies that the condition $\left( \emph{ii}\right) $ in $%
\left( \ref{properties}\right) $ holds true when 
\begin{equation*}
\mathcal{W}_{n}\left( x\right) =\frac{1}{n}\sum_{k=1}^{n}\boldsymbol{P}_{k}%
\mathcal{I}_{k}\left( x\right) \text{,}
\end{equation*}%
so the technical problem is to show that (asymptotically) conditions $\left( 
\emph{i}\right) $ and $\left( \emph{iii}\right) $ in $\left( \ref{properties}%
\right) $ also hold true. That is, to show that 
\begin{eqnarray*}
\left( 1\right) &&\mathcal{M}_{n}\left( x\right) \overset{weakly}{%
\Rightarrow }\mathcal{U}\left( x\right) \text{ \ \ }x\in \left[ 0,1\right] \\
\left( 2\right) &&\left( \mathcal{T}_{n}m^{bias}\right) \left( x\right)
=o_{p}\left( n^{-1/2}\right) \text{.}
\end{eqnarray*}

Finally, it is worth mentioning that in $\left( \ref{Khma_n}\right) $ we
might have employed $\mathcal{J}_{k}\left( x\right) =\mathcal{I}\left(
x<x_{k}\right) $ instead of our definition $\mathcal{J}_{k}\left( x\right) =%
\mathcal{I}\left( x\leq x_{k}\right) $. However because by definition of 
\emph{B-splines}, the matrix $A_{n,i}$, and hence $A_{L}\left( x_{i}\right) $%
, might be singular, if we employed $\mathcal{J}_{k}\left( x\right) =%
\mathcal{I}\left( x<x_{k}\right) $, then it would not be guaranteed that%
\begin{equation*}
\boldsymbol{P}_{i}^{\prime }-\boldsymbol{P}_{i}^{\prime }A_{n,i}^{+}A_{n,i}=0%
\text{.}
\end{equation*}%
On the other hand, Theorem 12.3.4 in \cite{Harville} yields
that the last displayed equation holds true when $\mathcal{J}_{k}\left(
x\right) =\mathcal{I}\left( x\leq x_{k}\right) $. Now, $\left( 1\right) $
will be shown in the next theorem, whereas $\left( 2\right) $ is shown in
Theorem \ref{M_nest}.

\begin{theorem}
\label{M_n}Under Conditions $C1-C3$, we have that 
\begin{equation*}
\mathcal{M}_{n}\left( x\right) \overset{weakly}{\Rightarrow }\mathcal{U}%
\left( x\right) ;\text{ \ \ }x\in \left[ 0,1\right] \text{.}
\end{equation*}
\end{theorem}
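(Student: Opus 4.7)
\textbf{Proof plan for Theorem \ref{M_n}.} The strategy is to view $\mathcal{M}_n$ as a (conditionally on the covariates) normalized sum of mean-zero, pairwise-uncorrelated random variables, compute its covariance kernel, and then invoke a Lindeberg-Feller CLT together with tightness to identify the limit as $\sigma_u\mathcal{B}(F_X)$. I would first reorder the observations so that $x_{(1)}\le\cdots\le x_{(n)}$; by \eqref{g_i} the value of $\mathcal{M}_n$ is unchanged and
\[
\mathcal{M}_n(x)=\frac{1}{n^{1/2}}\sum_{i=1}^n v_{(i)}\mathcal{I}_{(i)}(x),\qquad v_{(i)}=u_{(i)}-\boldsymbol{P}_{(i)}'A_{n,(i)}^+ C_{n,(i)},
\]
where $v_{(i)}$ is a linear combination, with $\sigma(x_1,\dots,x_n)$-measurable coefficients, of the ``future'' errors $u_{(i)},\dots,u_{(n)}$. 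Since Condition $C1$ makes these conditionally independent with mean zero given the $x$'s, $E[v_{(i)}\mid x\text{'s}]=0$ at once, which is the essence of the ``martingale difference'' structure flagged after \eqref{u_rec}.

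The main calculation is the conditional covariance of the $v_{(i)}$'s. Because the paper defines $\mathcal{J}_k(x)=\mathcal{I}(x\le x_k)$, Theorem 12.3.4 in \cite{Harville} yields $A_{n,(i)}A_{n,(i)}^+\boldsymbol{P}_{(i)}=\boldsymbol{P}_{(i)}$; combined with the Moore-Penrose identity $A_{n,(i)}^+A_{n,(i)}A_{n,(i)}^+=A_{n,(i)}^+$, a direct computation gives
\[
E[v_{(i)}^2\mid x\text{'s}]=\sigma_u^2(1-d_i),\qquad d_i=n^{-1}\boldsymbol{P}_{(i)}'A_{n,(i)}^+\boldsymbol{P}_{(i)},
\]
and, for $i<j$, the off-diagonal sum telescopes (using $A_{n,(j)}=A_{n,(i)}-n^{-1}\!\sum_{i\le k<j}\boldsymbol{P}_{(k)}\boldsymbol{P}_{(k)}'$ and the same Harville identity at level $j$) to $E[v_{(i)}v_{(j)}\mid x\text{'s}]=0$. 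Summing,
\[
E\bigl[\mathcal{M}_n(x)\mathcal{M}_n(y)\bigm|x\text{'s}\bigr]=\sigma_u^2 F_n(x\wedge y)-\sigma_u^2\,n^{-1}\sum_{i=1}^n d_i\,\mathcal{I}_{(i)}(x\wedge y).
\]
Appealing to B-spline Gram-matrix bounds from Appendix B (control of the extreme eigenvalues of $A_L$ and $\|\boldsymbol{P}_L(x)\|^2=O(1)$), together with the trimming $\tilde x_i\ge x_i$ that keeps $A_{n,i}^+$ bounded, one obtains $\max_i d_i=O_p(L/n)$, so that the subtracted term is $O_p(L/n)=o_p(1)$ under Condition $C3$. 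Hence the conditional covariance converges uniformly to $\sigma_u^2 F_X(x\wedge y)$, which matches the covariance of $\sigma_u\mathcal{B}(F_X(\cdot))$.

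Finite-dimensional convergence then follows from the Cramér-Wold device: any linear combination $\sum_j\lambda_j\mathcal{M}_n(x_j)=n^{-1/2}\sum_i v_{(i)}c_i$ with $c_i=\sum_j\lambda_j\mathcal{I}_{(i)}(x_j)$ is, conditional on the $x$'s, a sum of uncorrelated random variables whose variances converge to the correct Gaussian value; the conditional Lindeberg condition follows from $E[u_i^4]<\infty$ in $C1$ and the uniform bound $E[v_{(i)}^4\mid x\text{'s}]=O(1)$ (which itself uses $d_i\le 1$ and the $\|\boldsymbol{P}_L\|$ bound). Tightness in $D[0,1]$ I would establish via Billingsley's criterion using the fourth-moment increment estimate $E[(\mathcal{M}_n(y)-\mathcal{M}_n(x))^4\mid x\text{'s}]\le C(F_X(y)-F_X(x))^2+o(1)$, obtained by iterating the covariance computation.

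The main technical obstacle is the uniform control of the Moore-Penrose inverse $A_{n,i}^+$ for $i$ near $n$: when $x_i$ is close to the right endpoint, only a handful of B-splines are active and $A_{n,i}$ is genuinely rank deficient. It is precisely the trimming at rate $n^{-\varsigma}$ to the next knot, combined with the choice $\mathcal{J}_k(x)=\mathcal{I}(x\le x_k)$ (which via Harville's identity keeps $\boldsymbol{P}_{(i)}$ in the column space of $A_{n,(i)}$), that makes $d_i$ and the off-diagonal cancellations behave well uniformly in $i$. Quantifying this uniformly, and showing that the resulting error bounds are compatible with the bandwidth restriction $L^2/n\to 0$ in Condition $C3$, is where the bulk of the appendix work goes.
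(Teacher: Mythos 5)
Your plan follows the same skeleton as the paper's proof: condition on the covariates, compute the conditional covariance of the $v_i$'s using the Moore--Penrose identity from Harville (which is exactly why $\mathcal{J}_k(x)=\mathcal{I}(x\leq x_k)$ is used), show the diagonal correction $n^{-1}\boldsymbol{P}_i^{\prime}A_{n,i}^{+}\boldsymbol{P}_i$ is negligible, prove finite-dimensional convergence by Cram\'{e}r--Wold plus a Lindeberg condition verified with fourth moments, and obtain tightness from a fourth-moment increment bound in the spirit of Billingsley. That is precisely parts $(\mathbf{a})$ and $(\mathbf{b})$ of the paper's argument, supported by Lemmas \ref{Nor_P_i}, \ref{Est_A_i}, \ref{Est_A_uni} and \ref{tight}.

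Two of your intermediate claims, however, are false as stated and are exactly where the appendix work is needed. First, the off-diagonal conditional covariances do \emph{not} telescope to exactly zero: because $A_{n,i}$ and $C_{n,i}$ are evaluated at the trimmed point $\widetilde{x}_i$ rather than at $x_i$, the identity $E\left( v_iv_j\mid X\right) =0$ fails whenever both $x_i$ and $x_j$ fall in the same trimming window $\left( z^{k\left( x_i\right) }-n^{-\varsigma },z^{k\left( x_i\right) }\right] $; the paper obtains there a nonzero term $\frac{\sigma_u^2}{n}\boldsymbol{P}_i^{\prime}A_n^{+}\left( z^{k\left( x_i\right) }\right) \boldsymbol{P}_j$ and must show its aggregate contribution is $O\left( L/n^{2\varsigma }\right) =o\left( 1\right) $, which is where the requirement $\varsigma >1/2$ enters. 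Second, your bound $\max_i d_i=O_p\left( L/n\right) $, justified by ``control of the extreme eigenvalues of $A_L$'' and $\left\Vert \boldsymbol{P}_L\left( x\right) \right\Vert ^{2}=O\left( 1\right) $, does not hold: for $x$ just below a knot $z^{k}$ the unweighted matrix $A_L\left( x\right) $ has an eigenvalue of order $\left( z^{k}-x\right) ^{2q+1}$ in the direction of the spline whose support ends at $z^{k}$, so $d_i$ can be as large as order $n^{\varsigma -1}$ even after trimming (this degeneracy occurs below \emph{every} knot, not only near the right endpoint). The paper copes with this via the weighting matrix $D_L\left( x\right) $ in $\left( \ref{q_11}\right) $ and Lemmas \ref{Nor_P_i} and \ref{Est_A_uni}, and controls only the \emph{average} of $d_i$, which is $O_p\left( L/n\right) $ and suffices both for the covariance limit and for the Lindeberg bound $E\left\Vert \boldsymbol{P}_i^{\prime}A_{n,i}^{+}C_{n,i}\right\Vert ^{4}=O\left( L^{4}/n^{2}\right) $. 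With these two repairs your outline matches the published proof.
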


Unfortunately, we do not observe $u_{i}$, so that to implement the
transformation we replace $v_{i}$ by $\widehat{v}_{i}$, where $\widehat{v}%
_{i}$ is defined as $v_{i}$ in $\left( \ref{u_rec}\right) $ but where we
replace $u_{i}$ by $\widehat{u}_{i}$ as defined in $\left( \ref{res_con}%
\right) $, yielding the statistic%
\begin{equation}
\widetilde{\mathcal{M}}_{n}\left( x\right) =:\frac{1}{n^{1/2}}\sum_{i=1}^{n}%
\widehat{v}_{i}\mathcal{I}_{i}\left( x\right) \text{.}  \label{k_2}
\end{equation}

\begin{theorem}
\label{M_nest} Assuming that $H_{0}$ holds true, under Conditions $C1-C3$,
we have that 
\begin{equation*}
\widetilde{\mathcal{M}}_{n}\left( x\right) \overset{weakly}{\Rightarrow }%
\mathcal{U}\left( x\right) \text{.}
\end{equation*}
\end{theorem}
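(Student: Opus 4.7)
The plan is to use Theorem \ref{M_n} as a starting point and reduce the proof to showing that the feasible statistic $\widetilde{\mathcal{M}}_n(x)$ differs from $\mathcal{M}_n(x)$ by a $o_p(1)$ term uniformly in $x\in[0,1]$. Given Theorem \ref{M_n}, Slutsky's theorem and the continuous mapping theorem then deliver the weak convergence.

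First I would carry out an algebraic decomposition of $\widehat{v}_i-v_i$. Under $H_0$ and using $m(x_k)=\boldsymbol{P}_k'\beta-m^{bias}(x_k)$, write $\widehat{u}_k=u_k-\boldsymbol{P}_k'(\widehat{b}-\beta)-m^{bias}(x_k)$. Plugging into $\widehat{C}_{n,i}=n^{-1}\sum_{k=i}^n \boldsymbol{P}_k\widehat{u}_k$ yields
\begin{equation*}
\widehat{C}_{n,i}-C_{n,i}=-A_{n,i}(\widehat{b}-\beta)-R_{n,i},\qquad R_{n,i}=\frac{1}{n}\sum_{k=i}^{n}\boldsymbol{P}_k\, m^{bias}(x_k).
\end{equation*}
Substituting into $\widehat{v}_i=\widehat{u}_i-\boldsymbol{P}_i'A_{n,i}^+\widehat{C}_{n,i}$ and using the crucial identity $\boldsymbol{P}_i'A_{n,i}^+A_{n,i}=\boldsymbol{P}_i'$ (which holds by Theorem 12.3.4 in Harville, given the choice $\mathcal{J}_k(x)=\mathcal{I}(x\leq x_k)$ and the trimming embedded in $\widetilde{x}_i$, which guarantees $\boldsymbol{P}_i$ lies in the column space of $A_{n,i}$), the $\widehat{b}-\beta$ contributions cancel exactly. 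This gives
\begin{equation*}
\widehat{v}_i-v_i=-m^{bias}(x_i)+\boldsymbol{P}_i'A_{n,i}^{+}R_{n,i}.
\end{equation*}
Note that this cancellation is simply the finite-sample counterpart of property $(\emph{ii})$ in \eqref{properties}, so the only residual contribution is a bias term, i.e. property $(\emph{iii})$.

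The remaining task is to show
\begin{equation*}
\sup_{x\in[0,1]}\left|\frac{1}{n^{1/2}}\sum_{i=1}^{n}\bigl(-m^{bias}(x_i)+\boldsymbol{P}_i'A_{n,i}^{+}R_{n,i}\bigr)\mathcal{I}_i(x)\right|=o_p(1).
\end{equation*}
The first piece is bounded in absolute value by $n^{1/2}\sup_{x}|m^{bias}(x)|=O(n^{1/2}L^{-3})$, which is $o(1)$ because Condition $C3$ implies $n^{1/2}/L^3\to 0$ (as $n/L^4=o(1)$ is even stronger than $n/L^6=o(1)$). For the second piece, I would exploit the local support of the B-splines: at any $x$ only $q+1$ base B-splines are nonzero, so $\|\boldsymbol{P}_i\|=O(1)$, while $\|R_{n,i}\|=O(L^{-3})$ term-by-term with only $O(L^{-1})$ nonzero mass in each coordinate after accounting for density of $x_k$, giving $\|R_{n,i}\|=O(L^{-7/2})$ in a suitable norm. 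Combined with uniform spectral bounds $\|A_{n,i}^{+}\|=O_p(L)$ (away from the trimming boundary) provided by the B-spline partial-sum Gram matrix lemmas in Appendix B, one obtains $|\boldsymbol{P}_i'A_{n,i}^{+}R_{n,i}|=O_p(L^{-5/2})$ uniformly, and summing over $i$ yields $O_p(n^{1/2}L^{-5/2})=o_p(1)$ under $C3$.

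The main obstacle will be step three: obtaining sharp, uniform-in-$i$ spectral bounds on $A_{n,i}^{+}$. Near the upper endpoint, $A_{n,i}$ becomes ill-conditioned (and eventually rank-deficient), which is exactly why the trimming via $\widetilde{x}_i$ and the use of the Moore–Penrose inverse are essential. I would rely on the B-spline Gram-matrix lemmas in Appendix B (analogous to those used for Theorem \ref{M_n}), together with the local-support structure that makes $\boldsymbol{P}_k\boldsymbol{P}_k'$ banded, to deduce that, on the trimmed range, the smallest nonzero eigenvalue of $A_{n,i}$ is of order $L^{-1}$. Outside the trimmed range, the number of indices is $o(n)$, so the contribution there is negligible by a direct magnitude bound. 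Combining these uniform estimates concludes the proof.
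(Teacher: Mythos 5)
Your overall architecture is the same as the paper's: you reduce the problem to $\sup_{x}\vert\widetilde{\mathcal{M}}_{n}(x)-\mathcal{M}_{n}(x)\vert=o_{p}(1)$, observe that the $(\widehat{b}-\beta)$ contribution cancels exactly through the projection identity $\boldsymbol{P}_{i}^{\prime}A_{n,i}^{+}A_{n,i}=\boldsymbol{P}_{i}^{\prime}$ (this is precisely Lemma \ref{bias}), and you treat the raw bias piece exactly as the paper does, via $\sup_{x}\vert m^{bias}(x)\vert=O(L^{-3})$ and Condition $C3$. The gap is in your treatment of the projected bias term $\boldsymbol{P}_{i}^{\prime}A_{n,i}^{+}R_{n,i}$. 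The claim that on the trimmed range the smallest nonzero eigenvalue of $A_{n,i}$ is of order $L^{-1}$, so that $\Vert A_{n,i}^{+}\Vert=O_{p}(L)$ uniformly in $i$, is false, and the ill-conditioning is not an ``upper endpoint'' phenomenon: it arises whenever $x_{i}$ sits just below \emph{any} knot. If $\delta_{i}=z^{k(x_{i})}-x_{i}\in(n^{-\varsigma},L^{-1})$, the B-spline whose support terminates at $z^{k(x_{i})}$ enters $A_{n,i}$ only through its tail mass on $(x_{i},z^{k(x_{i})})$, which is of order $L^{2q}\delta_{i}^{2q+1}$ and can be as small as $L^{2q}n^{-\varsigma(2q+1)}\ll L^{-1}$; the trimming only caps the degeneracy at that level (and at the snapped points $\widetilde{x}_{i}=z^{k(x_{i})}$ the offending coordinate drops out of the column space entirely, which is why the Moore--Penrose inverse is used). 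With the true worst-case eigenvalue, the product bound $\Vert\boldsymbol{P}_{i}\Vert\,\Vert A_{n,i}^{+}\Vert\,\Vert R_{n,i}\Vert$ no longer yields $o_{p}(1)$ after multiplication by $n^{1/2}$ under $C3$ (for $L$ near the lower end $n^{1/4}$ it diverges), so this step fails as written.

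The term is nonetheless negligible, but only because of a directional compensation your norm bound throws away: the near-null direction of $A_{n,i}$ is exactly the direction in which $\boldsymbol{P}_{i}$ is small (the cut spline satisfies $p_{\ell,L}(x_{i})\asymp L^{q}\delta_{i}^{q}$) and in which $R_{n,i}$ is small as well. This is what the paper's weighting $D_{L}(x)$ in $\left(\ref{q_11}\right)$ and the normalized vectors $\widetilde{\boldsymbol{P}}_{i}=A_{L,i}^{+1/2}\boldsymbol{P}_{i}$ in $\left(\ref{q_1}\right)$ formalize: the paper bounds the projected bias by $KL^{-3}n^{-1/2}\sum_{i}\Vert\widetilde{\boldsymbol{P}}_{i}\Vert\,n^{-1}\sum_{k}\Vert\widetilde{\boldsymbol{P}}_{k}\Vert=O_{p}(n^{1/2}/L^{2})$, using Lemma \ref{Est_A_uni} together with $\left(\ref{A_inv}\right)$ to control the sandwiched inverse uniformly and Lemma \ref{Nor_P_i} to get $E\Vert\widetilde{\boldsymbol{P}}_{i}\Vert=O(L^{1/2})$. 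Repairing your argument means replacing the raw spectral bound on $A_{n,i}^{+}$ by this weighted (or an equivalent directional) bound. A secondary point: the column-space inclusion $\boldsymbol{P}_{i}\in\operatorname{col}(A_{n,i})$ behind the exact cancellation is automatic precisely when $\widetilde{x}_{i}=x_{i}$ (the $k=i$ term is then included in $A_{n,i}$); it is at the trimmed points that it becomes delicate, so the trimming is not what ``guarantees'' it, and the contribution of those points needs a separate (if easy) accounting.
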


Denote the estimator of the variance of $u_{i}$, $\sigma _{u}^{2}$, by 
\begin{equation*}
\widehat{\sigma }_{u}^{2}=\frac{1}{n}\sum_{i=1}^{n}\widehat{u}_{i}^{2}\text{.%
}
\end{equation*}

\begin{proposition}
\label{sigma_est}Under Conditions $C1-C3$, we have that $\widehat{\sigma }%
_{u}^{2}\overset{P}{\rightarrow }\sigma _{u}^{2}$.
\end{proposition}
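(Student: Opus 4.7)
The plan is to expand $\widehat{u}_i = u_i - \widehat{\Delta}_i$, where $\widehat{\Delta}_i = \widehat{m}_{\mathcal{B}}(x_i;L) - m(x_i)$, so that
\begin{equation*}
\widehat{\sigma}_u^2 = \frac{1}{n}\sum_{i=1}^n u_i^2 \;-\; \frac{2}{n}\sum_{i=1}^n u_i \widehat{\Delta}_i \;+\; \frac{1}{n}\sum_{i=1}^n \widehat{\Delta}_i^2.
\end{equation*}
Under Condition $C1$, the weak law of large numbers gives $n^{-1}\sum u_i^2 \overset{P}{\to} \sigma_u^2$. By Cauchy--Schwarz, the cross term satisfies $|n^{-1}\sum u_i \widehat{\Delta}_i|\leq (n^{-1}\sum u_i^2)^{1/2}(n^{-1}\sum \widehat{\Delta}_i^2)^{1/2}$, so the whole proposition reduces to showing $n^{-1}\sum \widehat{\Delta}_i^2 = o_p(1)$.

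For the latter, I work under $H_0$ (as in Theorem \ref{M_nest}). Condition $C0$ together with the B-spline approximation theory (Condition $C2$ and Agarwal--Studden) guarantees the existence of $b^\ast \in S_{q,L}$ such that $m^\ast(x) := (b^\ast)^\top \boldsymbol{P}_L(x)$ satisfies $\sup_x|m(x)-m^\ast(x)|=O(L^{-3})+ \mathcal{H}(\mathcal{M}_0,\mathcal{M}_{S_{q,L}}) = o(1)$. Decomposing $\widehat{\Delta}_i = (\widehat{m}_{\mathcal{B}}(x_i;L) - m^\ast(x_i)) + (m^\ast(x_i)-m(x_i))$ and using $(a+b)^2\leq 2a^2+2b^2$ reduces the problem to bounding $n^{-1}\sum(\widehat{m}_{\mathcal{B}}(x_i;L)-m^\ast(x_i))^2$, since the remaining piece is $O(\sup|m-m^\ast|^2)=o(1)$.

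Now exploit the optimality of $\widehat{b}$: since $b^\ast \in S_{q,L}$,
\begin{equation*}
\sum_{i=1}^n (y_i - \widehat{b}^\top \boldsymbol{P}_i)^2 \leq \sum_{i=1}^n (y_i - (b^\ast)^\top \boldsymbol{P}_i)^2.
\end{equation*}
Rearranging and letting $D_i := \widehat{m}_{\mathcal{B}}(x_i;L)-m^\ast(x_i) = (\widehat{b}-b^\ast)^\top \boldsymbol{P}_i$ yields
\begin{equation*}
\frac{1}{n}\sum_{i=1}^n D_i^2 \leq \frac{2}{n}\sum_{i=1}^n u_i D_i + \frac{2}{n}\sum_{i=1}^n (m(x_i)-m^\ast(x_i)) D_i.
\end{equation*}
The first term on the right is $2(\widehat{b}-b^\ast)^\top C_n(0)$, which by Cauchy--Schwarz in the inner product induced by $A_n(0)=n^{-1}\sum \boldsymbol{P}_i\boldsymbol{P}_i^\top$ is bounded by $2\,a\cdot \sqrt{C_n(0)^\top A_n(0)^+ C_n(0)}$ with $a:=(n^{-1}\sum D_i^2)^{1/2}$; standard B-spline eigenvalue bounds (Condition $C1$ ensures $f_X$ bounded away from zero) give $C_n(0)^\top A_n(0)^+C_n(0)=O_p(L/n)$. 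The second term on the right is bounded by $2\,a\cdot O(\sup|m-m^\ast|)$. Therefore $a^2 \leq a\cdot O_p((L/n)^{1/2}+L^{-3})$, and by Condition $C3$ we conclude $a=o_p(1)$, hence $n^{-1}\sum \widehat{\Delta}_i^2 = o_p(1)$.

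The main obstacle is the last step, namely controlling the distance between the constrained estimator and the best feasible B-spline approximation $m^\ast$. The argument above avoids having to track the joint distribution of $\widehat{b}$ at the boundary of $S_{q,L}$ (which, as the authors emphasize in Section \ref{sec:Khmaladze}, is non-Gaussian and hard to describe) by using only the minimization inequality together with the $O_p(\sqrt{L/n})$ bound on the unconstrained B-spline normal-equation residual. Once $n^{-1}\sum \widehat{\Delta}_i^2=o_p(1)$ is established, combining with the Cauchy--Schwarz bound on the cross term and the WLLN for $n^{-1}\sum u_i^2$ delivers $\widehat{\sigma}_u^2 \overset{P}{\to} \sigma_u^2$.
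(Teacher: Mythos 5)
Your proof is correct, but the key step is handled by a genuinely different argument than the paper's. Both proofs start from the same expansion of $\widehat{\sigma}_u^2$ and reduce everything, via Cauchy--Schwarz and the WLLN for $n^{-1}\sum u_i^2$, to showing that the average squared estimation error is $o_p(1)$; the paper, however, splits that error into the bias $m^{bias}(x_i)$ plus the parameter term $\boldsymbol{P}_i'(\widehat{b}-\beta)$ and disposes of the quadratic form $(\widehat{b}-\beta)'A_n(0)(\widehat{b}-\beta)$ rather tersely, by computing the conditional second moment of the unconstrained score $A_L(0)^{-1/2}n^{-1}\sum \boldsymbol{P}_iu_i$ (i.e.\ it implicitly leans on $\widehat{b}-\beta=O_p((L/n)^{1/2})$ for the \emph{constrained} estimator, together with Agarwal--Studden for the bias). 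You instead avoid any rate claim for $\widehat{b}$ altogether: you pick a feasible comparison point $b^\ast\in S_{q,L}$ (available under $H_0$ by Condition $C0$), invoke the basic inequality from the constrained minimization, and bound the empirical inner products $n^{-1}\sum u_iD_i$ and $n^{-1}\sum(m-m^\ast)(x_i)D_i$ by $a\cdot O_p((L/n)^{1/2})$ and $a\cdot o(1)$ respectively, giving $a=o_p(1)$ directly. This buys you a more self-contained argument that explicitly copes with the boundary/non-Gaussian nature of $\widehat{b}$, at the cost of being slightly longer; the paper's route is shorter but defers the constrained-estimator rate to arguments made elsewhere. Two small points to tidy: in your final display the approximation error should be the $o(1)$ distance from $m$ to $\mathcal{M}_{S_{q,L}}$ guaranteed by Condition $C0$ (the Hausdorff term), not $O(L^{-3})$ — Agarwal--Studden controls the distance to the unconstrained spline space, which need not be feasible — though $o(1)$ is all you need; and the generalized Cauchy--Schwarz step with $A_n(0)^{+}$ is legitimate because $C_n(0)$ lies in the column space of $A_n(0)$, which is worth stating since $A_n(0)$ may be singular (here $E\left[ C_n(0)'A_n(0)^{+}C_n(0)\mid X\right]=\sigma_u^2\,\mathrm{rank}(A_n(0))/n\leq \sigma_u^2L/n$ already gives the $O_p(L/n)$ bound without eigenvalue conditions).
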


We then have the following corollary.

\begin{corollary}
Under $H_{0}$ and assuming Conditions $C1-C3$, for any continuous functional 
$g:\mathbb{R\rightarrow R}^{+}$, 
\begin{equation*}
g\left( \widetilde{\mathcal{M}}_{n}\left( x\right) /\widehat{\sigma }%
_{u}\right) \overset{d}{\rightarrow }g\left( \mathcal{U}\left( x\right)
/\sigma _{u}\right) \text{.}
\end{equation*}
\end{corollary}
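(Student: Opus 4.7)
The plan is to deduce the corollary from Theorem \ref{M_nest} and Proposition \ref{sigma_est} via Slutsky's lemma together with two successive applications of the continuous mapping theorem, in the function space in which $\widetilde{\mathcal{M}}_n$ lives (taken here to be $D[0,1]$ with the Skorokhod topology, in which the limit $\mathcal{U}$ has continuous sample paths).

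First, I would combine the two ingredients into a joint weak convergence statement. Theorem \ref{M_nest} gives $\widetilde{\mathcal{M}}_n(\cdot) \Rightarrow \mathcal{U}(\cdot)$, and Proposition \ref{sigma_est} gives $\widehat{\sigma}_u \overset{P}{\to} \sigma_u$, where $\sigma_u>0$ by Condition $C1$. Because the second limit is a deterministic constant, the pair converges jointly:
\begin{equation*}
\bigl(\widetilde{\mathcal{M}}_n(\cdot),\, \widehat{\sigma}_u\bigr) \Rightarrow \bigl(\mathcal{U}(\cdot),\, \sigma_u\bigr)
\end{equation*}
in $D[0,1]\times \mathbb{R}$, by the standard Slutsky-type argument (marginal weak convergence plus convergence in probability to a constant implies joint weak convergence).

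Next, I would apply the continuous mapping theorem to the rescaling map $\Phi:(w,s)\mapsto w/s$, which is continuous on $D[0,1]\times (0,\infty)$ (in Skorokhod $\times$ Euclidean topology) at every point with $s>0$; since $\mathbb{P}(\sigma_u>0)=1$, the set of continuity points of $\Phi$ has probability one under the limit law. This yields
\begin{equation*}
\widetilde{\mathcal{M}}_n(\cdot)/\widehat{\sigma}_u \;\Rightarrow\; \mathcal{U}(\cdot)/\sigma_u
\end{equation*}
in $D[0,1]$. A second application of the continuous mapping theorem with the continuous functional $g$ then delivers
\begin{equation*}
g\!\left(\widetilde{\mathcal{M}}_n(\cdot)/\widehat{\sigma}_u\right) \overset{d}{\to} g\!\left(\mathcal{U}(\cdot)/\sigma_u\right),
\end{equation*}
which is the conclusion.

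There is no real obstacle here: the only mild point to be careful about is that division by $\widehat{\sigma}_u$ is continuous only away from zero, but this is harmless since $\widehat{\sigma}_u\overset{P}{\to}\sigma_u>0$ and the limit is supported on $\{s>0\}$, so the continuity-set hypothesis of the continuous mapping theorem is satisfied. Since the limit $\mathcal{U}(\cdot)/\sigma_u = \mathcal{B}(F_X(\cdot))$ has continuous paths, the weak convergence transfers to any of the standard functionals $g$ (Kolmogorov-Smirnov, Cram\'er-von Mises, Anderson-Darling), giving readily usable critical values.
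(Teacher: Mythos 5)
Your argument is correct and follows exactly the route the paper intends: the paper omits the proof, citing only Theorem \ref{M_nest}, Proposition \ref{sigma_est} and the continuous mapping theorem, which is precisely your Slutsky-plus-continuous-mapping chain. Your extra care about joint convergence in $D[0,1]\times\mathbb{R}$ and the continuity of division away from zero simply fills in the details the paper calls ``standard.''
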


\begin{proof}
The proof is standard using Theorem $\ref{M_nest}$, Proposition \ref{sigma_est} and the continuous mapping theorem, so it is omitted.
\end{proof}

Denoting $\tilde{n}=:n-L-2$ and $\widetilde{\mathcal{M}}_{n}\left(
x^{q}\right) =\widetilde{\mathcal{M}}_{n,q}$, where $x^{q}=q/n$, standard
functionals are the Kolmogorov-Smirnov, Cram\'{e}r-von-Mises and
Anderson-Darling tests defined respectively as{\ 
\begin{eqnarray}
\mathcal{KS}_{n} &\mathcal{=}&\sup_{q=1,...,\tilde{n}}\left\vert \frac{%
\widetilde{\mathcal{M}}_{n,q}}{\widehat{\sigma }_{u}}\right\vert \overset{d}{%
\rightarrow }\sup_{x\in \left( 0,1\right) }\left\vert \mathcal{B}\left(
F_{X}\left( x\right) \right) \right\vert  \notag \\
\mathcal{C}v\mathcal{M}_{n} &\mathcal{=}&\frac{1}{\tilde{n}}\sum_{q=1}^{%
\tilde{n}}\frac{\widetilde{\mathcal{M}}_{n,q}^{2}}{\widehat{\sigma }_{u}^{2}}%
\overset{d}{\rightarrow }\int_{0}^{1}\mathcal{B}^{2}\left( F_{X}\left(
x\right) \right) dx\text{,}  \label{KSAD}
\end{eqnarray}%
}%
\begin{equation*}
\mathcal{AD}_{n}\mathcal{=}\frac{1}{\tilde{n}}\sum_{q=1}^{\tilde{n}}\frac{%
\widetilde{\mathcal{M}}_{n,q}^{2}}{\widehat{\sigma }_{u}^{2}x^{q}\left(
1-x^{q}\right) }\overset{d}{\rightarrow }\int_{0}^{1}\frac{\mathcal{B}%
^{2}\left( F_{X}\left( x\right) \right) }{F_{X}\left( x\right) \left(
1-F_{X}\left( x\right) \right) }dx\text{.}
\end{equation*}

\vskip 0.1in 

\subsection{\textbf{NONLINEAR CONSTRAINTS ON $\protect\beta_{\ell}$}}

$\left. {}\right. $

We turn now our attention to describing the Khmaladze's transformation in situations
when some constraints describing $S_{q,L}$ may be non-linear, as it happens to be in our Examples 3 and 4. If the
constrained estimate $\widehat{b}$ lies in the interior of $S_{q,L}$ (and, thus, coincides with the unconstrained estimate), then
the transformation is conducted in the same way as in Section \ref{sec:Khmlinear}. However, the transformation will have a modified form if $%
\widehat{b}$ lies on the boundary of $S_{q,L}$. We give its form and, in particular, discuss what objects plays the role of $\boldsymbol{P}_{L}\left( x\right)$ described in the previous section.

For expositional simplicity suppose that $\widehat{b}$ belongs only to one
of the smooth surfaces describing the boundary of $S_{q,L}$. Usually these
surfaces will be defined by implicit functions but, applying the implicit
function theorem, the explicit representation of this surface can be obtained either analytically or numerically
(even if local, which would suffice since $\widehat{b}$ is consistent) 
with respect to one parameter expressed as a function of
other parameters: suppose that for some $\ell _{0}$ we can express it as $%
\beta_{\ell _{0}}=h(\beta _{1},\ldots ,\beta _{\ell _{0}-1},\beta _{\ell
_{0}+1},\ldots ,\beta _{L})$. In cases such as AG-, AH-convexity and other
similar properties in Example \ref{example:meanconvex}, for the reasons discussed in Section \ref{section:1step} this would be a
restriction $\beta_{\ell _{0}}=h(\beta _{\ell _{0}-2},\beta _{\ell _{0}-1})$
obtained from an implicit function which is a polynomial of degree 2. Then for the
purpose of conducting the Khmaladze's transformation, instead of approximating 
$m(\cdot )$ by the linear function $\sum_{k=0}^{\ell
-1}\beta _{k}p_{k}\left( x_{i}\right) $, we consider the approximation given by 
\begin{equation}
g\left( x_{i};\beta_{-\ell _{0}}\right) =:\sum_{k=0}^{\ell _{0}-1}\beta
_{k}p_{k}\left( x_{i}\right) +h(\beta_{-\ell _{0}})p_{\ell _{0}}\left(
x_{i}\right) +\sum_{k=\ell _{0}+1}^{L}\beta _{k}p_{k}\left( x_{i}\right) ,
\label{2}
\end{equation}%
where $\beta_{-\ell _{0}}=(\beta _{1},\ldots ,\beta _{\ell _{0}-1},\beta
_{\ell _{0}+1},\ldots ,\beta _{L})$. Although this approximation function is nonlinear in
parameters,  it has a simple and useful structure, as it will
become evident from our analysis below.

To define the Khmaladze's transformation, denote the vector of first derivatives
of $g\left( x;\beta_{-\ell _{0}}\right) $ with respect to the parameters as 
\begin{eqnarray*}
\widetilde{P}\left( x;\beta_{-\ell _{0}}\right) &=&:\frac{\partial }{%
\partial \beta_{-\ell _{0}}}g\left( x;\beta_{-\ell _{0}}\right) \\
&=&:\left\{ \widetilde{p}_{\ell }\left( x;\beta_{-\ell _{0}}\right)
\right\} _{\ell =1}^{L}\text{,}
\end{eqnarray*}%
where 
\begin{equation*}
\widetilde{p}_{\ell }\left( x;\beta_{-\ell _{0}}\right) =:p_{\ell }\left(
x\right) +\frac{\partial h\left( \beta_{-\ell _{0}}\right) }{\partial \beta
_{\ell }}p_{\ell _{0}}\left( x\right) ,\quad \ell \neq \ell _{0}.
\end{equation*}

Then, the Khmaladze's transformation of the test statistic $\mathcal{K}%
_{n}\left( x\right) $ in (\ref{T_n}) has the following form: 
\begin{equation*}
\mathcal{M}_{n}\left( x\right) =:\frac{1}{n^{1/2}}\sum_{i=1}^{n}v_{i}%
\mathcal{I}_{i}\left( x\right) ,
\end{equation*}%
where 
\begin{equation*}
v_{i}=u_{i}-\widetilde{P}_{i}^{\prime }\left( \widehat{b}_{-\ell
_{0}}\right) \mathcal{D}_{n}^{+}\left( \widehat{b}_{-\ell
_{0}};i\right) \sum_{k=1}^{n}\widetilde{P}_{k}\left( \widehat{b}_{-\ell
_{0}}\right) {u}_{k}\mathcal{J}_{k}\left( \widetilde{x}_{i}\right),
\end{equation*}
$\widetilde{P}_{i}\left( \beta_{-\ell _{0}}\right) =:\widetilde{P}%
\left( x_{i};\beta_{-\ell _{0}}\right) $, and $\mathcal{D}_{n}\left(x; \beta
_{-\ell _{0}}\right) =\sum_{k=1}^{n}\widetilde{P}_{k}\left( \beta _{-\ell
_{0}}\right) \widetilde{P}_{k}^{\prime }\left( \beta_{-\ell _{0}}\right) 
\mathcal{J}_{k}\left(x\right) $, and $\mathcal{D}^{+}_{n}\left(x; \beta
_{-\ell _{0}}\right)$ is the Moore-Penrose inverse of $\mathcal{D}_{n}\left(x; \beta
_{-\ell _{0}}\right)$, and  $\mathcal{D}_{n}\left( \beta
_{-\ell _{0}};i\right) =\mathcal{D}_{n}\left(\widetilde{x}_{i}; \beta
_{-\ell _{0}}\right) $ with $\widetilde{x}_{i}$ defined 
in the same way as in Section \ref{sec:Khmlinear}. Note that by employing $\widetilde{p}_{\ell }\left( x_i;\beta_{-\ell _{0}}\right)$ instead of $p_{\ell }\left(
x_i\right)$, we have automatically incorporated in the transformation our binding restriction. 

In practice, instead of $v_{i}$ we use 
\begin{equation*}
\widehat{v}_{i}=\widehat{u}_{i}-\widetilde{P}_{i}^{\prime }\left( \widehat{b}_{-\ell _{0}}\right) \mathcal{D}_{n}^{+}\left( \widehat{b}
_{-\ell _{0}};i\right) \sum_{k=1}^{n}\widetilde{P}_{k}\left( \widehat{b}
_{-\ell _{0}}\right) \widehat{u}_{k}\mathcal{J}_{k}\left( \widetilde{x}%
_{i}\right)
\end{equation*}%
and consider a feasible version of the Khmaladze transformation: 
\begin{equation*}
\widetilde{\mathcal{M}}_{n}\left( x\right) =:\frac{1}{n^{1/2}}%
\sum_{i=1}^{n}\widehat{v}_{i}\mathcal{I}_{i}\left( x\right) ,
\end{equation*}%
Just like in Section \ref{sec:Khmlinear}, we have the following result.

\begin{theorem}
\label{M_nest-nonlinear} Assuming that $H_{0}$ holds true, under Conditions $%
C1-C3$, we have that 
\begin{equation*}
\widetilde{\mathcal{M}}_{n}\left( x\right) \overset{weakly}{%
\Rightarrow }\mathcal{U}\left( x\right) \text{.}
\end{equation*}
\end{theorem}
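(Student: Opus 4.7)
The plan is to reduce Theorem \ref{M_nest-nonlinear} to the already-established linear case (Theorems \ref{M_n} and \ref{M_nest}) by exploiting the fact that the nonlinear boundary surface has been locally parametrized by the implicit-function representation $\beta_{\ell_0}=h(\beta_{-\ell_0})$, so that the approximating function $g(x;\beta_{-\ell_0})$ is smooth in the $L-1$ free parameters $\beta_{-\ell_0}$. The key observation is that the vector $\widetilde{P}_L(x;\beta_{-\ell_0})$ of first-derivative functions is a linear combination of B-splines, inheriting the essential geometric properties of $\boldsymbol{P}_L(x)$ used in Section \ref{sec:Khmlinear}, so that the Khmaladze transformation built from $\widetilde{P}_L$ and $\mathcal{D}_n(x;\beta_{-\ell_0})$ annihilates any linear drift in these effective regressors.

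First I would decompose $\widetilde{\mathcal{M}}_n(x)$ using $\widehat{u}_i = u_i + (m(x_i)-\widehat{m}_{\mathcal{B}}(x_i;L))$, and split the latter into an approximation bias $m^{bias}(x_i)$ and a parameter-estimation error piece. The transformation is linear, so $\widetilde{\mathcal{M}}_n(x) = (\mathcal{T}_n\mathcal{U}_n)(x) + (\mathcal{T}_n m^{bias})(x) + (\text{estimation-error part})$. Step two is to verify the analogue of property (ii) in \eqref{properties} for $\widetilde{P}_L$: by construction the sample version of the transformation annihilates $\widetilde{P}_L$-weighted partial sums, so the estimation-error term is killed up to an $o_p(n^{-1/2})$ residual coming from having to replace $\beta_{-\ell_0}$ by $\widehat{b}_{-\ell_0}$ in the weights. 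Step three is the bias step, i.e. the analogue of Lemma \ref{bias}/Theorem \ref{M_nest} applied with $\widetilde{P}$: since each $\widetilde{p}_\ell$ is a fixed linear combination of $q+1$ adjacent B-splines of bounded coefficient (the partial derivatives $\partial h/\partial\beta_\ell$ are bounded in a neighborhood of the true value by the smoothness of the surface assumed in $C0(b)$), the same $O(L^{-3})$ bound on $m^{bias}$ and the same invertibility bounds for $\mathcal{D}_n(x;\cdot)$ transfer with only constants changed.

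Step four addresses the new ingredient: the plug-in of $\widehat{b}_{-\ell_0}$ for $\beta_{-\ell_0}$ inside $\widetilde{P}_i(\cdot)$ and $\mathcal{D}_n^+(\cdot;i)$. A Taylor expansion of $h$ to second order gives
\begin{equation*}
\widetilde{P}_i(\widehat{b}_{-\ell_0}) - \widetilde{P}_i(\beta_{-\ell_0}) = p_{\ell_0}(x_i)\nabla^2 h(\bar{\beta}_{-\ell_0})(\widehat{b}_{-\ell_0}-\beta_{-\ell_0}),
\end{equation*}
uniformly in $i$, where $\|\widehat{b}_{-\ell_0}-\beta_{-\ell_0}\|=O_p((L/n)^{1/2})$ by the constrained least-squares rate obtainable from the same argument that produced the analogous bound in the proof of Theorem \ref{M_n}. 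Combined with the sparsity of $p_{\ell_0}$ (only $O(1)$ indices $i$ have $p_{\ell_0}(x_i)\neq 0$ up to knots), the resulting perturbation of $\mathcal{D}_n$ and of the leading $\widetilde{P}_i$ is $o_p(n^{-1/2})$ after multiplication by $n^{-1/2}\sum_k \widetilde{P}_k u_k = O_p(1)$. The same argument controls the replacement of $u_k$ by $\widehat{u}_k$ in the inner sum, via Proposition \ref{sigma_est} and Condition $C3$.

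Finally, having shown that $\widetilde{\mathcal{M}}_n(x) = \mathcal{M}_n(x) + o_p(1)$ uniformly in $x$, where $\mathcal{M}_n$ is the analog of \eqref{k_1} built with the oracle weights $\widetilde{P}_i(\beta_{-\ell_0})$ and true $u_i$, the conclusion follows by repeating the martingale CLT / tightness argument of Theorem \ref{M_n} verbatim for the effective regressor system $\{\widetilde{p}_\ell(\cdot;\beta_{-\ell_0})\}_{\ell\neq\ell_0}$. The main obstacle I anticipate is the fourth step: controlling the propagation of the $O_p((L/n)^{1/2})$ estimation error of $\widehat{b}_{-\ell_0}$ through the Moore--Penrose inverse $\mathcal{D}_n^+(\widehat{b}_{-\ell_0};i)$ uniformly in $i$, since near the right end of the support the matrix is near-singular; the remedy is the same trimming device $\widetilde{x}_i$ introduced in Section \ref{sec:Khmlinear}, combined with a perturbation bound for pseudoinverses restricted to the relevant eigenspace, and a verification that the eigenspace structure of $\mathcal{D}_n$ is locally stable under the $(L/n)^{1/2}$ perturbation permitted by Condition $C3$.
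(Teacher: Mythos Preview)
Your overall strategy coincides with the paper's: decompose $\widehat u_i$ into $u_i$, the bias $m^{bias}(x_i)$, and the estimation error $g(x_i;\widehat b_{-\ell_0})-g(x_i;\beta_{-\ell_0})$; show that the oracle transformation (weights evaluated at the true $\beta_{-\ell_0}$) reproduces Theorem~\ref{M_n}; and then control the effect of replacing $\beta_{-\ell_0}$ by $\widehat b_{-\ell_0}$ inside $\widetilde P_i(\cdot)$ and $\mathcal D_n^+(\cdot;i)$ via a mean-value argument. The paper organizes this into four pieces $\mathcal A_{1,n},\dots,\mathcal A_{4,n}$ and then shows $\mathcal A_{q,n}(x;\widehat b)-\mathcal A_{q,n}(x;\beta)=o_p(1)$, but the architecture is the same as yours.

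There are, however, two concrete problems in your step four. First, the sparsity claim that ``only $O(1)$ indices $i$ have $p_{\ell_0}(x_i)\neq 0$'' is wrong: $p_{\ell_0}$ is supported on an interval of length of order $L^{-1}$, so about $n/L$ observations land there. The paper uses instead $E|p_{\ell_0}(x_i)|=O(L^{-1})$ (equivalently $E|Lp_{\ell_0}(x_i)|\leq K$) to obtain the correct rates. Second, and more importantly, your step two treats the estimation error as if it were a \emph{linear} combination of the $\widetilde p_\ell$'s, annihilated by the transformation up to the plug-in residual. But $g$ is nonlinear in $\beta_{-\ell_0}$, and a second-order Taylor expansion of $g$ (not of $\widetilde P$) produces an extra quadratic remainder
\[
\tfrac12(\widehat b-\beta)'G_i(\widetilde b)(\widehat b-\beta)=\tfrac12(\widehat b-\beta)'H(\widetilde b)(\widehat b-\beta)\,p_{\ell_0}(x_i).
\]
This term is \emph{not} in the span of $\{\widetilde p_\ell\}$ and hence is not removed by the Khmaladze projection; it must be bounded directly. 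The paper isolates it as $\mathcal A_{3,n}$ (split further into $\mathcal C_{1,n}$ and $\mathcal C_{2,n}$), and in the worst case---when $h$ depends on all remaining parameters---the transformed quadratic contribution $\mathcal C_{2,n}$ is only $O_p\bigl((L^2/n)^{1/2}\bigr)$, so Condition~$C3$ is genuinely used at this point. Your outline never isolates this remainder, and the rate heuristic you give (based on the incorrect $O(1)$ sparsity count) would not deliver the required bound.
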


This methodology can be generalized, of course, to the situation when the
constrained estimate belongs to the intersection of several boundary
surfaces. In this case we would express several parameter values as
functions of other parameters and plug these functions into the linear
approximation obtaining a nonlinear expression in the remaining parameters.
We would define the derivative of the new approximation and adjust the
Khmaladze transformation to account now for several nonlinear equality
constraints. The rest of the methodology and the asymptotic result would
remain the same as above.



\subsection{\textbf{COMPUTATIONAL ISSUES}}

$\left. {}\right. $

This section is devoted at how we can compute our statistic. In view of the
CUSUM interpretation, we shall rely on the standard recursive residuals. We will illustrate 
the computational issues using notations from the case of all linear constraints but, of course, 
the methodology in the case of non-linear constraints will be the same. 

Note that since $f\left( x\right) $ is continuous the probability of a
tie is zero, so that we can always consider the case $x_{i}<x_{i+1}$. Now with this view we have that
\begin{equation*}
\mathcal{M}_{n}\left( x\right) =:\frac{1}{n^{1/2}}\sum_{i=1}^{n}v_{i}%
\mathcal{I}_{i}\left( x\right) ;
\end{equation*}%
can be written with $v_{i}$ replaced by $v_{i}=u_{i}-\boldsymbol{P}%
_{i}^{\prime }A_{n}^{+}\left( x\right) C_{n}\left( x\right) $ and now
\begin{equation*}
\left( \frac{1}{n}\sum_{k=1}^{n}\boldsymbol{P}_{k}\boldsymbol{P}_{k}^{\prime
}\mathcal{J}_{k}\left( x_{i}\right) \right) ^{+}\frac{1}{n}\sum_{k=1}^{n}%
\boldsymbol{P}_{k}u_{k}\mathcal{J}_{k}\left( x_{i}\right)
=:A_{n,i}^{+}C_{n,i}\text{.}
\end{equation*}

Then from a computational point of view is worth observing that
\begin{equation*}
A_{n,k}^{+}=A_{n,k+1}^{+}-\frac{A_{n,k+1}^{+}\boldsymbol{P}_{k}\boldsymbol{P}%
_{k}^{\prime }A_{n,k+1}^{+}}{n+\boldsymbol{P}_{k}^{\prime }A_{n,k+1}^{+}%
\boldsymbol{P}_{k}}
\end{equation*}%
and%
\begin{equation*}
A_{n.k}^{+}C_{n,k}=A_{n,k+1}^{+}C_{n.k+1}+A_{n,k}^{+}\boldsymbol{P}%
_{k}\left( u_{k}-\boldsymbol{P}_{k}^{\prime }A_{n,k+1}^{+}C_{n,k+1}\right)
\end{equation*}%
see \cite{BrownDurbinEvans} for similar arguments.
Alternatively, we could have considered the Cusum of backward recursive
residuals, in which case we would have use the computational formulae,%
\begin{equation*}
\bar{A}_{n,k+1}^{+}=\bar{A}_{n,k}^{+}-\frac{\bar{A}_{n,k}^{+}\boldsymbol{P}%
_{k+1}\boldsymbol{P}_{k+1}^{\prime }\bar{A}_{n,k}^{+}}{n+\boldsymbol{P}%
_{k+1}^{\prime }\bar{A}_{n,k}^{+}\boldsymbol{P}_{k+1}}
\end{equation*}%
and%
\begin{equation*}
\overline{A}_{n,k+1}^{+}C_{n,k+1}=\overline{A}_{n,k}^{+}C_{n,k}+A_{n,k+1}^{+}%
\boldsymbol{P}_{k+1}\left( u_{k+1}-\boldsymbol{P}_{k+1}^{\prime }\overline{A}%
_{n,k}^{+}C_{n,k}\right) .
\end{equation*}

Of course in the previous formulas one would replace $u_{i}$ by $\widehat{u}%
_{i}$ or $y_{i}$.

\subsection{\textbf{POWER AND LOCAL ALTERNATIVES}}
\label{sec:power}

$\left. {}\right. $

Here we describe the local alternatives for which the tests based on $%
\widetilde{\mathcal{M}}_{n}\left( x\right) $ have no trivial power. For that
purpose, assume that the true model is such that 
\begin{equation*}
H_{a}=m\left( x\right) +n^{-1/2}m_{1}\left( x\right) \text{,}
\end{equation*}%
where $m\left( x\right) \in \mathcal{M}_0$ and $m_{1}\left( x\right) $ is
incompatible with $\mathcal{M}_0$ at least on some subset $\mathcal{X}%
_{1}\subset \mathcal{X=}\left( 0,1\right) $ with the Lebesgue measure
strictly greater than zero. Then, we have the result of Proposition \ref{Loc_alt}.

\begin{proposition}
\label{Loc_alt}Assuming Conditions $C1-C3$, under $H_{a}$ we have that 
\begin{equation*}
\widetilde{\mathcal{M}}_{n}\left( x\right) +\mathcal{L}\left( x\right) 
\overset{weakly}{\Rightarrow }\mathcal{U}\left( x\right) \text{, \ \ }x\in %
\left[ 0,1\right]
\end{equation*}%
where 
\begin{equation*}
\mathcal{L}\left( x\right) =\int_{0}^{x}\left\{ m_{1}\left( v\right) -%
\mathbf{P}_{L}^{\prime }\left( v\right) A_{L}^{+}\left( v\right) \int_{v}^{1}%
\mathbf{P}_{L}\left( w\right) m_{1}\left( w\right) f_{X}\left( w\right)
dw\right\} f_{X}\left( v\right) dv\text{.}
\end{equation*}
\end{proposition}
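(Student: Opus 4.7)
The plan is to decompose $\widetilde{\mathcal{M}}_n(x)$ under $H_a$ into the process that would arise under $H_0$ plus a deterministic drift coming from the $n^{-1/2} m_1$ perturbation, and then to verify that the drift converges uniformly to $\mathcal{L}(x)$. Since the Khmaladze transformation $\mathcal{T}_n$ is linear in $\widehat u_i$, and since under $H_a$ we have
$$\widehat{u}_i = u_i + n^{-1/2} m_1(x_i) - m^{bias}(x_i) - \boldsymbol{P}_i'(\widehat b - \beta),$$
I would split
$$\widetilde{\mathcal{M}}_n(x) = \widetilde{\mathcal{M}}_n^{(0)}(x) + \mathcal{D}_n(x) + R_n(x),$$
where $\widetilde{\mathcal{M}}_n^{(0)}(x)$ is the process one would obtain under $H_0$ with $m_1 \equiv 0$, $\mathcal{D}_n(x)$ is the Khmaladze image of the $n^{-1/2} m_1$ drift, and $R_n(x)$ collects the discrepancy coming from the fact that $\widehat b$ itself is mildly perturbed by $m_1$. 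By Theorem \ref{M_nest}, $\widetilde{\mathcal{M}}_n^{(0)}(x) \Rightarrow \mathcal{U}(x)$.

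Second, I would show that $\mathcal{D}_n(x)$ converges uniformly in $x\in[0,1]$ to $\mathcal{L}(x)$. Explicitly, factoring out the $n^{-1/2}$ against the $n^{1/2}$ in the definition of $\widetilde{\mathcal{M}}_n$, one gets
$$\mathcal{D}_n(x) = \frac{1}{n}\sum_{i=1}^n m_1(x_i)\mathcal{I}_i(x) - \frac{1}{n}\sum_{i=1}^n \boldsymbol{P}_i' A_{n,i}^+ \Bigl(\frac{1}{n}\sum_{k=1}^n \boldsymbol{P}_k m_1(x_k)\mathcal{J}_k(\widetilde{x}_i)\Bigr)\mathcal{I}_i(x).$$
The first summand converges uniformly in $x$ to $\int_0^x m_1(v) f_X(v)\,dv$ by a Glivenko--Cantelli argument under Condition $C1$. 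For the second summand, I would invoke the uniform consistency of $A_{n,i}^+$ for $A_L^+(x_i)$ on the trimmed region, together with the uniform convergence of the empirical quantities $\frac{1}{n}\sum_k \boldsymbol{P}_k m_1(x_k)\mathcal{J}_k(\widetilde x_i)$ to $\int_{x_i}^1 \boldsymbol{P}_L(w) m_1(w) f_X(w)\,dw$, to obtain convergence to the second term of $\mathcal{L}(x)$.

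Third, for the remainder $R_n(x)$, I would argue it is $o_p(1)$ uniformly. Under the local alternative, standard perturbation analysis of the constrained least squares problem on a smooth boundary piece of $S_{q,L}$ (Condition $C0(b)$) gives that the perturbation of $\widehat b$ caused by the $n^{-1/2} m_1$ addendum is of order $O_p(n^{-1/2})$ in a norm compatible with the $\boldsymbol{P}_L$ basis. Property $(ii)$ in (\ref{properties}), verified via Lemma \ref{bias} as in the proofs of Theorems \ref{M_n} and \ref{M_nest}, then kills the $\boldsymbol{P}_i'(\widehat b - \widehat b^{(0)})$ piece after Khmaladze, so $R_n(x) = o_p(1)$ uniformly. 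The bias contribution is handled exactly as in Theorem \ref{M_nest} (i.e., property $(iii)$).

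The main obstacle I expect is the uniform-in-$x$ control of the second summand in $\mathcal{D}_n(x)$, because $A_L(x)$ degenerates as $x\uparrow 1$ (only a few $B$-splines are non-zero near the upper boundary) and the dimension $L$ grows with $n$; moreover, the Moore-Penrose inverses $A_{n,i}^+$ must be controlled near this boundary. This is the same obstacle already overcome in Theorem \ref{M_n} via the trimming device $\widetilde x_i$ together with Harville's identity, so the lemmas of Appendix B that underlie those theorems should deliver the required uniform consistency here — now applied to the deterministic drift $m_1$ rather than to the stochastic increments $u_i$. Combining the three pieces gives $\widetilde{\mathcal{M}}_n(x) + \mathcal{L}(x) \Rightarrow \mathcal{U}(x)$ on $[0,1]$.
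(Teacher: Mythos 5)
Your argument is essentially sound, but note first that the paper itself contains no proof of Proposition \ref{Loc_alt} in the appendices, so the relevant benchmark is whether your proposal can be assembled from the paper's own machinery — and it can. The decomposition you propose is the natural one: by linearity of $\mathcal{T}_{n}$ in the residuals, write $\widehat{u}_{i}=u_{i}+n^{-1/2}m_{1}(x_{i})-m^{bias}(x_{i})-\boldsymbol{P}_{i}^{\prime}(\widehat{b}-\beta)$; the $u_{i}$ part gives $\mathcal{U}$ by Theorem \ref{M_n}, the bias part is negligible exactly as in Theorem \ref{M_nest}, the spline-span part is annihilated by Lemma \ref{bias}, and the drift part gives your $\mathcal{D}_{n}(x)$. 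Two refinements. First, your perturbation analysis of $\widehat{b}$ under the local alternative is superfluous: Lemma \ref{bias} kills $\boldsymbol{P}_{i}^{\prime}(\widehat{b}-\beta)$ (or $\boldsymbol{P}_{i}^{\prime}(\widehat{b}-\widehat{b}^{(0)})$) \emph{identically}, whatever the magnitude or direction of the perturbation, so your $R_{n}(x)$ is exactly zero and no $O_{p}(n^{-1/2})$ rate for the constrained estimator is needed — which is fortunate, since such a rate with a growing number of (possibly nonlinear) constraints would itself require an argument. Second, watch the sign bookkeeping: from your own decomposition, $\mathcal{D}_{n}\rightarrow\mathcal{L}$ yields $\widetilde{\mathcal{M}}_{n}-\mathcal{L}\Rightarrow\mathcal{U}$, which matches the displayed statement only after invoking the sign-symmetry of $\mathcal{U}$ (harmless for the KS/CvM/AD functionals, but it should be said).

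The one place where genuinely new work is required is the step you flag as the main obstacle, and your diagnosis is correct: the uniform convergence of the second summand of $\mathcal{D}_{n}(x)$ cannot be lifted verbatim from the bias treatment in Theorem \ref{M_nest}, because there one only needs an $o_{p}(1)$ bound exploiting $m^{bias}=O(L^{-3})$, whereas here the drift is of exact order $n^{-1/2}$ and one must prove convergence to a nonzero limit. The right tools are indeed Lemmas \ref{Nor_P_i}--\ref{Est_A_uni} together with the trimming device $\widetilde{x}_{i}$ and the Harville identity, applied now with $m_{1}(x_{k})$ in place of $u_{k}$; you should also make explicit that, since $\mathcal{L}(x)$ is written in terms of $\boldsymbol{P}_{L}$ and $A_{L}^{+}$ and therefore depends on $L=L(n)$, the claim to be proved is that $\sup_{x}\vert\mathcal{D}_{n}(x)-\mathcal{L}(x)\vert=o_{p}(1)$ rather than convergence to a fixed function. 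With that reading, your outline is a correct and complete route to the proposition.
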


One consequence of Proposition \ref{Loc_alt} is that not only tests based on 
$\widetilde{\mathcal{M}}_{n}\left( x\right) $ are consistent since $\mathcal{%
L}\left( x\right) $ is a nonzero function, but that they also have a
nontrivial power against local alternatives converging to the null at the
\textquotedblleft parametric\textquotedblright\ rate $n^{-1/2}$.

\paragraph*{\textbf{Role of constraints for the power of the test}} In our discussion of the Khmaladze's transformation we stated that in that transformation one has to enforce the binding constraints obtained in the constrained estimation under the null hypothesis. Here we present a simple example to illustrate what happens to the power of the test if such constraints are not enforced. 

To keep arguments simple, consider the case where $L=2$ when we use the B-splines to approximate the function and the null hypothesis, and take the null hypothesis to be that of an  increasing regression function. That is, we employ the approximation $m_{\mathcal{B}}\left( x;2\right) = \beta _{1}p_{1,2}\left( x;q\right) + \beta _{2}p_{2
,2}\left( x;q\right)$ and
rewrite it in the following way: 
\begin{eqnarray*}
m_{\mathcal{B}}\left( x;2\right)  &=&\alpha _{1}\left( p_{1,2}\left(
x;q\right) +p_{2,2}\left( x;q\right) \right) +\alpha _{2}p_{2,2}\left(
x;q\right)  \\
&=&\alpha _{1}p\left( x;q\right) +\alpha _{2}p_{2,2}\left( x;q\right) ,
\end{eqnarray*}%
where $\alpha _{1}=\beta _{1}$ $\alpha _{2}=\beta _{2}-\beta _{1}$ and $%
p\left( x;q\right) =:p_{1,2}\left( x;q\right) +p_{2,2}\left( x;q\right) $. Of course, because of the B-splines properties, we have $p\left( x;q\right)=1$ but for the sake of presenting a more general argument, we will keep the more general notation  $p\left( x;q\right)$. The null hypothesis is then written as $\alpha _{2}\geq 0$. Suppose the null
is not true and the true value is $\alpha _{2}<0$. When we estimate the
model, we should expect that our estimator is of the form $\left( \widehat{%
\alpha }_{1},0\right) $. We then define constrained residuals as 
\begin{equation*}
\widehat{u}_{i}=y_{i}-\widehat{\alpha }_{1}p\left( x;q\right) .
\end{equation*}%
According to our methodology, at each step of the transformation we should
only be projecting the residuals on $p\left( x;q\right)$.

Let us analyze what happens if in the transformation we use as our $\mathbf{P%
}_{k}$ that comes from the vector $\left( p\left( x_{k};q\right) ,{p}%
_{2,2}\left( x_{k};q\right) \right) $ instead of using only $p\left( x_{k};q\right)$. In
other words, we use the test statistic 
\begin{equation*}
\widetilde{\mathcal{M}}_{n}\left( x\right) =:\frac{1}{n^{1/2}}\sum_{i=1}^{n}%
\widehat{v}_{i}\mathcal{I}_{i}\left( x\right) \text{,}
\end{equation*}%
where $\widehat{u}_{i}=y_{i}-\widehat{\alpha }_{1}p\left( x;q\right) $ and
the transformation is defined as follows: 
\begin{eqnarray*}
\widehat{v}_{i} &=&\widehat{u}_{i}-\boldsymbol{P}_{i}^{\prime
}A_{n,i}^{+}C_{n,i} \\
C_{n,i} &=&n^{-1}\sum_{k=i}^{n}\boldsymbol{P}_{k}\widehat{u}_{k};~\ \ \
A_{n,i}=n^{-1}\sum_{k=i}^{n}\boldsymbol{P}_{k}\boldsymbol{P}_{k}^{\prime }%
\text{.}
\end{eqnarray*}
Rewrite the test statistic as 
\begin{equation*}
\widetilde{\mathcal{M}}_{n}\left( x\right) =\frac{1}{n^{1/2}}%
\sum_{i=1}^{n}v_{i}\mathcal{I}_{i}\left( x\right) +\frac{1}{n^{1/2}}%
\sum_{i=1}^{n}\left( \widehat{v}_{i}-v_{i}\right) \mathcal{I}_{i}\left( x\right)
\end{equation*}%
and notice that the first term on the right-hand side converges to the
Brownian motion. The second term is negligible under the null, as
established earlier, and partly this is due to the result of Lemma \ref{bias} in Appendix B. It is rather obvious that the power of the test, as
usual, comes from that term. For the test to have power we need that under
the alternative the term $\frac{1}{n}\sum_{i=1}^{n}\left( \widehat{v}%
_{i}-v_{i}\right) \mathcal{I}_{i}\left( x\right) $ converges somewhere
different than zero on a set of a positive measure, so that once we multiply it
by $n^{1/2}$, it diverges to $\pm \infty $. We have that%
\begin{eqnarray}
\widehat{u}_{i}-u_{i} &=&\widehat{m}_{\mathcal{B}}\left( x_{i};2\right) -m_{%
\mathcal{B}}\left( x_{i};2\right) +\left( m_{\mathcal{B}}\left(
x_{i};2\right) -m(x)\right)   \notag \\
&=&\left( \widehat{\alpha }_{1}-\alpha _{1}\right) p\left( x_{i};q\right)
-\alpha _{2}{p}_{2,2}\left( x_{i};q\right) +\left( m_{\mathcal{B}}\left(
x_{i};2\right) -m(x)\right) .  \label{eq:eqpower}
\end{eqnarray}
Lemma \ref{bias} implies that when the transformation uses $\mathbf{P}_{k}$
that comes from the vector $\left( p\left( x_{k};q\right) ,{p}_{2,2}\left(
x_{k};q\right) \right) $, any linear combination $\boldsymbol{%
\mathring{p}}\left( x\right) =:a_{1}p\left( x;q\right) +a_{2}{p}_{2,2}\left(
x;q\right) $ satisfies $\left( \mathcal{T}_{n}\mathcal{B}_{n,2}\right)
\left( x\right) =0$, where $\mathcal{B}_{n,2}\left( x\right)
=n^{-1}\sum_{k=1}^{n}\boldsymbol{\mathring{p}}\left( x_{k}\right) \mathcal{I}%
_{k}\left( x\right) $. This, taking into account representation (\ref{eq:eqpower}), implies that the part of $n^{-1}\sum_{i=1}^{n}\left( \widehat{%
v}_{i}-v_{i}\right) \mathcal{I}_{i}\left( x\right) $ corresponding to $%
\left( \widehat{\alpha }_{1}-\alpha _{1}\right) p\left( x_{i};q\right)
-\alpha _{2}{p}_{2,2}\left( x_{i};q\right) $ in (\ref{eq:eqpower}) will be
zero, and its part corresponding to the bias term $m_{\mathcal{B}}\left(
x_{i};2\right) -m(x)$ in (\ref{eq:eqpower}) will be asymptotically
negligible even when multiplied by $n^{1/2}$. Thus, the power of the test
equals the trivial power because we failed to embed the binding constraint
in the transformation.

On the other hand, if in the transformation we use  only the ``explanatory''
polynomial $p\left( x;q\right) $, we would now have that the
contribution due to $\left( \widehat{\alpha }_{1}-\alpha _{1}\right) p\left(
x_{i};q\right) -\alpha _{2}{p}_{2,2}\left( x_{i};q\right) $ in $%
n^{-1}\sum_{i=1}^{n}\left( \widehat{v}_{i}-v_{i}\right) \mathcal{I}%
_{i}\left( x\right) $ is exactly as that from $\alpha _{2}{p}_{2,2}\left(
x_{i};q\right) $ and it is not equal to zero as ${p}_{2,2}\left( x_{i};q\right) $ is
not in the space generated by $p\left( x_{i};q\right) $.  In other words, %
\begin{equation*}
\frac{1}{n}\sum_{k=1}^{n}\left\{ {p}_{2,2}\left( x_{k};q\right) -\boldsymbol{%
P}_{k}^{\prime }A_{n,k}^{+}\frac{1}{n}\sum_{j=1}^{n}\boldsymbol{P}_{j}{p}%
_{2,2}\left( x_{j};q\right) \mathcal{J}_{j}\left( \widetilde{x}_{k}\right)
\right\} \mathcal{I}_{k}\left( x\right) \neq 0
\end{equation*}%
unless ${p}_{2,2}\left( x_{i};q\right) $ is proportional to ${p}_{2,2}\left(
x_{i};q\right) $, which is not the case.

\section{\textbf{BOOTSTRAP ALGORITHM}}
\label{sec:bootstrap}

One of our motivations to introduce a bootstrap algorithm for our test(s) is
that although it is pivotal, our Monte Carlo experiment suggests that they
suffer from small sample biases. When the asymptotic distribution does not
provide a good approximation to the finite sample one, a standard approach
to improve its performance is to employ bootstrap algorithms, as they
provide small sample refinements. In fact, our Monte Carlo simulation does
suggest that the bootstrap, to be described below, does indeed give a
better finite sample approximation. The notation for the bootstrap is as
usual and we shall implement the fast algorithm of WARP by \cite{GiacominiPW} in the Monte Carlo experiment.

The bootstrap is based on the following \emph{3 STEPS}.

\begin{description}
\item[\emph{STEP 1}] Compute the unconstrained residuals%
\begin{equation*}
\widetilde{u}_{i}=y_{i}-\widetilde{m}_{\mathcal{B}}\left( x_{i};L\right) 
\text{, \ \ \ }i=1,...,n
\end{equation*}%
with $\widetilde{m}_{\mathcal{B}}\left( x_{i};L\right) $ as defined in $%
\left( \ref{uncons}\right) $.

\vskip 0.1in

\item[\emph{STEP 2}] Obtain a random sample of size\emph{\ }$n$ from the
empirical distribution of $\left\{ \widetilde{u}_{i}-\frac{1}{n}%
\sum_{i=1}^{n}\widetilde{u}_{i}\right\} _{i=1}^{n}$. Denote such a sample as 
$\left\{ u_{i}^{\ast }\right\} _{i=1}^{n}$ and compute the bootstrap
analogue of the regression model using $\widehat{m}_{\mathcal{B}}\left(
x_{i};L\right) $, that is%
\begin{equation}
y_{i}^{\ast }=\widehat{m}_{\mathcal{B}}\left( x_{i};L\right) +u_{i}^{\ast }%
\text{, \ }i=1,...,n\text{.}  \label{reg_boot}
\end{equation}

\vskip0.1in

\item[\emph{STEP 3}] Compute the bootstrap analogue of $\widetilde{\mathcal{M%
}}_{n}\left( x\right) $ as 
\begin{equation*}
\widetilde{\mathcal{M}}_{n}^{\ast }\left( x\right) =:\frac{1}{n^{1/2}}%
\sum_{i=1}^{n}\widehat{v}_{i}^{\ast }\mathcal{I}_{i}\left( x\right)
\end{equation*}%
where%
\begin{equation*}
\widehat{v}_{i}^{\ast }=\widehat{u}_{i}^{\ast }-\boldsymbol{P}_{i}^{\prime
}A_{n,i}^{+}C_{n,i}^{\ast };\text{ \ }C_{n,i}^{\ast }=:C_{n,i}^{\ast }\left( 
\widetilde{x}_{i}\right) =\frac{1}{n}\sum_{k=1}^{n}\boldsymbol{P}_{k}%
\widehat{u}_{k}^{\ast }\mathcal{J}_{k}\left( \widetilde{x}_{i}\right)
\end{equation*}%
with $\widehat{u}_{i}^{\ast }=y_{i}^{\ast }-\boldsymbol{P}_{i}^{\prime
}A_{n}^{+}\left( 0\right) C_{n}^{\ast }\left( 0\right) $, $i=1,...,n$.
\end{description}

\begin{theorem}
\label{Mb_n}Under Conditions $C1-C3$, we have that for any continuous
function $g:\mathbb{R\rightarrow R}^{+}$, (in probability), 
\begin{equation*}
g\left( \widetilde{\mathcal{M}}_{n}^{\ast }\left( x\right) \right) \overset{d%
}{\Rightarrow }g\left( \mathcal{U}\left( x\right) \right) \text{.}
\end{equation*}
\end{theorem}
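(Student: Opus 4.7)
The plan is to mirror the proof strategy used for Theorems \ref{M_n} and \ref{M_nest}, but carried out conditionally on the original sample, and exploit the fact that the bootstrap data generating process in (\ref{reg_boot}) is \emph{exactly} linear in the B-spline basis. Concretely, under the bootstrap DGP, the regression function is $\widehat{m}_{\mathcal{B}}(\cdot;L)=\widehat{b}'\boldsymbol{P}_{L}(\cdot)\in\mathcal{B}_{q,L}$, so the bootstrap analogue of the bias term $m^{bias}$ in (\ref{m_bias}) is identically zero. Hence the analogue of condition $(iii)$ in (\ref{properties}) holds trivially in the bootstrap world, and only the analogues of $(i)$ and $(ii)$ need verification.

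The first step is to verify that, conditionally on the sample and with probability tending to one, the bootstrap errors $\{u_{i}^{\ast}\}$ satisfy the bootstrap counterparts of Condition $C1$. By construction, $\{u_{i}^{\ast}\}$ is i.i.d.\ conditional on the data with mean zero (by centering in \emph{STEP 2}) and conditional variance equal to the empirical variance of $\{\widetilde{u}_{i}-\bar{\widetilde{u}}\}$. Using that $\widetilde{m}_{\mathcal{B}}(\cdot;L)$ is uniformly consistent for $m(\cdot)$ (Condition $C2$--$C3$ and standard sieve bounds) together with the finite fourth moment of $u_{i}$, one obtains, in probability, $n^{-1}\sum_{i=1}^{n}(\widetilde{u}_{i}-\bar{\widetilde{u}})^{2}\to\sigma_{u}^{2}$ and $n^{-1}\sum_{i=1}^{n}(\widetilde{u}_{i}-\bar{\widetilde{u}})^{4}=O_{p}(1)$, which will supply the conditional Lindeberg condition later.

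Second, I would decompose $\widetilde{\mathcal{M}}_{n}^{\ast}(x)$ exactly as in the proof of Theorem \ref{M_nest}. Define the infeasible bootstrap counterpart
\begin{equation*}
\mathcal{M}_{n}^{\ast}(x)=\frac{1}{n^{1/2}}\sum_{i=1}^{n}v_{i}^{\ast}\mathcal{I}_{i}(x),\qquad v_{i}^{\ast}=u_{i}^{\ast}-\boldsymbol{P}_{i}'A_{n,i}^{+}\frac{1}{n}\sum_{k=i}^{n}\boldsymbol{P}_{k}u_{k}^{\ast},
\end{equation*}
and observe that, because the $x_{i}$'s are held fixed in the bootstrap, the matrix $A_{n,i}$ and the weights $\boldsymbol{P}_{i}'A_{n,i}^{+}$ are the \emph{same} objects as in the original proof. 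Thus $\{v_{i}^{\ast}\}$, ordered by the $x_{(i)}$, is a martingale difference sequence with respect to the bootstrap filtration, and the computations bounding the conditional quadratic variation $n^{-1}\sum_{i=1}^{n}E^{\ast}(v_{i}^{\ast 2}\mid\mathcal{F}_{i-1}^{\ast})$ and verifying the conditional Lindeberg condition can be transcribed verbatim from the proof of Theorem \ref{M_n}, replacing unconditional expectations by $E^{\ast}$ and using the moment bounds from Step one. A martingale functional CLT applied conditionally then yields $\mathcal{M}_{n}^{\ast}(x)\overset{weakly}{\Rightarrow}\mathcal{U}(x)$ in probability. The gap $\widetilde{\mathcal{M}}_{n}^{\ast}(x)-\mathcal{M}_{n}^{\ast}(x)$ is handled via Lemma \ref{bias}: since $\widehat{u}_{i}^{\ast}-u_{i}^{\ast}$ is an exact linear combination of $\boldsymbol{P}_{i}$ (the bootstrap regression function is itself in the span of $\boldsymbol{P}_{L}$), property $(ii)$ in (\ref{properties}) forces the contribution to vanish, with no residual bias term to control; $\widehat{\sigma}_{u}^{\ast 2}\to\sigma_{u}^{2}$ in the bootstrap probability follows from Proposition \ref{sigma_est} applied to bootstrap data.

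Finally, an application of the continuous mapping theorem to the continuous functional $g$ delivers the conclusion. The step I expect to be the main obstacle is the rigorous verification of the conditional martingale CLT: one must control, uniformly in $x$ and in probability over the sample, both the quadratic variation and the Lindeberg remainder when the recursive weights $\boldsymbol{P}_{i}'A_{n,i}^{+}$ can become large near $x=1$ because $A_{n,i}$ approaches singularity there. This is exactly the issue motivating the trimming in (\ref{A_n_i}); the bootstrap proof inherits this delicate argument and additionally requires that the bounds hold with probability tending to one over $\{(x_{i},u_{i})\}_{i=1}^{n}$, which is achieved by combining the uniform consistency of $\widetilde{m}_{\mathcal{B}}$ with the moment bounds established in Step one.
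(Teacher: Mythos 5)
Your proposal follows essentially the same route as the paper's proof: split $\widetilde{\mathcal{M}}_{n}^{\ast}$ into the infeasible bootstrap process $\mathcal{M}_{n}^{\ast}$ plus a remainder annihilated by Lemma \ref{bias} (with no bias term, since the bootstrap regression function lies exactly in the span of $\boldsymbol{P}_{L}$), then transcribe the conditional martingale-difference arguments of Theorem \ref{M_n} for the finite-dimensional limits and tightness, using $\widehat{\sigma}_{u}^{2}\rightarrow\sigma_{u}^{2}$ and the bounded empirical fourth moments of the residuals, and conclude by the continuous mapping theorem. This matches the paper's proof in both structure and key steps, so it is correct.
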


Finally, we can replace $\widehat{u}_{i}^{\ast }$ by $y_{i}^{\ast }$ in the
computation of $\widetilde{\mathcal{M}}_{n}^{\ast }\left( x\right) $. That
is,

\begin{corollary} 
\label{cor:bootstrap}	
Under Conditions $C1-C3$, we have that 
\begin{equation*}
\widetilde{\mathcal{M}}_{n}^{\ast }\left( x\right) -\widetilde{\widetilde{%
\mathcal{M}}}_{n}^{\ast }\left( x\right) =0\text{,}
\end{equation*}%
where%
\begin{equation*}
\widetilde{\widetilde{\mathcal{M}}}_{n}^{\ast }\left( x\right) =:\frac{1}{%
n^{1/2}}\sum_{i=1}^{n}\left( y_{i}^{\ast }-\boldsymbol{P}_{i}^{\prime
}A_{n,i}^{+}\frac{1}{n}\sum_{k=1}^{n}\boldsymbol{P}_{k}y_{k}^{\ast }\mathcal{%
J}_{k}\left( x_{i}\right) \right) \mathcal{I}_{i}\left( x\right) \text{.}
\end{equation*}
\end{corollary}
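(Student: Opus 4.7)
The plan is to recognize that this is a purely algebraic identity (no probabilistic limit is being taken), so the proof reduces to expanding $\widehat{v}_i^{\ast}$ using the definition of the bootstrap residual $\widehat{u}_i^{\ast}$ and then exploiting the Moore--Penrose reproducing property $\boldsymbol{P}_i^{\prime} A_{n,i}^{+} A_{n,i} = \boldsymbol{P}_i^{\prime}$ that was singled out in Section~\ref{sec:Khmlinear} (and justified by Theorem 12.3.4 in Harville, which is precisely why the paper uses the non-strict $\mathcal{J}_k(x)=\mathcal{I}(x\le x_k)$).

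Concretely, I would set $\widetilde{b}^{\ast} := A_n^{+}(0)C_n^{\ast}(0)$, so that by definition $\widehat{u}_i^{\ast} = y_i^{\ast} - \boldsymbol{P}_i^{\prime}\widetilde{b}^{\ast}$. Substituting this into $\widehat{v}_i^{\ast}=\widehat{u}_i^{\ast} - \boldsymbol{P}_i^{\prime} A_{n,i}^{+} C_{n,i}^{\ast}$ and using $C_{n,i}^{\ast} = n^{-1}\sum_k \boldsymbol{P}_k(y_k^{\ast}-\boldsymbol{P}_k^{\prime}\widetilde{b}^{\ast})\mathcal{J}_k(\widetilde{x}_i)$, a one-line rearrangement gives
\[
\widehat{v}_i^{\ast}
= y_i^{\ast} - \boldsymbol{P}_i^{\prime}\widetilde{b}^{\ast}
- \boldsymbol{P}_i^{\prime} A_{n,i}^{+}\Big(\tfrac{1}{n}\sum_{k=1}^n\boldsymbol{P}_k y_k^{\ast}\mathcal{J}_k(\widetilde{x}_i)\Big)
+ \boldsymbol{P}_i^{\prime} A_{n,i}^{+} A_{n,i}\widetilde{b}^{\ast},
\]
because $A_{n,i}=n^{-1}\sum_k \boldsymbol{P}_k\boldsymbol{P}_k^{\prime}\mathcal{J}_k(\widetilde{x}_i)$ by (\ref{notation}) and (\ref{A_n_i}).

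Applying the identity $\boldsymbol{P}_i^{\prime} A_{n,i}^{+} A_{n,i} = \boldsymbol{P}_i^{\prime}$ cancels the two $\widetilde{b}^{\ast}$ terms, leaving $\widehat{v}_i^{\ast} = y_i^{\ast} - \boldsymbol{P}_i^{\prime} A_{n,i}^{+}\,n^{-1}\sum_{k=1}^n \boldsymbol{P}_k y_k^{\ast}\mathcal{J}_k(\widetilde{x}_i)$. Multiplying by $n^{-1/2}\mathcal{I}_i(x)$ and summing over $i$ yields exactly $\widetilde{\widetilde{\mathcal{M}}}_n^{\ast}(x)$, establishing the identity $\widetilde{\mathcal{M}}_n^{\ast}(x) - \widetilde{\widetilde{\mathcal{M}}}_n^{\ast}(x) = 0$.

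There is no serious obstacle: the only non-cosmetic ingredient is the reproducing identity $\boldsymbol{P}_i^{\prime} A_{n,i}^{+} A_{n,i}=\boldsymbol{P}_i^{\prime}$, which is precisely the reason the authors chose $\mathcal{J}_k(x)=\mathcal{I}(x\le x_k)$ rather than the strict version (as explained in the discussion just after the definition of $\mathcal{T}_n$). Once that identity is in hand, the corollary is a matter of expanding and cancelling. The practical content of the statement is simply that the centering step $\widehat{u}_i^{\ast}\mapsto y_i^{\ast}-\boldsymbol{P}_i^{\prime}\widetilde{b}^{\ast}$ is absorbed by the Khmaladze transformation itself, so one can skip computing the fitted residuals in the bootstrap loop.
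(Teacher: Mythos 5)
Your argument is correct and is essentially the paper's own: the paper deduces the corollary "immediately" from Lemma \ref{bias}, whose proof is exactly the cancellation you perform, namely writing $\widehat{u}_i^{\ast}=y_i^{\ast}-\boldsymbol{P}_i^{\prime}\widetilde{b}^{\ast}$ and annihilating the fitted B-spline component via $\boldsymbol{P}_i^{\prime}A_{n,i}^{+}A_{n,i}=\boldsymbol{P}_i^{\prime}$ (Harville, Theorem 12.3.4, which is why $\mathcal{J}_k(x)=\mathcal{I}(x\leq x_k)$ is used). You have simply unrolled that lemma inline instead of citing it, and your resulting expression with $\mathcal{J}_k(\widetilde{x}_i)$ is the form consistent with $A_{n,i}=A_n(\widetilde{x}_i)$, matching the intended statement.
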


The proof of Corollary \ref{cor:bootstrap} is immediate by Lemma \ref{bias} in the Appendix and therefore is omitted.

\section{\textbf{MONTE CARLO EXPERIMENTS AND EMPIRICAL EXAMPLES}}

\label{sec:simulations_and_data}

\subsection{\textbf{MONTE CARLO EXPERIMENTS}}

$\left. {}\right. $

In this section we present the results of several computational experiments.
All the results in this section are given for cubic splines with different
number of knots. We present the results for \emph{B-splines} as well as for 
\emph{P-splines} with penalties on the second differences of coefficients.
The penalty parameter is chosen by cross-validation in the unconstrained
estimation. 
In the tables \textquotedblleft \emph{KS}\textquotedblright\ refers to the
Kolmogorov-Smirnov test statistic, \textquotedblleft \emph{CvM}%
\textquotedblright\ refers to the \emph{Cram\'{e}r-von Mises} test statistic
and \textquotedblleft \emph{AD}\textquotedblright\ to the Anderson-Darling
integral test statistic. All three test statistics are based on a Brownian
bridge. $L^{\prime }+1$ denotes the number of equidistant knots (including
the boundary points) on the interval of interest. For example, when $%
L^{\prime }=6$ and the interval is $[0,1]$, we consider knots $%
0,1/6,1/3,1/2,2/3,5/6,1$. In the implementation of P-splines in simulations,
every simulation draw will give a different cross-validation parameter (we
use ordinary cross validation described in Eilers and Marx, $1996$). In our
simulation results for each $L^{\prime }$ we use a modal value of these
cross-validation parameters.

In all the scenarios below 
\begin{equation*}
X\sim \mathcal{U}[0,1],\quad U\sim \mathcal{N}(0,\sigma ^{2}),\quad U\perp X.
\end{equation*}
In Scenarios 1, 3-5 the interval of interest if $[0,1]$ whereas in Scenario
2 of U-shape we consider individually intervals $[0,s_0]$ and $[s_0,1]$ with 
$s_0$ being the switch point.

In the WARP bootstrap implementation, the demeaned residuals and $x$ are
drawn independently.

\vskip 0.1in

\noindent \textbf{Scenario 1 (test for monotonicity)}. We take the following regression function: 
\begin{equation*}
m\left( x\right) =x^{\frac{13}{4}}\text{,}\quad x\in \left[ 0,1\right] \text{%
.}
\end{equation*}%
The results are summarized in Table \ref{MonteCarlo1}.

\vskip 0.05in

\begin{table}[tbp]
\par
\begin{center}
\begin{tabular}{llllll}
\hline\hline
Setting & Method \hspace*{0.2in} & \multicolumn{2}{c}{B-splines} & 
\multicolumn{2}{c}{P-splines} \\ 
&  & 10\% & 5\% & 10\% & 5\% \\ \hline
$L^{\prime }=6$ & KS bootstrap & 0.113 & 0.054 & 0.0965 & 0.051 \\ 
$N=1000$ & CvM bootstrap & 0.1035 & 0.0475 & 0.093 & 0.0465 \\ 
$\sigma=0.25$ & AD bootstrap & 0.1065 & 0.054 & 0.1005 & 0.054 \\ \hline
$L^{\prime }=9$ & KS bootstrap & 0.102 & 0.044 & 0.119 & 0.052 \\ 
$N=1000$ & CvM bootstrap & 0.101 & 0.048 & 0.11 & 0.0455 \\ 
$\sigma=0.25$ & AD bootstrap & 0.0985 & 0.042 & 0.1045 & 0.048 \\ \hline
$L^{\prime }=14$ & KS bootstrap & 0.0945 & 0.043 & 0.105 &  0.0555\\ 
$N=1000$ & CvM bootstrap & 0.098 & 0.0425 & 0.0955 &  0.045\\ 
$\sigma=0.25$ & AD bootstrap & 0.093 & 0.043 &  0.096 & 0.049 \\ \hline
$L^{\prime }=19$ & KS bootstrap & 0.089 & 0.0485 & 0.1055  & 0.059 \\ 
$N=1000$ & CvM bootstrap & 0.105 & 0.0555 & 0.1035 & 0.0505 \\ 
$\sigma=0.25$ & AD bootstrap & 0.1085 & 0.0545 &  0.106 & 0.0495 \\ \hline
\end{tabular}
\vspace*{0.05in}
\end{center}
\caption{{\protect\small Tests for monotonically increasing regression
function in Scenario 1. Rejection rates in 2000 simulations. $L^{\prime }+1$
denotes the number of equidistant knots on $[0,1]$. $N$ denotes the number
of observations in each simulation. $\protect\sigma$ is the standard
deviation in the error distribution.}}
\label{MonteCarlo1}
\end{table}

\vskip 0.1in

\noindent \textbf{Scenario 2 (test for U-shape)}. The regression function is defined
as 
\begin{align*}
m(x) & =10 \left(\log(1+x)-0.33\right)^2.
\end{align*}
The graph of this function is U-shaped with the switch point at $%
s_0=e^{0.33}-1$. In simulations $s_0$ is taken to be known.

The results are summarized in Table \ref{MonteCarloU2}. We use two different 
\emph{B-splines} -- one on $[0,s_{0}]$ and the other on $[s_{0},1]$. We
analyze the properties of the testing procedure in two approaches. In the
first approach additional restrictions are imposed for the two 
B-splines to be joined continuously at $s_{0}$, and in the second approach
these two B-splines are 
joined smoothly at $s_{0}$ (see details in Example 2).

\begin{table}[tbp]
\par
\begin{center}
\begin{tabular}{llllllllll}
\hline\hline
&  & \multicolumn{4}{c}{Continuously joined} & \multicolumn{4}{c}{Smoothly
joined} \\ \hline
Setting & Method \hspace*{0.2in} & \multicolumn{2}{c}{B-splines} & 
\multicolumn{2}{c}{P-splines} & \multicolumn{2}{c}{B-splines} & 
\multicolumn{2}{c}{P-splines} \\ 
&  & 10\% & 5\% & 10\% & 5\% & 10\% & 5\% & 10\% & 5\% \\ \hline
$L^{\prime }=4$ & KS & 0.0935 & 0.048 & 0.113 & 0.051 & 0.098 & 0.062 & 0.1105
& 0.055 \\ 
$N=1000$ & CvM & 0.106 & 0.046 & 0.1 & 0.0515 & 0.101 & 0.0575 &  0.1005 & 0.05
 \\ 
$\sigma=0.25$ & AD & 0.107 & 0.048 & 0.098 & 0.055 & 0.1015 & 0.0615 & 0.094
&  0.0505\\ 
\vspace*{0.02in} &  &  &  &  &  &  &  &  &  \\ 
$L^{\prime }=6$ & KS & 0.1105 & 0.0555 & 0.112 & 0.051 & 0.107 & 0.0575 & 0.0935
 &  0.0495\\ 
$N=1000$ & CvM & 0.101 & 0.0585 & 0.099 & 0.0525 & 0.1 & 0.0575 & 0.0955 & 0.048
\\ 
$\sigma=0.25$ & AD & 0.1015 & 0.0565 & 0.098 & 0.047 & 0.1055 & 0.055 & 0.0965 & 0.0485 \\ 
 \hline
\end{tabular}%
\end{center}
\caption{{\protect\small Tests for U-shape with the switch at $%
s_0=e^{0.33}-1 $ in Scenario 2. Rejection rates in 2000 simulations. $%
L^{\prime }+1$ denotes the number of equidistant knots on each subinterval $%
[0,s_0]$ and $[s_0,1]$ . $N$ denotes the number of observations in each
simulation. $\protect\sigma $ is the standard deviation in the error
distribution.}}
\label{MonteCarloU2}
\end{table}

\vskip 0.1in

\noindent \textbf{Scenario 3 (analysis of power of the test)}. Take the
regression function 
\begin{align*}
m(x) & =(10x-0.5)^3 - \exp(-100(x-0.25)^2))\cdot \mathcal{I}(x<0.5) \\
& + (0.1(x-0.5)-\exp(-100(x-0.25)^2))\cdot \mathcal{I}(x>=0.5)
\end{align*}
and depicted in Figure \ref{fig:dip}. As expected, the power of the test depends on the variance of the error. The
results are summarized in Table \ref{MonteCarlodip}.
\begin{figure}[ht!]
\includegraphics[width=0.4\textwidth]{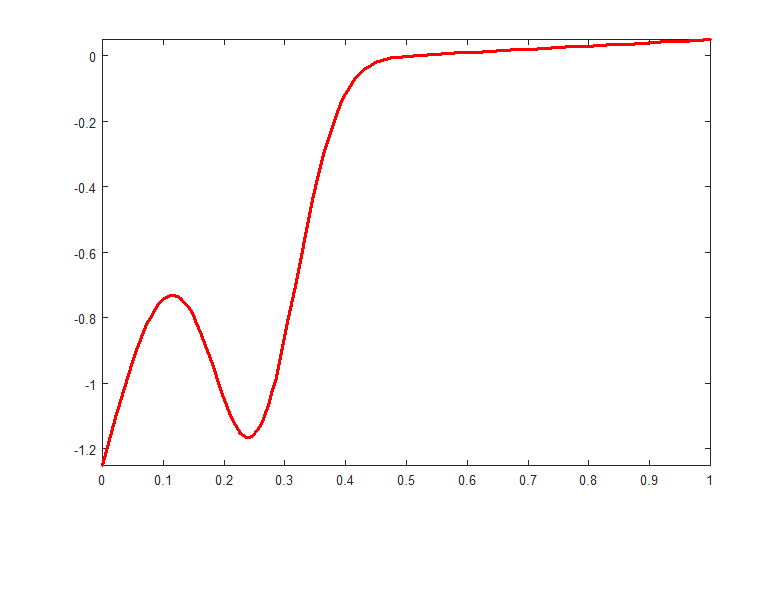}
\caption{{\protect\small Plot of the regression function in Scenario 3.}}
\label{fig:dip}
\end{figure}

\begin{table}[tbp]
\par
\begin{center}
\begin{tabular}{llllll}
\hline\hline
Setting & Method \hspace*{0.2in} & \multicolumn{2}{c}{B-splines} & 
\multicolumn{2}{c}{P-splines} \\ 
&  & 10\% & 5\% & 10\% & 5\% \\ \hline
$L^{\prime }=6$ & KS bootstrap & 1 & 0.998 & 1 & 0.9885 \\ 
$N=1000$ & CvM bootstrap & 0.9155 & 0.7335 & 0.9975 & 0.9895 \\ 
$\sigma=0.5$ & AD bootstrap & 0.985 & 0.9375 &  0.9995 & 0.998 \\ 
\vspace*{0.05in} &  &  &  &  &  \\ 
$L^{\prime }=9$ & KS bootstrap & 0.93 & 0.823 & 0.999 & 0.998 \\ 
$N=1000$ & CvM bootstrap & 0.862 & 0.766 & 0.998 & 0.9935 \\ 
$\sigma=0.5$ & AD bootstrap & 0.9195 & 0.8085 &  0.9985 & 0.9935 \\ 
\vspace*{0.05in} &  &  &  &  &  \\ 
$L^{\prime }=12$ & KS bootstrap & 0.864 & 0.8175 & 0.996 & 0.9885 \\ 
$N=1000$ & CvM bootstrap & 0.8505 & 0.7895 & 0.989  &  0.9725\\ 
$\sigma=0.5$ & AD bootstrap & 0.8655 & 0.799 & 0.9885 & 0.974 \\ 
\vspace*{0.05in} &  &  &  &  &  \\ 
$L^{\prime }=19$ & KS bootstrap & 0.639 & 0.5375 & 0.9815 & 0.9515 \\ 
$N=1000$ & CvM bootstrap & 0.5295 & 0.395 & 0.9435 & 0.8795 \\ 
$\sigma=0.5$ & AD bootstrap & 0.571 & 0.428 & 0.9445 & 0.892 \\ \hline
$L^{\prime }=6$ & KS bootstrap & 1 & 1 & 1 & 1 \\ 
$N=1000$ & CvM bootstrap & 1 & 1 & 1 & 1 \\ 
$\sigma=0.25$ & AD bootstrap & 1 & 1 & 1 & 1 \\ 
\vspace*{0.05in} &  &  &  &  &  \\ 
$L^{\prime }=9$ & KS bootstrap & 1 & 1 & 1 & 1 \\ 
$N=1000$ & CvM bootstrap & 1 & 1 & 1 & 1 \\ 
$\sigma=0.25$ & AD bootstrap & 1 & 1 & 1 & 1 \\ 
\vspace*{0.05in} &  &  &  &  &  \\ 
$L^{\prime }=12$ & KS bootstrap & 0.9995 & 0.9995 & 1 & 1 \\ 
$N=1000$ & CvM bootstrap & 0.996 & 0.9945 & 1 & 1 \\ 
$\sigma=0.25$ & AD bootstrap & 1 & 0.996 & 1 & 1 \\ 
\vspace*{0.05in} &  &  &  &  &  \\ 
$L^{\prime }=19$ & KS bootstrap & 0.996 & 0.986 & 1 & 1 \\ 
$N=1000$ & CvM bootstrap & 0.9155 & 0.836 & 1 & 1 \\ 
$\sigma=0.25$ & AD bootstrap & 0.955 & 0.891 & 1 & 1 \\ \hline
\end{tabular}
\vspace*{0.05in}
\end{center}
\caption{{\protect\small Tests for monotonicity in Scenario 3. Rejection
rates in 2000 simulations. $L^{\prime }+1$ denotes the number of equidistant
knots on $[0,1]$. $N$ denotes the number of observations in each simulation. 
$\protect\sigma$ is the standard deviation in the error distribution. }}
\label{MonteCarlodip}
\end{table}

The power of monotonicity tests based on this regression function is
considered in \cite{GhosalSenVV} and a similar
regression function is considered in \cite{HallHeckman}. Note that \cite{GhosalSenVV} considers smaller sample sizes and also smaller
standard deviation of noise with $\sigma=0.1$.

\vskip 0.1in

\noindent \textbf{Scenario 4 (analysis of power of the test)}. The
regression function 
\begin{equation*}
m(x) = x + 0.415\exp(-ax^2), \quad a>0.
\end{equation*}
and depicted in Figure \ref{fig:dipother}. The left-hand side graph in
Figure \ref{fig:dipother} is for the case $a=50$ and the right-hand side
graph in Figure \ref{fig:dipother}. In the latter case the non-monotonicity
dip is smaller. These situations are considered to be challenging for
monotonicity tests as these functions are somewhat close to the set of
monotone functions (in any conventional metric). As expected, the power of
the test depends on the value of parameter $a$ and also depends on the
variance of the error. The results are summarized in Table \ref{MonteCarlodipother}.

\begin{figure}[!h]
\begin{center}
\begin{minipage}{\linewidth}
\begin{minipage}{0.5\linewidth}
  \includegraphics[width=0.95\linewidth]{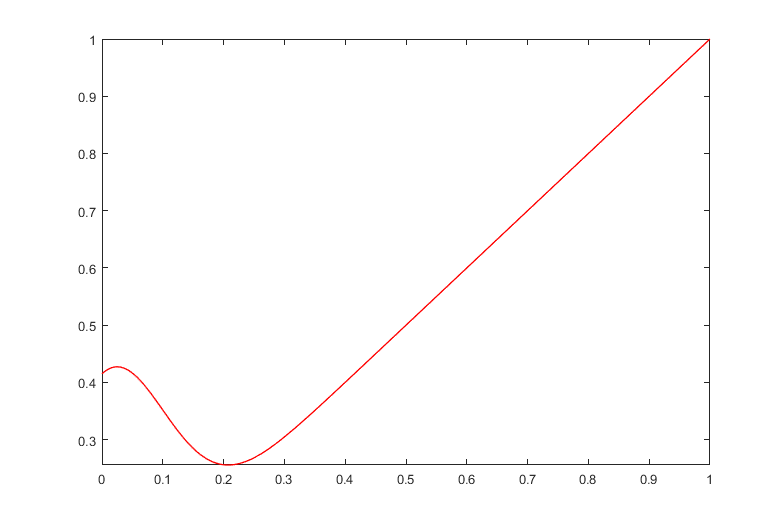}
\end{minipage}
\begin{minipage}{0.5\linewidth}
\includegraphics[width=0.95\linewidth]{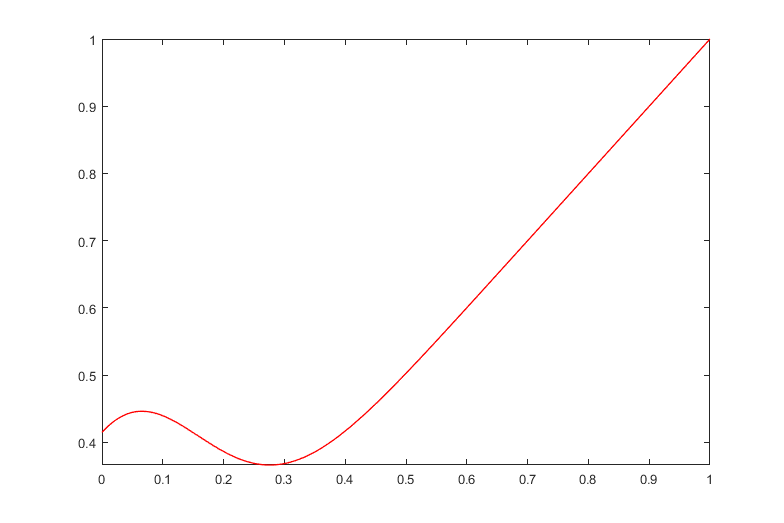}
\end{minipage}
\end{minipage}
\end{center}
\caption{{\protect\small Plot of the regression function in Scenario 4. The
left-hand side graph is for $a=50$ and the right-hand side graph is for $%
a=20 $.}}
\label{fig:dipother}
\end{figure}

\begin{table}[tbp]
\par
\begin{center}
\begin{tabular}{llllllllll}
\hline\hline
&  & \multicolumn{4}{c}{$a=50$} & \multicolumn{4}{c}{$a=20$} \\ \hline
Setting & Method \hspace*{0.2in} & \multicolumn{2}{c}{B-splines} & 
\multicolumn{2}{c}{P-splines} & \multicolumn{2}{c}{B-splines} & 
\multicolumn{2}{c}{P-splines} \\ 
&  & 10\% & 5\% & 10\% & 5\% & 10\% & 5\% & 10\% & 5\% \\ \hline
$L^{\prime }=6$ & KS & 0.382 & 0.2425 & 0.575 & 0.423 & 0.25 & 0.139 & 0.338 
 &  0.216 \\ 
$N=1000$ & CvM & 0.3385 & 0.2245 & 0.5165 &  0.384 & 0.2465 & 0.1335 & 0.311  & 0.212 
 \\ 
AD $\sigma=0.5$ & AD & 0.434 & 0.228 & 0.6015 & 0.457 & 0.2475 & 0.1395 & 0.3355 
 &  0.221 \\ 
\vspace*{0.02in} &  &  &  &  &  &  &  &  &  \\ 
$L^{\prime }=9$ & KS & 0.381 & 0.2545 &  0.572 & 0.4655 & 0.2 & 0.124 & 0.3405 
&  0.221\\ 
$N=1000$ & CvM & 0.363 & 0.2355 & 0.536 & 0.3915 & 0.2055 & 0.132 &  0.3385 & 0.206 
 \\ 
$\sigma=0.5$ & AD & 0.4735 & 0.3165 & 0.6095 & 0.4555 & 0.228 & 0.145 & 0.348 
&  0.206 \\ 
\vspace*{0.02in} &  &  &  &  &  &  &  &  &  \\ 
$L^{\prime }=12$ & KS & 0.3975& 0.27 & 0.5995 & 0.4745 & 0.22 & 0.1375 & 0.343 
 &  0.2185 \\ 
$N=1000$ & CvM & 0.3905 & 0.2565 & 0.5545 & 0.4075 & 0.2335 & 0.152 &  0.3215 & 0.2065 
 \\ 
$\sigma=0.5$ & AD & 0.486 & 0.3505 &  0.614 & 0.5035 & 0.2525 & 0.1625 & 0.3415 & 0.2195 
 \\ 
\vspace*{0.02in} &  &  &  &  &  &  &  &  &  \\ 
$L^{\prime }=19$ & KS & 0.4045 & 0.284 & 0.5905 & 0.476 & 0.2185 & 0.1235 & 0.3455 
 &  0.2265 \\ 
$N=1000$ & CvM & 0.418 & 0.308 & 0.5525 & 0.414 & 0.232 & 0.145 &  0.318 & 0.1995 
 \\ 
$\sigma=0.5$ & AD & 0.497 & 0.37 & 0.6125 & 0.4915 & 0.2485 & 0.155 &  0.341 & 0.2205 
 \\ \hline
$L^{\prime }=6$ & KS & 0.9 & 0.8405 & 0.986 & 0.9625 & 0.5795 & 0.4605 & 0.756 
 &  0.6415 \\ 
$N=1000$ & CvM & 0.8295 & 0.7025 & 0.9615 & 0.913 & 0.5895 & 0.4195 &  0.741 & 0.626 
 \\ 
$\sigma=0.25$ & AD & 0.939 & 0.854 & 0.9835  & 0.9665 & 0.608 & 0.4505 & 0.7275
&  0.6295 \\ 
\vspace*{0.02in} &  &  &  &  &  &  &  &  &  \\ 
$L^{\prime }=9$ & KS & 0.919 & 0.8385 & 0.986 & 0.9685 & 0.484 & 0.3355 & 0.6805 
 & 0.5645 \\ 
$N=1000$ & CvM & 0.8355 & 0.7275 & 0.966 & 0.911 & 0.46 & 0.347 &  0.6495 & 0.5065 
 \\ 
$\sigma=0.25$ & AD & 0.937 & 0.8695 & 0.99 & 0.9615 & 0.4995 & 0.374 &  0.6655 &  0.512 \\ 
\vspace*{0.02in} &  &  &  &  &  &  &  &  &  \\ 
$L^{\prime }=12$ & KS & 0.9235 & 0.842 &  0.9865 & 0.9705 & 0.461 & 0.334 & 0.6785 
 &  0.5585 \\ 
$N=1000$ & CvM & 0.863 & 0.756 & 0.97 & 0.9325 & 0.436 &  0.318 & 0.6525 
&  0.4965 \\ 
$\sigma=0.25$ & AD & 0.951 & 0.8895 & 0.9865 & 0.9705 & 0.492 & 0.3575 & 0.658 
 &  0.517 \\ 
\vspace*{0.02in} &  &  &  &  &  &  &  &  &  \\ 
$L^{\prime }=19$ & KS & 0.925 & 0.8505 & 0.9865 & 0.974 & 0.4835 & 0.3505 & 0.698 
 & 0.585  \\ 
$N=1000$ & CvM & 0.8835 & 0.802 & 0.9745 &  0.9355 & 0.4595 & 0.345 & 0.6625 & 0.495 \\ 
$\sigma=0.25$ & AD & 0.941 & 0.8985 & 0.987 & 0.9695 & 0.486 & 0.3665 &  0.666 & 0.5105 
 \\ \hline
$L^{\prime }=6$ & KS & 1 & 1 &  1 & 1  & 1 & 0.998 &  1 & 1 \\ 
$N=1000$ & CvM & 1 & 1 &  1 & 1  & 0.9985 & 0.9965 & 1  & 1 \\ 
$\sigma=0.1$ & AD & 1 & 1 &  1 & 1  & 0.9995 & 0.9975 & 1 & 1 \\ 
\vspace*{0.02in} &  &  &  &  &  &  &  &  &  \\ 
$L^{\prime }=9$ & KS & 1 & 1 &  1 & 1  & 0.9935 & 0.9875 & 1 & 1 \\ 
$N=1000$ & CvM & 1 & 1 &  1 & 1  & 0.9915 & 0.9825 & 1 & 0.9955 \\ 
$\sigma=0.1$ & AD & 1 & 1 &  1 & 1  & 0.9915 & 0.9865 &  1 &  0.997 \\ 
\vspace*{0.02in} &  &  &  &  &  &  &  &  &  \\ 
$L^{\prime }=12$ & KS & 1 & 1 &  1 & 1  & 0.968 & 0.9535 &  0.9995 &  0.998\\ 
$N=1000$ & CvM & 1 & 1 &  1 & 1  & 0.9539 & 0.932 & 0.9975 &  0.995 \\ 
$\sigma=0.1$ & AD & 1 & 1 &  1 & 1  &  0.969 & 0.951 & 0.998 & 0.997 \\ 
\vspace*{0.02in} &  &  &  &  &  &  &  &  &  \\ 
$L^{\prime }=19$ & KS & 1 & 1 &  1 & 1  & 0.973 & 0.9515 &  0.9995 &  0.9995 \\ 
$N=1000$ & CvM & 1 & 1 &  1 & 1  & 0.9525 & 0.9285 &  0.9985 &  0.9955\\ 
$\sigma=0.1$ & AD & 1 & 1 &  1 & 1  & 0.966 & 0..948 &  0.999 &  0.997\\ \hline
\end{tabular}%
\end{center}
\caption{{\protect\small Tests for monotonicity in Scenario 4. Rejection
rates in 2000 simulations. $L^{\prime }+1$ denotes the number of equidistant
knots on $[0,1]$ (including 0 and 1). $N$ denotes the number of observations
in each simulation. $\protect\sigma $ is the standard deviation in the error
distribution.}}
\label{MonteCarlodipother}
\end{table}
The power of monotonicity tests based on this regression function is
examined in \cite{GhosalSenVV} and a similar regression function was considered in \cite{Bowman_etal}. 
Note that \cite{GhosalSenVV} uses smaller sample sizes and also only $a=50$ and $\sigma=0.1$ to analyze power implications.

\vskip 0.1in

\noindent \textbf{Scenario 5 (test for log-convexity)}. We take the following regression function: 
\begin{equation*}
m\left( x\right) =\exp(x^2), \quad x\in \left[ 0,1\right] \text{%
.}
\end{equation*}%
The results are summarized in Table \ref{MonteCarlo:logconvex}. In this case, the results for P-splines are the same as for B-splines as the cross-validation criterion indicated 0 as the optimal penalty parameter in the overwhelming majority of simulations. 

\vskip 0.05in

\begin{table}[tbp]
\par
\begin{center}
\begin{tabular}{llllll}
\hline\hline
Setting & Method \hspace*{0.2in} & \multicolumn{2}{c}{B-splines} \\ 
&  & 10\% & 5\%  \\ \hline
$L^{\prime }=6$ & KS bootstrap & 0.1045 & 0.0535  \\ 
$N=1000$ & CvM bootstrap & 0.1015 & 0.054  \\ 
$\sigma=0.25$ & AD bootstrap & 0.1015 & 0.0495  \\ \hline
$L^{\prime }=9$ & KS bootstrap & 0.098 & 0.0435  \\ 
$N=1000$ & CvM bootstrap & 0.1 & 0.0445  \\ 
$\sigma=0.25$ & AD bootstrap & 0.1 & 0.048  \\ \hline
$L^{\prime }=12$ & KS bootstrap & 0.1135 & 0.053  \\ 
$N=1000$ & CvM bootstrap & 0.0935 & 0.0465  \\ 
$\sigma=0.25$ & AD bootstrap & 0.095 & 0.0495  \\ \hline
\end{tabular}
\vspace*{0.05in}
\end{center}
\caption{{\protect\small Tests for log-convexity in Scenario 5. Rejection rates in 2000 simulations. $L^{\prime }+1$
denotes the number of equidistant knots on $[0,1]$. $N$ denotes the number
of observations in each simulation. $\protect\sigma$ is the standard
deviation in the error distribution.}}
\label{MonteCarlo:logconvex}
\end{table}

\subsection{\textbf{APPLICATIONS}}

$\left. {}\right.$

\noindent 1. \textbf{Hospital data} Here we use data on hospital finance for
332 hospitals in California in 2003.\footnote{%
The dataset is from 
https://www.kellogg.northwestern.edu/faculty/dranove/\newline
htm/dranove/coursepages/mgmt469.htm. The original dataset is for 333
hospitals but we had to remove the observation that had missing information
about administrative expenses.} The data include many variables related to
hospital finance and hospital utilization. We are interested in analyzing
the effect of revenue derived from patients on administrative expenses.

Figure \ref{fig:Hospital} is a scatter plot of the logarithm of patient
revenue and the logarithm of administrative expenses with the fitted curve
obtained using cubic \emph{B-splines} with $L^{\prime }+1=5$ uniform knots
in the range of values of the log of patient revenue. The fitted cure is
obtained under the monotonicity restriction.

\begin{figure}[!h]
\includegraphics[width=0.6\textwidth]{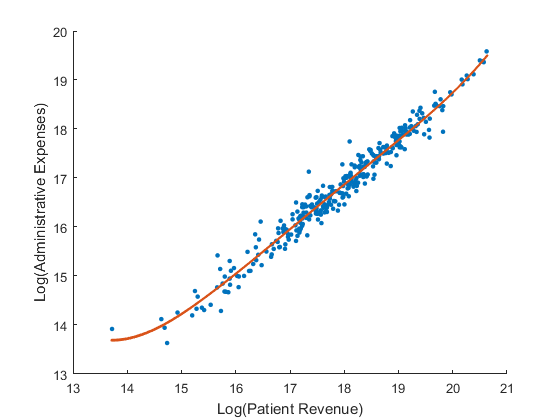}
\caption{{\protect\small Hospital data. Plot of the log of administrative
expenses and the log of patient revenue and the constrained (under
monotonicity) fit by cubic B-splines with $L^{\prime }+1=5$ uniform knots in
the domain of the log of patient revenue.}}
\label{fig:Hospital}
\end{figure}

We conduct tests for the following hypotheses: (\textbf{a}) monotonicity; (%
\textbf{b}) convexity; (\textbf{c}) monotonicity and convexity.

In order to correct for heteroscedasticity of the errors, we estimate the
scedastic function $\widehat{\sigma }^{2}(x)$ using residuals obtained in
the unconstrained estimation using cubic \emph{B-splines} with the same set
of knots. The scedastic function $\widehat{\sigma }^{2}(x)$ is estimated by
regressing the logarithm of the squared unconstrained residuals on a linear
combination of first-order \emph{B-splines} with $6$ knots in the domain of
the log of patient revenue.

We then consider the constrained residuals divided by $\widehat{\sigma }(x)$
when calculating \emph{KS}, \emph{CvM} and \emph{AD} test statistics and
unconstrained residuals divided by $\widehat{\sigma }(x)$ when drawing
bootstrap samples. After a bootstrap sample of residuals is drawn, we
multiply each residual by the corresponding $\widehat{\sigma }(x)$ when
generating a bootstrap sample of observations of the dependent variable.

We implement the testing procedure by conducting the Khmaladze
transformation both from the right end of the support (as is described
theoretically in this paper) and from the left end of the support and
obtained extremely similar results. More specifically, we only report
results when the transformation is conducted from the right end of the
support.

In the case of $P$\emph{-splines}, we use the same \emph{B-spline} basis,
take the second-order penalty and choose the penalization constant using the
ordinary cross-validation criterion as in Eilers and Marx (1996). The
penalty enters unconstrained optimization problems as well as constrained
ones.

Tables \ref{table:HospA}-\ref{table:HospC} present results of our testing.
Namely, Table \ref{table:HospA} shows test statistics for the null
hypothesis of the monotonically increasing regression function and also
bootstrap critical values using both \emph{B-splines} and \emph{P-splines}.
Table \ref{table:HospB} presents analogous results for the null hypothesis
of convexity of the regression function. Table \ref{table:HospC} gives
results for the joint null hypothesis of monotonicity and convexity.

\begin{table}[tbp]
\par
\begin{center}
\begin{tabular}{ccccccccc}
\hline\hline
\quad & \multicolumn{4}{c}{B-splines} & \multicolumn{4}{c}{P-splines} \\ 
\quad & Test statistic & \multicolumn{3}{c}{Bootstrap critical value} & Test
statistic & \multicolumn{3}{c}{Bootstrap critical value} \\ 
Method & \quad & 10\% & 5\% & 1\% & \quad & 10\% & 5\% & 1\% \\ \hline
KS & 1.0365 & 1.1801 & 1.3028 & 1.6393 & 0.7965 & 0.8502 & 0.9246 & 1.1044 \\ 
CvM & 0.2842 & 0.3428 & 0.45 & 0.7656 & 0.0614 & 0.1113 & 0.1356 & 0.2138  \\ 
AD & 1.5477 & 1.739 & 2.2068 & 3.5445 & 0.4691 & 0.664 & 0.8037 & 1.1031 \\ \hline
\end{tabular}
\end{center}
\par
\caption{{\protect\small Hospital data. Test statistics and bootstrap
critical values under the null hypothesis of monotonicity of the regression
function. Bootstrap critical values are from 400 bootstrap replications.}}
\label{table:HospA}
\end{table}

\begin{table}[tbp]
\par
\begin{center}
\begin{tabular}{ccccccccc}
\hline\hline
\quad & \multicolumn{4}{c}{B-splines} & \multicolumn{4}{c}{P-splines} \\ 
\quad & Test statistic & \multicolumn{3}{c}{Bootstrap critical value} & Test
statistic & \multicolumn{3}{c}{Bootstrap critical value} \\ 
Method & \quad & 10\% & 5\% & 1\% & \quad & 10\% & 5\% & 1\% \\ \hline
KS & 0.9064 & 1.3918 & 1.5949 & 1.9306 & 0.7678 & 1.3462 & 1.5503 & 1.9284 \\ 
CvM & 0.1188 & 0.5137 & 0.6783 & 1.1675 & 0.1302 & 0.4994 & 0.6404 & 1.1974  \\ 
AD & 0.7103 & 2.7328 & 3.6416 & 7.1427 & 0.7241 & 2.5773 & 3.3495 & 7.119 \\ \hline
\end{tabular}
\end{center}
\par
\caption{{\protect\small Hospital data. Test statistics and bootstrap
critical values under the null hypothesis of convexity of the regression
function. Bootstrap critical values are from 400 bootstrap replications.}}
\label{table:HospB}
\end{table}

\begin{table}[tbp]
\par
\begin{center}
\begin{tabular}{ccccccccc}
\hline\hline
\quad & \multicolumn{4}{c}{B-splines} & \multicolumn{4}{c}{P-splines} \\ 
\quad & Test statistic & \multicolumn{3}{c}{Bootstrap critical value} & Test
statistic & \multicolumn{3}{c}{Bootstrap critical value} \\ 
Method & \quad & 10\% & 5\% & 1\% & \quad & 10\% & 5\% & 1\% \\ \hline
KS & 1.218 & 1.4347 & 1.6063 & 1.9643 & 1.092 & 1.3568 & 1.6256 & 1.903 \\ 
CvM & 0.2497 & 0.5488 & 0.7375 & 1.4689 & 0.1895 & 0.4785 & 0.7751 & 1.221  \\ 
AD & 1.2513 & 3 & 3.8429 & 7.0125 & 0.9857 & 2.6499 & 3.8105 & 6.9091 \\ \hline
\end{tabular}
\end{center}
\par
\caption{{\protect\small Hospital data. Test statistics and bootstrap
critical values under the null hypothesis of both monotonicity and convexity
of the regression function. Bootstrap critical values are from 400 bootstrap
replications.}}
\label{table:HospC}
\end{table}

As we can see from Tables \ref{table:HospA}--\ref{table:HospC}, we do not
reject any of the three hypotheses even at 10\% level.

$\left. {}\right.$

\noindent 2. \textbf{Energy consumption in the Southern region of Russia.}

The data are on daily energy consumption (in MWh) and average daily
temperature (in Celsius) in the Southern region of Russia in the period from
February 1, 2016 till January 31, 2018. The data have been downloaded from
the official website of System Operator of the Unified Energy System of
Russia.\footnote{%
http://so-ups.ru/}

We provide tests for U-shape with a switch at ${17.6}^{\circ }$ and
convexity using the approaches outlined in section \ref{section:BPsplines}
in the context of Examples 2 and 1, respectively. In order to correct for
heteroscedasticity of the errors, we estimate the scedastic function $%
\widehat{\sigma }^{2}(x)$ using residuals obtained in the unconstrained
estimation using \emph{B-splines} (or \emph{P-splines}, respectively). The
scedastic function is estimated using cubic \emph{B-splines} with 6 uniform
knots and in the form of 
\begin{equation*}
\sigma ^{2}(x)=\left( \sum_{k=1}^{8}c_{k}B_{k,8}\right) ^{2}.
\end{equation*}

Figure \ref{fig:SFO} gives scatter plots of the data together with fitted
curves obtained under the U-shape constraint with the switch at $s_{0}={17.6}%
^{\circ }$. This constraint fit is obtained in accordance with the technique
in section \ref{section:BPsplines}. Namely, we consider individual \emph{%
B-spline} fits on intervals $[\underline{x},s_{0}]$ and $[s_{0},\overline{x}%
] $, where $\underline{x}$ and $\overline{x}$ are respectively lowest and
highest values of the temperature in the sample. On each interval we use $%
L^{\prime }+1=5$ uniform knots. The left-hand side figure only imposes the
continuity of the fitted curve at the switch point, whereas the right-hand
side figure imposes continuous differentiability.

\begin{figure}[!h]
\begin{center}
\vspace{0in} 
\begin{minipage}{\linewidth}
\begin{minipage}{0.5\linewidth}
  \includegraphics[width=0.95\linewidth]{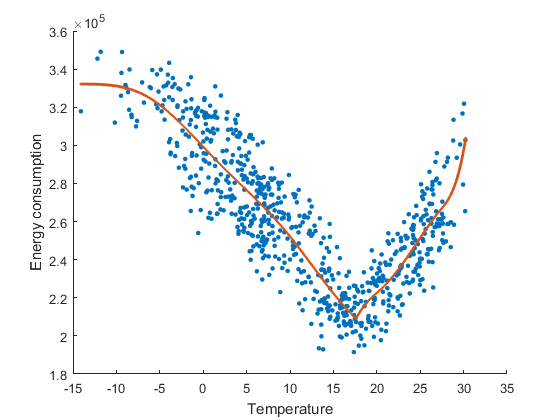}
\end{minipage}
\begin{minipage}{0.5\linewidth}
\includegraphics[width=0.95\linewidth]{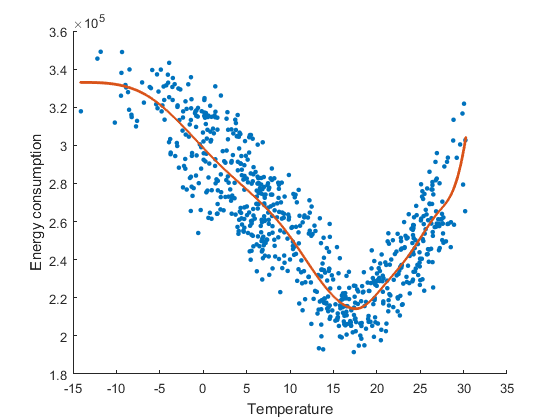}
\end{minipage}
\end{minipage}
\end{center}
\caption{{\protect\small Energy consumption data. Plot of temperature and
energy consumption and the constrained fit (under U-shape with the switch at 
${17.6}^{\circ}$) using cubic B-spline with 5 uniform knots on each
subinterval of temperature values. On the left-hand side the fitted curve is
continuous at the switch point. On the right-hand side the fitted curve is
continuously differentiable at the switch point.}}
\label{fig:SFO}
\end{figure}

Tables \ref{table:SFOA}-\ref{table:SFOC} present results of our testing.
Namely, Table \ref{table:SFOA} shows test statistics for the null hypothesis
of \emph{U-shaped} regression function and also bootstrap critical values
using both \emph{B-splines} and \emph{P-splines} in case when two \emph{%
B-spline} curves are joined at the switch point in a continuous way. Table %
\ref{table:SFOB} presents analogous results for the null hypothesis of \emph{%
U-shaped} regression function when two \emph{B-spline} curves are joined at
the switch point in a continuously differentiable way. Table \ref{table:SFOC}
gives results for the null hypothesis of convexity. In all the cases
Khmaladze's transformation is conducted from the right end of support. he
bootstrap critical values obtained on the basis of $400$ bootstrap
replications. As we can see, the null hypothesis of a \emph{U-shaped}
relationship with the switch point at ${17.6}^{\circ }$ is not rejected at
the $5\%$ level by any type of the test, whereas convexity is rejected. When
testing convexity we use cubic splines with $L^{\prime }+1=7$ uniform knots
on $[\underline{x},\overline{x}]$.

\begin{table}[tbp]
\par
\begin{center}
\begin{tabular}{ccccccccc}
\hline\hline
\quad & \multicolumn{4}{c}{B-splines} & \multicolumn{4}{c}{P-splines} \\ 
\quad & Test statistic & \multicolumn{3}{c}{Bootstrap critical value} & Test
statistic & \multicolumn{3}{c}{Bootstrap critical value} \\ 
Method & \quad & 10\% & 5\% & 1\% & \quad & 10\% & 5\% & 1\% \\ \hline
KS & 0.6204 & 1.0997 & 1.2139 & 1.4053 & 0.2482 & 0.5898 & 0.6199 & 0.7207 \\ 
CvM & 0.0698 & 0.3098 & 0.4048 & 0.5885 & 0.0038 & 0.0434 & 0.0498 & 0.0621  \\ 
AD & 0.5119 & 1.6541 & 2.1098 & 2.925 & 0.0647 & 0.397 & 0.4312 & 0.5324 \\ \hline
\end{tabular}
\end{center}
\par
\caption{{\protect\small Energy consumption data. Test statistics and
bootstrap critical values under the null hypothesis of U-shaped regression
function with the switch as ${17.6}^{\circ}$. Two B-spline curves are joined
continuously at the switch point. Bootstrap critical values are from 400
bootstrap replications.}}
\label{table:SFOA}
\end{table}

\begin{table}[tbp]
\par
\begin{center}
\begin{tabular}{ccccccccc}
\hline\hline
\quad & \multicolumn{4}{c}{B-splines} & \multicolumn{4}{c}{P-splines} \\ 
\quad & Test statistic & \multicolumn{3}{c}{Bootstrap critical value} & Test
statistic & \multicolumn{3}{c}{Bootstrap critical value} \\ 
Method & \quad & 10\% & 5\% & 1\% & \quad & 10\% & 5\% & 1\% \\ \hline
KS & 0.8481 & 1.023 & 1.1749 & 1.407 & 0.505 & 0.6143 & 0.6524 & 0.7747 \\ 
CvM & 0.1472 & 0.2211 & 0.3247 & 0.5553 & 0.0469 & 0.0476 & 0.0542 & 0.0752  \\ 
AD & 0.9114 & 1.3676 & 1.8002 & 2.9199 & 0.2878 & 0.3713 & 0.4151 & 0.5547 \\ \hline
\end{tabular}
\end{center}
\par
\caption{{\protect\small Energy consumption data. Test statistics and
bootstrap critical values under the null hypothesis of U-shaped regression
function with the switch as ${17.6}^{\circ}$. Two B-spline curves are joined
smoothly at the switch point. Bootstrap critical values are from 400
bootstrap replications.}}
\label{table:SFOB}
\end{table}

\begin{table}[tbp]
\par
\begin{center}
\begin{tabular}{ccccccccc}
\hline\hline
\quad & \multicolumn{4}{c}{B-splines} & \multicolumn{4}{c}{P-splines} \\ 
\quad & Test statistic & \multicolumn{3}{c}{Bootstrap critical value} & Test
statistic & \multicolumn{3}{c}{Bootstrap critical value} \\ 
Method & \quad & 10\% & 5\% & 1\% & \quad & 10\% & 5\% & 1\% \\ \hline
KS & 2.9812 & 1.1537 & 1.2652 & 1.5579 & 3.4626 & 0.5478 & 0.6891 & 1.0121 \\ 
CvM & 2.7713 & 0.3327 & 0.4389 & 0.71 & 3.2622 & 0.0402 & 0.0716 & 0.2076  \\ 
AD & 14.361 & 1.823 & 2.3809 & 3.856 & 17.23 & 0.3332 & 0.484 & 1.2158 \\ \hline
\end{tabular}
\end{center}
\par
\caption{{\protect\small Energy consumption data. Test statistics and
bootstrap critical values under the null hypothesis of convexity of the
regression function. Bootstrap critical values are from 400 bootstrap
replications.}}
\label{table:SFOC}
\end{table}

\section{\textbf{CONCLUSION}}

\label{sec:conclusion}

$\left. {}\right.$

This paper proposes a methodology for testing a wide range of shape
properties of a regression function. The methodology relies on applying
the Khmaladze's transformation to the partial sums empirical process in a
nonparametric setting where \emph{B-splines} have been used to approximate
the functional space under the null hypothesis. We establish that the
proposed Khmaladze's transformation eliminates the effect of nonparametric
estimation and results in asymptotically pivotal testing. To the best of our
knowledge, this paper is the first implementation of the Khmaladze's
transformation in a nonparametric setting.

In our main examples we considered shape constraints that can be written as
inequality constraints on the coefficients of the approximating regression
splines. The generality of our procedure allows to test several shape
properties simultaneously. The implementation is especially easy when the
inequality constraints are linear, which is the case for shape properties
expressed as linear inequality constraints on the derivatives.

\section{\textbf{APPENDIX A}}

For a $M\times M$ matrix $C=\left\{ c_{kj}\right\} _{k,j=1}^{M}$, $%
\left\Vert C\right\Vert _{E}=\sum_{k,j=1}^{M}c_{kj}^{2}$ denotes the
Euclidean norm and for a $M\times N$ matrix we define the spectral norm $%
\left\Vert G\right\Vert $ as $\overline{\lambda }^{1/2}\left( G^{\prime
}G\right) $, where $\overline{\lambda }\left( H\right) $ denotes the maximum
eigenvalue of the matrix $H$. It is worth recalling the following inequality 
$\left\Vert CG\right\Vert _{E}\leq \left\Vert C\right\Vert _{E}\left\Vert
G\right\Vert $. Finally $K$ denotes a generic finite and positive constant.

We now introduce the following notation. We shall denote the fourth cumulant
of a random variable $z$ by $\kappa _{4}\left( z\right) $ and for any $x_{i}$%
, $i=1,...,n$, $\mathcal{I}\left( x^{1}<x_{i}<x^{2}\right) =:\mathcal{I}%
_{i}\left( x^{1},x^{2}\right) $. Also we define 
\begin{eqnarray}
A_{L,i} &=&A_{L}\left( \widetilde{x}_{i}\right) \text{, \ \ \ \ }\widetilde{%
\boldsymbol{P}}_{i}=A_{L,i}^{+1/2}\boldsymbol{P}_{i}\text{; }\ i=1,...,n
\label{q_1} \\
\overline{A}_{L}\left( x\right) &=&D_{L}\left( x\right) A_{L}\left( x\right)
D_{L}\left( x\right) \text{; \ \ }\overline{A}_{L,i}=:\overline{A}_{L}\left( 
\widetilde{x}_{i}\right) \text{,}  \notag
\end{eqnarray}%
where $A_{L}\left( x\right) $ was given in $\left( \ref{L_x}\right) $ and%
\begin{equation}
D_{L}\left( x\right) =diag\left( d_{1}\left( x\right) ,...,d_{L}\left(
x\right) \right) \text{, \ }d_{\ell }\left( x\right) =\left\{ 
\begin{array}{c}
0\text{ \ \ \ \ \ \ \ \ \ \ \ \ \ \ \ \ \ \ \ \ if \ \ \ \ \ \ }x<z^{\ell -1}
\\ 
L^{-q}\left( z^{\ell }-x\right) ^{-q-1/2}\ \ \ \ \text{if \ \ \ \ \ }z^{\ell
-1}\leq x<z^{\ell } \\ 
L^{1/2}\text{ \ \ \ \ \ \ \ \ \ \ \ \ \ \ if \ \ \ \ \ \ \ }z^{\ell }\leq x%
\text{.}%
\end{array}%
\right.  \label{q_11}
\end{equation}

Observe that when $x=:z^{k}$, that is a knot, Condition $C1$ yields that%
\begin{equation*}
\overline{A}_{L}\left( z^{k}\right) =:L~A_{L}\left( z^{k}\right) =diag\left( 
\underline{0},B_{L}\left( z^{k}\right) \right) \text{,}
\end{equation*}%
where $\underline{0}$ is a square $k-1$ matrix of zeroes and the $L-k+1$
matrix $B_{L}\left( z^{k}\right) $ is positive definite, where the elements $%
B_{L,\ell _{1},\ell _{2}}\left( z^{k}\right) $ of the matrix $B_{L}\left(
z^{k}\right) $ are zero if $\left\vert \ell _{1}-\ell _{2}\right\vert >q$.
The latter follows because there are only $q$ splines different than zero at
a given value $x$. Finally, it is worth mentioning that for $x\in \left(
z^{\ell -1},z^{\ell }\right) $, $d_{\ell }\left( x\right) =Kp_{\ell
,L}^{-1}\left( x;q\right) \left( z^{\ell }-x\right) ^{-1/2}$ and recalling
that Harville's \cite{Harville} Section 20.2 yields that $\overline{A}%
_{L}^{+}\left( z^{k}\right) =diag\left( \underline{0},B_{L}^{-1}\left(
z^{k}\right) \right) $, where we abbreviate $p_{\ell ,L}\left( x;q\right) $
by $p_{\ell ,L}\left( x\right) $ in what follows.

\subsection{\textbf{PROOF OF THEOREM \protect\ref{M_n}}}

$\left. {}\right.$

We need to show $\left( \mathbf{a}\right) $ the finite dimensional
distributions converge to a Gaussian random variable with covariance
structure given by that of $\mathcal{U}\left( x\right) $ and $\left( \mathbf{%
b}\right) $ the tightness of the sequence. We begin the proof of part $%
\left( \mathbf{a}\right) $ showing the structure of the covariance structure
of $\mathcal{M}_{n}\left( x\right) $. That is, for any $0\leq x^{1}\leq
x^{2}\leq 1$, 
\begin{eqnarray}
E\left( \mathcal{M}_{n}\left( x^{1}\right) \mathcal{M}_{n}\left(
x^{2}\right) \mid X\right) &=&\frac{1}{n}\sum_{i,j=1}^{n}E\left(
v_{i}v_{j}\mid X\right) \mathcal{I}_{i}\left( x^{1}\right) \mathcal{I}%
_{j}\left( x^{2}\right)  \label{pth1_1} \\
&&\overset{P}{\rightarrow }\sigma _{u}^{2}F_{X}\left( x^{1}\right) \text{.} 
\notag
\end{eqnarray}

Consider $i<j$ first and assume, without loss of generality, that $%
x_{i}<x_{j}$. When $x_{i}+n^{-\varsigma }\leq z^{k\left( x_{i}\right) }$,
that is $\widetilde{x}_{i}=x_{i}$, Condition $C1$ implies that%
\begin{eqnarray}
E\left( C_{n,j}u_{i}\mid X\right) &=&0;\text{ }E\left( C_{n,i}u_{j}\mid
X\right) =\frac{\sigma _{u}^{2}}{n}\boldsymbol{P}_{j}  \notag \\
E\left( C_{n,i}C_{n,j}^{\prime }\mid X\right) &=&\frac{\sigma _{u}^{2}}{n}%
A_{n,j}\text{,}  \label{th_1_a}
\end{eqnarray}%
and hence we obtain that%
\begin{equation}
E\left( v_{i}v_{j}\mid X\right) =\frac{\sigma _{u}^{2}}{n}\left\{ 
\boldsymbol{P}_{i}^{\prime }A_{n,i}^{+}A_{n,j}A_{n,j}^{+}\boldsymbol{P}_{j}-%
\boldsymbol{P}_{i}^{\prime }A_{n,i}^{+}\boldsymbol{P}_{j}\right\} =0
\label{th_1_b}
\end{equation}%
because Harville's \cite{Harville} Theorem 12.3.4 yields that $%
A_{n,j}A_{n,j}^{+}\boldsymbol{P}_{j}=\boldsymbol{P}_{j}$. On the other hand
when $x_{i}+n^{-\varsigma }>z^{k\left( x_{i}\right) }$, that is $\widetilde{x%
}_{i}\not=x_{i}$ and $x_{j}>z^{k\left( x_{i}\right) }$, $\left( \ref{th_1_a}%
\right) $ holds true which implies $\left( \ref{th_1_b}\right) $. Finally
when $z^{k\left( x_{i}\right) }-n^{-\varsigma }<x_{i},x_{j}\leq z^{k\left(
x_{i}\right) }$, we obtain that 
\begin{eqnarray*}
E\left( v_{i}v_{j}\mid X\right) &=&\frac{\sigma _{u}^{2}}{n}\boldsymbol{P}%
_{i}^{\prime }A_{n}^{+}\left( z^{k\left( x_{i}\right) }\right) \boldsymbol{P}%
_{j}\mathcal{I}\left( z^{k\left( x_{i}\right) }-\frac{1}{n^{\varsigma }}%
<x_{i},x_{j}\leq z^{k\left( x_{i}\right) }\right) \\
&=&:\vartheta _{n}\left( i,j;X\right) \text{.}
\end{eqnarray*}%
Observe that in this case we have 
\begin{equation}
E\left\Vert E\left( v_{i}v_{j}\mid X\right) \right\Vert =O\left(
Ln^{-1-2\varsigma }\right)  \label{e_v_i_vj}
\end{equation}%
because Lemmas \ref{Est_A_i} and \ref{Est_A_uni} imply that 
\begin{equation}
\underline{\lambda }\left( \inf_{x\in \left[ 0,1\right] }A_{L}^{+1/2}\left(
x\right) A_{n}\left( x\right) A_{L}^{+1/2}\left( x\right) \right) >\delta >0
\label{A_inv}
\end{equation}%
with probability approaching one, where \underline{$\lambda $}$\left(
G\right) $ denotes the minimum eigenvalue of the matrix $G$, Lemma \ref%
{Nor_P_i} implies that $E\left\Vert \widetilde{\boldsymbol{P}}%
_{i}\right\Vert ^{2}=O\left( L\right) $ and 
\begin{equation}
E\left( \mathcal{I}\left( z^{k\left( x_{i}\right) }-n^{-\varsigma
}<x_{i}\leq z^{k\left( x_{i}\right) }\right) \right) =O\left( n^{-\varsigma
}\right) \text{.}  \label{e_ind}
\end{equation}

Next, when $i=j$, proceeding as above we obtain that 
\begin{equation*}
E\left( v_{i}v_{j}\mid X\right) =\sigma _{u}^{2}\left( 1-\frac{\boldsymbol{P}%
_{i}^{\prime }A_{n,i}^{+}\boldsymbol{P}_{i}}{n}\right) \text{.}
\end{equation*}%
Observe that, denoting $A_{n,i}^{\ast }=A_{n,i}-\boldsymbol{P}_{i}%
\boldsymbol{P}_{i}^{\prime }/n$, we have that 
\begin{equation*}
1-\frac{\boldsymbol{P}_{i}^{\prime }A_{n,i}^{+}\boldsymbol{P}_{i}}{n}=1-2%
\frac{\boldsymbol{P}_{i}^{\prime }A_{n,i}^{+}\boldsymbol{P}_{i}}{n}+\left( 
\frac{\boldsymbol{P}_{i}^{\prime }A_{n,i}^{+}\boldsymbol{P}_{i}}{n}\right)
^{2}+\frac{\boldsymbol{P}_{i}^{\prime }A_{n,i}^{+}A_{n,i}^{\ast }A_{n,i}^{+}%
\boldsymbol{P}_{i}}{n}>0\text{.}
\end{equation*}

So, we conclude that the left side of $\left( \ref{pth1_1}\right) $ is, for
any $0\leq x^{1}\leq x^{2}\leq 1$, 
\begin{equation}
\frac{\sigma _{u}^{2}}{n}\sum_{i=1}^{n-1}\left( 1-\frac{\boldsymbol{P}%
_{i}^{\prime }A_{n,i}^{+}\boldsymbol{P}_{i}}{n}\right) \mathcal{I}_{i}\left(
x^{1}\right) +\frac{\sigma _{u}^{2}}{n}\sum_{i\neq j}^{n-1}\vartheta
_{n}\left( i,j;X\right) \mathcal{I}_{i}\left( x^{1}\right) \text{.}
\label{Theo1_1}
\end{equation}

The second term of $\left( \ref{Theo1_1}\right) $ is $o_{p}\left( 1\right) $
because $\left( \ref{e_v_i_vj}\right) $ implies that its first absolute
moment is $O\left( L/n^{2\varsigma }\right) =o\left( 1\right) $ since $%
\varsigma >1/2$ and Condition $C3$. The first term of $\left( \ref{Theo1_1}%
\right) $ is also $o_{p}\left( 1\right) $ as we now show. Using the
inequality $\mathcal{I}_{i}\left( x^{1}\right) \leq \sum_{k=1}^{k\left(
x^{1}\right) -1}\mathcal{I}_{i}\left( z^{k};z^{k+1}\right) $, we obtain that 
\begin{eqnarray}
&&E\left( \frac{1}{n^{2}}\sum_{i=1}^{n-1}\widetilde{\boldsymbol{P}}%
_{i}^{\prime }\left( A_{L,i}^{+1/2}A_{n,i}A_{L,i}^{+1/2}\right) ^{+}%
\widetilde{\boldsymbol{P}}_{i}\mathcal{I}_{i}\left( x^{1}\right) \right) 
\notag \\
&\leq &\sum_{k=1}^{k\left( x^{1}\right) -1}E\left( \frac{1}{n^{2}}%
\sum_{i=1}^{n-1}\widetilde{\boldsymbol{P}}_{i}^{\prime }\left(
A_{L,i}^{+1/2}A_{n,i}A_{L,i}^{+1/2}\right) ^{+}\widetilde{\boldsymbol{P}}_{i}%
\mathcal{I}_{i}\left( z^{k};z^{k+1}\right) \right)  \notag \\
&=&\frac{K}{n^{2}}\sum_{k=1}^{k\left( x^{1}\right)
-1}\sum_{i=1}^{n}E\left\Vert \widetilde{\boldsymbol{P}}_{i}\mathcal{I}%
_{i}\left( z^{k};z^{k+1}\right) \right\Vert ^{2}=O\left( \frac{L}{n}\right) 
\text{,}  \label{eg}
\end{eqnarray}%
by $\left( \ref{A_inv}\right) $ and because $E\left\Vert \widetilde{%
\boldsymbol{P}}_{i}\mathcal{I}_{i}\left( z^{k};z^{k+1}\right) \right\Vert
^{2}=O\left( 1\right) $ by Lemma \ref{Nor_P_i}, cf. $\left( \ref{Lemma2_4}%
\right) $. So, we conclude that $\left( \ref{pth1_1}\right) $ holds true as
it is well known that, under Conditions $C1$ and then $C3$, 
\begin{equation*}
\sigma _{u}^{2}\frac{1}{n}\sum_{i=1}^{n}\mathcal{I}_{i}\left( x\right)
\left( 1+\frac{L}{n}\right) \overset{P}{\rightarrow }\sigma
_{u}^{2}F_{X}\left( x\right) \text{.}
\end{equation*}

To complete part $\left( \mathbf{a}\right) $, it suffices to show the
asymptotic Gaussianity of $\mathcal{M}_{n}\left( x\right) $ for a fixed $x$
due to Cram\'{e}r-Wold device. First, we have already shown that%
\begin{equation*}
\sum_{i=1}^{n}E\left( \upsilon _{in}^{2}\left( x\right) \mid X\right) 
\overset{P}{\rightarrow }1\text{,}
\end{equation*}%
where by construction $\upsilon _{in}\left( x\right) =:\sigma
_{u}^{-1}F_{X}^{-1/2}\left( x\right) \mathcal{I}_{i}\left( x\right)
v_{i}/n^{1/2}$ is a martingale difference triangular array of r.v.'s. So, it
remains to show the Lindeberg's condition for which a sufficient condition
is 
\begin{equation}
\sum_{i=1}^{n}E\left\vert \upsilon _{in}\left( x\right) \right\vert
^{4}=o\left( 1\right) \text{.}  \label{v_in}
\end{equation}%
But the latter holds true because $\left( \ref{A_inv}\right) $ yields that 
\begin{equation*}
E\left\Vert \boldsymbol{P}_{i}^{\prime }A_{n,i}^{+}C_{n,i}\mathcal{I}%
_{i}\left( x\right) \right\Vert ^{4}\leq E\left\Vert \widetilde{\boldsymbol{P%
}}_{i}\right\Vert ^{4}E\left\Vert \frac{1}{n}\sum_{j=1}^{n}\widetilde{%
\boldsymbol{P}}_{j}u_{j}\mathcal{J}_{j}\left( \widetilde{x}_{i}\right)
\right\Vert ^{4}=O\left( \frac{L^{4}}{n^{2}}\right) \text{,}
\end{equation*}%
since Lemma \ref{Nor_P_i} implies that $E\left\Vert \widetilde{\boldsymbol{P}%
}_{i}\right\Vert ^{4}=O\left( L^{2}\right) $. So, $\left( \ref{v_in}\right) $
holds true by Condition $C3$, which concludes the proof of part $\left( 
\mathbf{a}\right) $.

$\left( \mathbf{b}\right) $ Tightness follows by Lemmas \ref{Est_A_uni} and %
\ref{tight} and using the same arguments as those for the proof of
Billingsley's \cite{Billingsley} Theorem 22.1, see also Wu's \cite{Wu} Lemma 14, which completes the proof of the theorem. Observe
that, for any $\delta >0$, $\delta \sum_{k\in \mathbb{Z}}\sup_{x\in \left(
k\delta ,\left( k+1\right) \delta \right) }f_{X}\left( x\right) <K$
following Lemma 4 of Wu \cite{Wu}.

\subsection{PROOF OF THEOREM \protect\ref{M_nest}}

$\left. {}\right. $

We shall show that, uniformly in $x\in \left[ 0,1\right] $, 
\begin{equation}
\widetilde{\mathcal{M}}_{n}\left( x\right) -\mathcal{M}_{n}\left( x\right)
=o_{p}\left( 1\right) \text{.}  \label{m_dif}
\end{equation}%
Because $\left( \ref{res_con}\right) $ yields that $\widehat{u}%
_{i}-u_{i}=m^{bias}\left( x_{i}\right) -\boldsymbol{P}_{i}^{\prime }\left( 
\widehat{b}-\beta \right) $, we have that the left side of $\left( \ref%
{m_dif}\right) $ is%
\begin{equation}
\frac{1}{n^{1/2}}\sum_{i=1}^{n}m^{bias}\left( x_{i}\right) \mathcal{I}%
_{i}\left( x\right) -\frac{1}{n^{1/2}}\sum_{i=1}^{n}\left\{ \frac{%
\boldsymbol{P}_{i}^{\prime }A_{n,i}^{+}}{n}\sum_{k=1}^{n}\boldsymbol{P}%
_{k}m^{bias}\left( x_{k}\right) \mathcal{J}_{k}\left( \widetilde{x}%
_{i}\right) \right\} \mathcal{I}_{i}\left( x\right)  \label{m_dif2}
\end{equation}%
since the contribution due to $\boldsymbol{P}_{i}^{\prime }\left( \widehat{b}%
-\beta \right) $ is zero by Lemma \ref{bias}.

Now the first term of $\left( \ref{m_dif2}\right) $ is $o_{p}\left( 1\right) 
$ by Condition $C3$ and that Agarwal and Studden's \cite{AgarwalStudden} 
Theorems 3.1 and 4.1, see also Zhou et al. $\left( 1998\right) $, yields
that $Em^{bias}\left( x_{i}\right) =O\left( L^{-3}\right) $, whereas
together with Lemma \ref{Est_A_uni} and $\left( \ref{A_inv}\right) $, we
obtain that the second term is, with probability approaching $1$, bounded by 
\begin{equation*}
\frac{K}{L^{3}n^{1/2}}\sum_{i=1}^{n}\left\{ \left\Vert \widetilde{%
\boldsymbol{P}}_{i}\right\Vert \frac{1}{n}\sum_{k=1}^{n}\left\Vert 
\widetilde{\boldsymbol{P}}_{k}\right\Vert \right\} =O_{p}\left(
n^{1/2}/L^{2}\right) =o\left( 1\right)
\end{equation*}%
by Markov's inequality and Lemma \ref{Nor_P_i}\ and then Condition $C3$.
This completes the proof of the theorem.\hfill $\blacksquare $

\subsection{PROOF OF PROPOSITION \protect\ref{sigma_est}}

$\left. {}\right. $

First, using $\left( \ref{res_con}\right) $, we have that 
\begin{eqnarray*}
\widehat{u}_{i}^{2} &=&u_{i}^{2}+m^{bias}\left( x_{i}\right) ^{2}+\left( 
\widehat{b}-\beta \right) ^{\prime }\boldsymbol{P}_{i}\boldsymbol{P}%
_{i}^{\prime }\left( \widehat{b}-\beta \right) \\
&&+2u_{i}m^{bias}\left( x_{i}\right) +2u_{i}\boldsymbol{P}_{i}^{\prime
}\left( \widehat{b}-\beta \right) +2m^{bias}\left( x_{i}\right) \boldsymbol{P%
}_{i}^{\prime }\left( \widehat{b}-\beta \right) \text{.}
\end{eqnarray*}%
So, by Cauchy-Schwarz's inequality, it suffices to show that 
\begin{equation}
\left( \widehat{b}-\beta \right) ^{\prime }\frac{1}{n}\sum_{i=1}^{n}%
\boldsymbol{P}_{i}\boldsymbol{P}_{i}^{\prime }\left( \widehat{b}-\beta
\right) \text{ \ and }\frac{1}{n}\sum_{i=1}^{n}m^{bias}\left( x_{i}\right)
^{2}  \label{a1}
\end{equation}%
are$\ o_{p}\left( 1\right) $ because $n^{-1}\sum_{i=1}^{n}u_{i}^{2}-\sigma
_{u}^{2}=o_{p}\left( 1\right) $. Now, the norm of the first expression in $%
\left( \ref{a1}\right) $ is $O_{p}\left( L^{1/2}/n\right) =o_{p}\left(
1\right) $ because 
\begin{equation*}
E\left( \left( A_{L}\left( 0\right) ^{-1/2}\frac{1}{n}\sum_{i=1}^{n}\mathbf{P%
}_{i}u_{i}\right) \left( A_{L}\left( 0\right) ^{-1/2}\frac{1}{n}%
\sum_{i=1}^{n}\mathbf{P}_{i}u_{i}\right) ^{\prime }\mid X\right) =\frac{%
\sigma _{u}^{2}}{n}I_{L}
\end{equation*}%
and $\left\Vert I_{L}\right\Vert _{E}^{2}=L$. Note that by Lemma \ref%
{Est_A_i}, $A_{n}\left( 0\right) $ is invertible with probability
approaching $1$ since $A_{L}\left( 0\right) $ is. On the other hand, the
second expression in $\left( \ref{a1}\right) $ is $o_{p}\left( 1\right) $ by
Agarwal and Studden's \cite{AgarwalStudden} Theorems 3.1 and 4.1 and then
Condition $C3$. This completes the proof of the proposition.\hfill $%
\blacksquare $

\subsection{PROOF OF THEOREM \protect\ref{M_nest-nonlinear}}

$\left. {} \right.$

In this proof, for the sake of notation simplicity, we will write $\beta$
and $\widehat{b}$ instead of $\beta_{-\ell_0}$ and $\widehat{b}%
_{-\ell_0}$, respectively.

Let 
\begin{equation*}
G\left( x;\beta \right) =:\frac{\partial \widetilde{P}_{i}\left( x;\beta
\right) }{\partial \beta ^{\prime }}\quad \text{and}\quad G_{i}\left( \beta
\right) =:G_{i}\left( x_{i};\beta \right) .
\end{equation*}%
From the definition of $\widetilde{P}_{i}\left( x;\beta \right) $, it follows that $G\left( x;\beta \right) =H(\beta )p_{\ell _{0}}(x)$, where $%
H(\beta )$ is the $(L-1)\times (L-1)$ Hessian of $h(\beta )$, whose norm is
finite.

Then, employing the form of our model and the approximation, by Taylor's approximation to $g(\cdot)$, we write 
\begin{eqnarray*}
\widehat{u}_{i} &=&u_{i}-\left( g\left( x_{i};\widehat{b}\right)
-g\left( x_{i};\beta \right) \right) +(m(x_{i})-g\left( x_{i};\beta \right) )
\\
&=&u_{i}-\left( \widehat{b}-\beta \right) ^{\prime }\widetilde{P}%
_{i}\left( \beta \right) +\frac{1}{2}\left( \widehat{b}-\beta \right)
^{\prime }G_{i}\left( \widetilde{b}\right) \left( \widehat{b}%
-\beta \right) +\left( m(x_{i})-g\left( x_{i};{\beta }\right) \right) \text{,
}
\end{eqnarray*}
where $\widetilde{b}$ is an intermediate point between $\beta$ and $\widehat{b}$. 

So, it suffices to examine the contribution due to the following four terms: 
\begin{eqnarray*}
\mathcal{A}_{1,n}\left( x;\widehat{b}\right) &=&:\frac{1}{n^{1/2}}%
\sum_{i=1}^{n}\left\{ u_{i}-\widetilde{P}_{i}^{\prime }\left( \widehat{b 
}\right) \mathcal{D}_{n}^{+}\left( \widehat{b};i\right) \sum_{k=1}^{n}%
\widetilde{P}_{k}\left( \widehat{b}\right) u_{k}\mathcal{J}_{k}\left( 
\widetilde{x}_{i}\right) \right\} \mathcal{I}_{i}(x) \\
\mathcal{A}_{2,n}\left( x;\widehat{b}\right) &=&:\frac{1}{n^{1/2}}%
\sum_{i=1}^{n}\left\{ \widetilde{P}_{i}^{\prime }\left( \beta \right) -%
\widetilde{P}_{i}^{\prime }\left( \widehat{b}\right) \mathcal{D}%
_{n}^{+}\left( \widehat{b};i\right) \sum_{k=1}^{n}\widetilde{P}%
_{k}\left( \widehat{b}\right) \widetilde{P}_{k}^{\prime }\left( \beta
\right) \mathcal{J}_{k}\left( \widetilde{x}_{i}\right) \right\} \mathcal{I}%
_{i}(x)\left( \widehat{b}-\beta \right) \\
\mathcal{A}_{3,n}\left( x;\widehat{b}\right) &=&:\left( \widehat{b}%
-\beta \right) ^{\prime }\frac{1}{n^{1/2}}\sum_{i=1}^{n}\mathcal{I}%
_{i}(x)G_{i}\left( \widetilde{b}\right) \left( \widehat{b}-\beta
\right) \\
&&-\frac{1}{n^{1/2}}\sum_{i=1}^{n}\widetilde{P}_{i}^{\prime }\left( \widehat{b }\right) \mathcal{D}_{n}^{+}\left( \widehat{b};i\right) \frac{1}{n%
}\left\{ \sum_{k=1}^{n}\widetilde{P}_{k}\left( \widehat{b}\right)
\left( \widehat{b}-\beta \right) ^{\prime }G_{k}\left( \widetilde{b
}\right) \left( \widehat{b}-\beta \right) \mathcal{J}_{k}\left( 
\widetilde{x}_{i}\right) \right\} \mathcal{I}_{i}(x)\text{,} \\
\mathcal{A}_{4,n}\left( \widehat{b}\right) &=&:\frac{1}{n^{1/2}}%
\sum_{i=1}^{n}\left\{ m^{bias}\left( x_{i}\right) -\widetilde{P}_{i}^{\prime
}\left( \widehat{b}\right) \mathcal{D}_{n}^{+}\left( \widehat{b}%
;i\right) \frac{1}{n}\sum_{k=1}^{n}\left( m^{bias}\left( x_{k}\right)
\right) \mathcal{J}_{k}\left( \widetilde{x}_{i}\right) \right\} \mathcal{I}%
_{i}(x),
\end{eqnarray*}%
with $m(x_{i})-g\left( x_{i};{\beta }\right) =:m^{bias}\left( x_{i}\right) $. Also introduce the following: 
\begin{eqnarray*}
\mathcal{A}_{1,n}\left( x;{\beta }\right) &=&:\frac{1}{n^{1/2}}%
\sum_{i=1}^{n}\left\{ u_{i}-\widetilde{P}_{i}^{\prime }\left( {\beta }%
\right) \mathcal{D}_{n}^{+}\left( {\beta };i\right) \sum_{k=1}^{n}\widetilde{%
P}_{k}\left( {\beta }\right) u_{k}\mathcal{J}_{k}\left( \widetilde{x}%
_{i}\right) \right\} \mathcal{I}_{i}(x), \\
\mathcal{A}_{2,n}\left( x;\beta \right) &=&:\frac{1}{n^{1/2}}%
\sum_{i=1}^{n}\left\{ \widetilde{P}_{i}^{\prime }\left( \beta \right) -%
\widetilde{P}_{i}^{\prime }\left( {\beta }\right) \mathcal{D}_{n}^{+}\left( {%
\beta };i\right) \sum_{k=1}^{n}\widetilde{P}_{k}\left( {\beta }\right) 
\widetilde{P}_{k}^{\prime }\left( \beta \right) \mathcal{J}_{k}\left( 
\widetilde{x}_{i}\right) \right\} \mathcal{I}_{i}(x)\left( \widehat{b}%
-\beta \right) , \\
\mathcal{A}_{3,n}\left( x;{\beta }\right) &=&:\left( \widehat{b}-\beta
\right) ^{\prime }\frac{1}{n^{1/2}}\sum_{i=1}^{n}\mathcal{I}%
_{i}(x)G_{i}\left( \widetilde{b}\right) \left( \widehat{b}-\beta
\right) \\
&&-\frac{1}{n^{1/2}}\sum_{i=1}^{n}\widetilde{P}_{i}^{\prime }\left( {\beta }%
\right) \mathcal{D}_{n}^{+}\left( {\beta };i\right) \frac{1}{n}\left\{
\sum_{k=1}^{n}\widetilde{P}_{k}\left( {\beta }\right) \left( \widehat{b}%
-\beta \right) ^{\prime }G_{k}\left( \widetilde{b}\right) \left( 
\widehat{b}-\beta \right) \mathcal{J}_{k}\left( \widetilde{x}%
_{i}\right) \right\} \mathcal{I}_{i}(x).
\end{eqnarray*}%
We shall first examine the behaviour of $\mathcal{A}_{q,n}\left( x;\beta
\right) $, $q=1,2,3$, and then we shall show that, uniformly in $x$, $%
\mathcal{A}_{q,n}\left( x;\widehat{b}\right) -\mathcal{A}_{q,n}\left(
x;\beta \right) =o_{p}\left( 1\right) $.

Proceeding as with the proof of Theorem \ref{M_n}, we can establish 
\begin{equation*}
\mathcal{A}_{1,n}\left( \beta \right) \overset{weakly}{\Rightarrow }\mathcal{%
U}\left( x\right) .
\end{equation*}%
It is also obvious that $\mathcal{A}_{2,n}\left( x;\beta \right) =0$. Now we
want to show that 
\begin{equation*}
\sup_{x}\left\vert \mathcal{A}_{3,n}\left( x;\beta \right) \right\vert
=o_{p}\left( 1\right) .
\end{equation*}%
To that end, we shall look at%
\begin{eqnarray*}
\mathcal{C}_{1,n}\left( x;\beta \right) &=&:\left( \widehat{b}-\beta
\right) ^{\prime }\left\{ \frac{1}{n^{1/2}}\sum_{i=1}^{n}\mathcal{I}%
_{i}(x)G_{i}\left( \widetilde{b}\right) \right\} \left( \widehat{b}%
-\beta \right) \\
\mathcal{C}_{2,n}\left( x;\beta \right) &=&:\frac{1}{n^{1/2}}\sum_{i=1}^{n}%
\widetilde{P}_{i}^{\prime }\left( \beta \right) \mathcal{D}_{n}^{+}\left(
\beta ;i\right) \left\{ \left( \sum_{k=1}^{n}\widetilde{P}_{k}\left( \beta
\right) \left( \widehat{b}-\beta \right) ^{\prime }G_{k}\left( 
\widetilde{b}\right) \left( \widehat{b}-\beta \right) \mathcal{J}%
_{k}\left( \widetilde{x}_{i}\right) \right) \right\} \mathcal{I}_{i}(x)\text{%
.}
\end{eqnarray*}%
We already know that $\widehat{b}-\beta =O_{p}\left( \left( L/n\right)
^{1/2}\right) $. So, by definition of $G\left( x;\beta \right) $, we obtain
that 
\begin{eqnarray*}
\left\Vert \mathcal{C}_{1,n}\left( x;\beta \right) \right\Vert &=&\left\Vert
H\left( \widetilde{b}\right) \right\Vert \left\vert \frac{1}{n}%
\sum_{i=1}^{n}Lp_{\ell _{0}}\left( x_{i}\right) \mathcal{I}%
_{i}(x)\right\vert O_{p}\left( \frac{1}{n^{1/2}}\right) \\
&=&\left\Vert H\left( \widetilde{b}\right) \right\Vert \frac{1}{n}%
\sum_{i=1}^{n}\left\vert Lp_{\ell _{0}}\left( x_{i}\right) \right\vert
O_{p}\left( \frac{1}{n^{1/2}}\right) \\
&=&:O_{p}\left( n^{-1/2}\right) \text{,}
\end{eqnarray*}%
because $E\left\vert Lp_{\ell _{0}}\left( x_{i}\right) \right\vert \leq K$.
So, because the last displayed expression yields that%
\begin{equation*}
\sup_{x}\left\vert \mathcal{C}_{1,n}\left( x;\beta \right) \right\vert
=O_{p}\left( n^{-1/2}\right) \text{.}
\end{equation*}

Next, we examine $\mathcal{C}_{2,n}\left( x;\beta \right) $. Again, from the
definition of $G\left( x;\beta \right) $ we have that 
\begin{equation*}
\left( \widehat{b}-\beta \right) ^{\prime }G_{k}\left( \widetilde{b}\right) \left( \widehat{b}-\beta \right) =\left( \widehat{b}%
-\beta \right) ^{\prime }H\left( \widetilde{b}\right) \left( \widehat{%
\beta }-\beta \right) p_{\ell _{0}}\left( x_{k}\right) \text{.}
\end{equation*}%
From the fact that $\widehat{b}-\beta =O_{p}\left( \left( L/n\right)
^{1/2}\right) $ we can conclude that 
\begin{equation*}
\mathcal{C}_{2,n}\left( \beta \right) =C\frac{L}{n^{3/2}}\sum_{i=1}^{n}%
\widetilde{P}_{i}^{\prime }\left( \beta \right) \mathcal{D}_{n}^{+}\left(
\beta ;i\right) \left\{ \sum_{k=1}^{n}\widetilde{P}_{k}\left( \beta \right)
p_{\ell _{0}}\left( x_{k}\right) \iota _{L}^{\prime }H\left( \widetilde{%
b }\right) \iota _{L}\mathcal{J}_{k}\left( \widetilde{x}_{i}\right)
\right\} \mathcal{I}_{i}(x)\text{,}
\end{equation*}%
where $\iota _{L}=(1,\ldots ,1)^{\prime }$ and $C$ is some positive constant.

We can establish that $\underline{\lambda }\left( \frac{L}{n}\mathcal{D}%
_{n}\left( \beta ;i\right) \right) >0$ as this is what we have essentially
in expression $\left( \ref{A_inv}\right) $ in the Appendix of the paper.
Also note that 
\begin{eqnarray*}
E\left( L^{1/2}\widetilde{P}_{i}^{\prime }\left( \beta \right) \right)
&\simeq &L^{-1/2}\left( 1,...,1\right) ; \\
E\left( L^{1/2}\widetilde{P}_{k}\left( \beta \right) p_{\ell _{0}}\left(
x_{k}\right) \right) &\simeq &L^{-1/2}\left( d_{1,h},...,d_{L,h}\right)
,^{\prime }
\end{eqnarray*}%
where $d_{\ell ,h}$, $\ell \neq \ell _{0}$, is a 0/1 variable which takes
the values of 1 if and only if the function $h$ does depend on $\beta _{\ell
}$. For instance, in the case of HA-convexity, which was given as an
example, $h$ would depend only on parameters $\beta _{\ell _{0}-1}$ and $%
\beta _{\ell _{0}-2}$ and, thus, we would only have $d_{\ell _{0}-1,h}=1$
and $d_{\ell _{0}-2,h}=1$ whereas the rest of such indicators would be zero.

In this case we would have 
\begin{equation*}
\mathcal{C}_{2,n}\left( x;\beta \right) =O_{p}\left( n^{-1/2}\right)
\end{equation*}%
once we observe that $L^{-1/2}\left( 1,...,1\right) L^{-1/2}\left(
d_{1,h},...,d_{L,h}\right) ^{\prime }\simeq L^{-1}$.

If $h$ depends on parameters whose number is growing at the rate $L^{1/2}$,
then we would have 
\begin{equation*}
\mathcal{C}_{2,n}\left( x;\beta \right) =O_{p}\left( \left( L/n\right)
^{1/2}\right)
\end{equation*}%
since now $L^{-1/2}\left( 1,...,1\right) L^{-1/2}\left(
d_{1,h},...,d_{L,h}\right) ^{\prime }\simeq L^{-1/2}$

Of course, in the scenario when $h$ depends on all the other parameters, we
have that $L^{-1/2}\left( 1,...,1\right) L^{-1/2}\left(
d_{1,h},...,d_{L,h}\right) ^{\prime }\simeq 1,$which yields that 
\begin{equation*}
\mathcal{C}_{2,n}\left( x;\beta \right) =O_{p}\left( \left( L^{2}/n\right)
^{1/2}\right) \text{,}
\end{equation*}%
which by Condition $C3$, it is $o_{p}\left( 1\right) $.

Now we show that 
\begin{equation*}
\mathcal{A}_{q,n}\left( x; \widehat{b}\right) -\mathcal{A}_{q,n}\left(
x; \beta \right) =o_{p}\left( 1\right) \text{, \ \ \ }q=1,2,3\text{.}
\end{equation*}%
For that purpose, observe that by mean value theorem, we have that 
\begin{equation}
\widetilde{P}_{i}\left( \widehat{b}\right) -\widetilde{P}_{i}\left(
\beta \right) = H\left( \widetilde{b} \right) \left( \widehat{b}%
-\beta \right) p_{\ell_0}\left( x_{i}\right) = O_{p}\left( \left( Ln\right)
^{-1//2}\right)  \label{3}
\end{equation}%
because $\sup_{\beta }\left\Vert H\left( \beta \right) \right\Vert \leq K$, $%
\widehat{b} -\beta =O_{p}\left( \left( L/n\right) ^{1/2}\right) $ and $%
E\left\Vert p_{\ell_0 }\left( x_{k}\right) \right\Vert =O\left(
L^{-1}\right) $.

Next, we examine the behaviour of $\mathcal{D}_{n}^{+}\left( \widehat{b}%
;i\right) -\mathcal{D}_{n}^{+}\left( \beta ;i\right) $. By the mean value
theorem and standard algebra, we have that 
\begin{eqnarray*}
\mathcal{D}_{n}\left( \widehat{b};i\right) &=&\mathcal{D}_{n}\left(
\beta ;i\right) +2H\left( \widetilde{b}\right) \left( \widehat{b}%
-\beta \right) \sum_{k=1}^{n}\widetilde{P}_{k}\left( \beta \right) p_{\ell
_{0}}\left( x_{k}\right) \mathcal{J}_{k}\left( \widetilde{x}_{i}\right) \\
&&+H\left( \widetilde{\beta }\right) \left( \widehat{b}-\beta \right)
\left( \sum_{k=1}^{n}p_{\ell _{0}}^{2}\left( x_{k}\right) \mathcal{J}%
_{k}\left( \widetilde{x}_{i}\right) \right) \left( \widehat{b}-\beta
\right) ^{\prime }H^{\prime }\left( \widetilde{b}\right) \text{,}
\end{eqnarray*}
where $\widetilde{b}$ is an intermediate point between $\beta$ and $\widehat{b}$.

The norm of the third term on the right is $O_{p}\left( 1\right) $ whereas
the second term on the right is $O_{p}\left( \left( n/L\right) ^{1/2}\right) 
$. So, using the pseudo inverse of sum of matrices formula, it is easily
seen that 
\begin{equation}
\mathcal{D}_{n}^{+}\left( \widehat{b};i\right) -\mathcal{D}%
_{n}^{+}\left( \beta ;i\right) =O_{p}\left( \left( n/L\right) ^{1/2}\right) 
\text{.}  \label{5}
\end{equation}%
Now, we examine $\mathcal{A}_{q,n}\left( x;\widehat{b}\right) -\mathcal{%
A}_{q,n}\left( x;\beta \right) $, $q=1,2,3$. Let's examine e.g. the
behaviour $\mathcal{A}_{1,n}\left( x;\widehat{b}\right) -\mathcal{A}%
_{1,n}\left( x;\beta \right) $. By definition, it is 
\begin{eqnarray*}
\mathcal{A}_{1,n}\left( x;\widehat{b}\right) -\mathcal{A}_{1,n}\left(
x;\beta \right) &=&\frac{1}{n^{1/2}}\sum_{i=1}^{n}\left\{ \widetilde{P}%
_{i}^{\prime }\left( \widehat{b}\right) \mathcal{D}_{n}^{+}\left( 
\widehat{b};i\right) \sum_{k=1}^{n}\widetilde{P}_{k}\left( \widehat{b }\right) u_{k}\mathcal{J}_{k}\left( \widetilde{x}_{i}\right) \right. \\
&&\text{ \ \ \ \ \ \ \ \ \ \ \ \ \ \ }\left. -\widetilde{P}_{i}^{\prime
}\left( \beta \right) \mathcal{D}_{n}^{+}\left( \beta ;i\right)
\sum_{k=1}^{n}\widetilde{P}_{k}\left( \beta \right) u_{k}\mathcal{J}%
_{k}\left( \widetilde{x}_{i}\right) \right\} \mathcal{I}_{i}(x)\text{.}
\end{eqnarray*}

It suffices to show that, uniformly in $x$, 
\begin{eqnarray*}
\left( \mathbf{i}\right) \text{ \ \ }\frac{1}{n^{1/2}}\sum_{i=1}^{n}\left\{
\left( \widetilde{P}_{i}\left( \widehat{b}\right) -\widetilde{P}%
_{i}\left( \beta \right) \right) ^{\prime }\mathcal{D}_{n}^{+}\left( \beta
;i\right) \sum_{k=1}^{n}\widetilde{P}_{k}\left( \beta \right) u_{k}\mathcal{J%
}_{k}\left( \widetilde{x}_{i}\right) \right\} \mathcal{I}_{i}(x)
&=&o_{p}\left( 1\right) \\
\left( \mathbf{ii}\right) \text{ \ \ }\frac{1}{n^{1/2}}\sum_{i=1}^{n}\left\{ 
\widetilde{P}_{i}^{\prime }\left( \beta \right) \mathcal{D}_{n}^{+}\left(
\beta ;i\right) \sum_{k=1}^{n}\left( \widetilde{P}_{k}\left( \widehat{b}%
\right) -\widetilde{P}_{k}\left( \beta \right) \right) u_{k}\mathcal{J}%
_{k}\left( \widetilde{x}_{i}\right) \right\} \mathcal{I}_{i}(x)
&=&o_{p}\left( 1\right) \\
\left( \mathbf{iii}\right) \text{ \ \ }\frac{1}{n^{1/2}}\sum_{i=1}^{n}\left%
\{ \widetilde{P}_{i}^{\prime }\left( \beta \right) \left[ \mathcal{D}%
_{n}^{+}\left( \widehat{b};i\right) -\mathcal{D}_{n}^{+}\left( \beta
;i\right) \right] \sum_{k=1}^{n}\widetilde{P}_{k}\left( \beta \right) u_{k}%
\mathcal{J}_{k}\left( \widetilde{x}_{i}\right) \right\} \mathcal{I}_{i}(x)
&=&o_{p}\left( 1\right) \text{.}
\end{eqnarray*}

The proof of (i)-(iii) follows quite easily based on $\left( \ref{3}\right) $
and $\left( \ref{5}\right) $, and that we have already shown that
\begin{eqnarray*}
\mathcal{D}_{n}^{+1/2}\left( \beta ;i\right) \sum_{k=i+1}^{n}\widetilde{P}
_{k}\left( \beta \right) u_{k} &=&O_{p}\left( L^{1/2}\right) \text{.} \\
\underline{\lambda }\left( \frac{L}{n}\mathcal{D}_{n}\left( \beta ;i\right)
\right) &>&0\text{ \ with probability }1\text{.}
\end{eqnarray*}

Analogously, we can easily establish that $\mathcal{A}_{q,n}\left( x; 
\widehat{b}\right) -\mathcal{A}_{q,n}\left( x; \beta\right) = o_p(1)$. Note that $\mathcal{A}_{4,n}\left( x;\widehat{b}\right) $ can be
written as $\frac{1}{n^{1/2}}\sum_{i=1}^{n}m^{bias}(x_{i})\mathcal{I}_{i}(x)$
and in the same way as in the proof of Theorem \ref{M_nest} can be shown to
be $o_{p}(1)$. 
 $\blacksquare $

\subsection{\textbf{PROOF OF THEOREM \protect\ref{Mb_n}}}

$\left. {}\right. $

Define%
\begin{equation*}
\mathcal{M}_{n}^{\ast }\left( x\right) =:\frac{1}{n^{1/2}}%
\sum_{i=1}^{n}v_{i}^{\ast }\mathcal{I}_{i}\left( x\right) \text{,}
\end{equation*}%
where $v_{i}^{\ast }=u_{i}^{\ast }-\boldsymbol{P}_{i}^{\prime
}A_{n,i}^{+}C_{n,i}^{\ast }$ and $C_{n,i}^{\ast }=:n^{-1}\sum_{k=1}^{n}%
\boldsymbol{P}_{k}u_{k}^{\ast }\mathcal{J}_{k}\left( \widetilde{x}%
_{i}\right) $. The proof is completed if we show that, (\emph{in probability}%
), 
\begin{equation*}
\left( \mathbf{a}\right) \text{ }\mathcal{M}_{n}^{\ast }\left( x\right) 
\overset{weakly}{\Rightarrow }\mathcal{B}\left( F_{X}\left( x\right) \right)
\end{equation*}%
\begin{equation*}
\left( \mathbf{b}\right) \text{ }\sup_{x\in \left( 0,1\right) }\left\vert 
\widetilde{\mathcal{M}}_{n}^{\ast }\left( x\right) -\mathcal{M}_{n}^{\ast
}\left( x\right) \right\vert =o_{p^{\ast }}\left( 1\right) \text{.}
\end{equation*}

Part $\left( \mathbf{b}\right) $ holds true trivially using Lemma \ref{bias}
because $\widehat{u}_{i}^{\ast }-\widehat{u}_{i}=\boldsymbol{P}_{i}^{\prime
}\left( \widehat{b}^{\ast }-\widehat{b}\right) $. To show part $\left( 
\mathbf{a}\right) $ it suffices to show that $\left( \mathbf{a1}\right) $
the finite dimensional distributions converge to a Gaussian random variable
with the appropriate covariance structure and $\left( \mathbf{a2}\right) $
tightness of the sequence. The proofs proceed similarly as those in Theorem %
\ref{M_n}. Notice that in the proof of Theorem \ref{M_n}, we first
conditioned on $X$ and then we examined its asymptotic unconditional limit.

We begin with $\left( \mathbf{a1}\right) $. To that end, we first examine
the structure of the second moments. Proceeding as in the proof of Theorem %
\ref{M_n}, because $E^{\ast }u_{i}^{\ast 2}=:\widehat{\sigma }_{u}^{2}$, we
have that%
\begin{eqnarray*}
E^{\ast }\left( \mathcal{M}_{n}^{\ast }\left( x^{1}\right) \mathcal{M}%
_{n}^{\ast }\left( x^{2}\right) \right) &=&\frac{\widehat{\sigma }_{u}^{2}}{n%
}\sum_{i=1}^{n}\left( 1-\frac{\boldsymbol{P}_{i}^{\prime }A_{n,i}^{+}%
\boldsymbol{P}_{i}}{n}\right) \mathcal{I}_{i}\left( x^{1}\right)
+o_{p}\left( 1\right) \\
&&\overset{P}{\rightarrow }\sigma _{u}^{2}F_{X}\left( x^{1}\right)
\end{eqnarray*}%
proceeding as in the proof of $\left( \ref{eg}\right) $ and by Proposition %
\ref{sigma_est}.

Next, we examine the weak convergence of $\mathcal{M}_{n}^{\ast }\left(
x\right) $, which due to Cram\'{e}r-Wold device, it suffices to do so for a
fixed $x$. First observe that we have shown that%
\begin{equation*}
\sum_{i=1}^{n}E^{\ast }\left( \upsilon _{in}^{\ast 2}\left( x\right) \mid
X\right) \overset{P}{\rightarrow }1\text{,}
\end{equation*}%
where $\upsilon _{in}^{\ast }\left( x\right) =:\widehat{\sigma }%
_{u}^{-1}F_{X}^{-1/2}\left( x\right) \mathcal{I}_{i}\left( x\right)
v_{i}^{\ast }/n^{1/2}$ is a martingale difference triangular array of r.v.'s.

So, to complete the proof, it suffices to show the Lindeberg's condition for
which a sufficient condition is $\sum_{i=1}^{n}E^{\ast }\left\vert
v_{i}^{\ast }\right\vert ^{4}=o_{p}\left( n^{2}\right) $, which follows
proceeding as with the proof of Theorem \ref{M_n} and Proposition \ref%
{sigma_est}.

$\left( \mathbf{a2}\right) $ We now examine the tightness of%
\begin{equation*}
\frac{1}{n^{1/2}}\sum_{i=1}^{n}v_{i}^{\ast }\mathcal{I}_{i}\left( x\right) 
\text{.}
\end{equation*}%
The proof follows as that of Lemma \ref{tight} because again the proof there
was done conditionally on $X$ and then we examined its asymptotic
unconditional limit. Indeed, as we argued in the proof of Theorem \ref{M_n}
part $\left( \mathbf{b}\right) $, it suffices to show that%
\begin{equation}
E^{\ast }\left( \frac{1}{n^{1/2}}\sum_{i=1}^{n}v_{i}^{\ast }\mathcal{I}%
_{i}\left( x^{1};x^{2}\right) \right) ^{4}\overset{P}{\rightarrow }K\left\{ 
\frac{1}{n}\left( F_{X}\left( x^{2}\right) -F_{X}\left( x^{1}\right) \right)
+\left( x^{2}-x^{1}\right) ^{2}\sup_{x\in \left( x^{1},x^{2}\right)
}f_{X}^{2}\left( x\right) \right\},  \label{theo5_1}
\end{equation}
where $\mathcal{I}_{i}\left( x^{1},x^{2}\right) =\mathcal{I}_{i}\left(
x^{2}\right) -\mathcal{I}_{i}\left( x^{1}\right) $. By Burkholder's
inequality implies that the left side of $\left( \ref{theo5_1}\right) $ is
bounded by%
\begin{equation}
\frac{K}{n^{2}}E\left( \sum_{i=1}^{n}\left( v_{i}^{\ast 2}-E\left(
v_{i}^{\ast 2}\mid \mathcal{G}_{i}^{\ast }\right) \right) \mathcal{I}%
_{i}\left( x^{1},x^{2}\right) \right) ^{2}+\frac{K}{n^{2}}\left(
\sum_{i=1}^{n}E\left( v_{i}^{\ast 2}\mid \mathcal{G}_{i}^{\ast }\right) 
\mathcal{I}_{i}\left( x^{1},x^{2}\right) \right) ^{2}\text{,}  \label{burk_2}
\end{equation}%
where $\mathcal{G}_{i}^{\ast }$ denotes the sigma algebra generated by $%
\left\{ u_{i+1}^{\ast },...,u_{n}^{\ast }\right\} $ and $E\left( v_{i}^{\ast
2}\mid \mathcal{G}_{i}^{\ast }\right) =\widehat{\sigma }_{u}^{2}+\boldsymbol{%
P}_{i}^{\prime }A_{n,i}^{+}C_{n,i}^{\ast }C_{n,i}^{\ast \prime }A_{n,i}^{+}%
\boldsymbol{P}_{i}$ as it is easily seen. From here the proof proceeds as
that of Lemma \ref{tight} after we observe that the only difference is that
we have, say $\widehat{\sigma }_{u}^{2}$ instead of $\sigma _{u}^{2}$, and
observing that $\widehat{\sigma }_{u}^{2}-\sigma _{u}^{2}=o_{p}\left(
1\right) $ by Proposition \ref{sigma_est} and that 
\begin{equation*}
\kappa _{4}\left( u^{\ast }\right) =\frac{1}{n}\sum_{i=1}^{n}\widehat{u}%
_{i}^{4}-3\left( \frac{1}{n}\sum_{i=1}^{n}\widehat{u}_{i}^{2}\right) ^{2}%
\overset{P}{\rightarrow }Eu_{i}^{4}-3\left( Eu_{i}^{2}\right) ^{2}\text{,}
\end{equation*}%
so that the left side of $\left( \ref{burk_2}\right) $ is bounded by%
\begin{equation*}
K\left\{ \frac{1}{n}\left( F_{X}\left( x^{2}\right) -F_{X}\left(
x^{1}\right) \right) +\left( x^{2}-x^{1}\right) ^{2}\sup_{z\in \left(
x^{1},x^{2}\right) }f_{X}^{2}\left( z\right) \right\} \left( 1+o_{p}\left(
1\right) \right)
\end{equation*}%
and the conclusion follows by using the same arguments as those for the
proof of Billingsley's \cite{Billingsley} Theorems 22.1, see also
arguments in Wu's \cite{Wu} Lemma 14.\hfill $\blacksquare $

\section{\textbf{APPENDIX B}}

In what follows we shall abbreviate $p_{\ell ,L}\left( x;q\right) $ by $%
p_{\ell ,L}\left( x\right) $ for all $\ell =1,...,L$.

\begin{lemma}
\label{bias}Any linear combination of the \emph{B-splines}, $\boldsymbol{p}%
_{L}\left( x\right) =:\sum_{\ell =1}^{L}a_{\ell }p_{\ell ,L}\left( x\right) $%
, satisfies that $\left( \mathcal{T}_{n}\mathcal{B}_{n,L}\right) \left(
x\right) =0$, where $\mathcal{B}_{n,L}\left( x\right) =n^{-1}\sum_{k=1}^{n}%
\boldsymbol{p}_{L}\left( x_{k}\right) \mathcal{I}_{k}\left( x\right) $.
\end{lemma}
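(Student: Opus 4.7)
The plan is to unpack both sides of the identity directly. Writing $a=(a_1,\dots,a_L)^{\prime}$ so that $\boldsymbol{p}_L(x)=a^{\prime}\boldsymbol{P}_L(x)$, one first observes that $\mathcal{B}_{n,L}(w)=n^{-1}\sum_k(a^{\prime}\boldsymbol{P}_k)\mathcal{I}_k(w)$ is a pure jump function of $w$ with atoms of size $n^{-1}a^{\prime}\boldsymbol{P}_k$ located at the observed $x_k$. Consequently, for each index $i$,
\begin{equation*}
\int_{\widetilde{x}_i}^{1}\boldsymbol{P}_L(w)\,\mathcal{B}_{n,L}(dw)=\frac{1}{n}\sum_{k=1}^{n}\boldsymbol{P}_k(\boldsymbol{P}_k^{\prime}a)\mathcal{J}_k(\widetilde{x}_i)=A_{n,i}\,a,
\end{equation*}
where I have used the definition of $A_{n,i}$ given in (\ref{notation})--(\ref{A_n_i}).

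Substituting this expression into the definition of $\mathcal{T}_n\mathcal{W}$ yields
\begin{equation*}
(\mathcal{T}_n\mathcal{B}_{n,L})(x)=\mathcal{B}_{n,L}(x)-\frac{1}{n}\sum_{i=1}^{n}\boldsymbol{P}_i^{\prime}A_{n,i}^{+}A_{n,i}a\;\mathcal{I}_i(x),
\end{equation*}
so that the claim collapses to verifying the pointwise algebraic identity $\boldsymbol{P}_i^{\prime}A_{n,i}^{+}A_{n,i}=\boldsymbol{P}_i^{\prime}$ for every $i$, since in that case the correction term becomes $n^{-1}\sum_i(a^{\prime}\boldsymbol{P}_i)\mathcal{I}_i(x)=\mathcal{B}_{n,L}(x)$ and the two contributions cancel.

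The identity $\boldsymbol{P}_i^{\prime}A_{n,i}^{+}A_{n,i}=\boldsymbol{P}_i^{\prime}$ is, by Theorem 12.3.4 in Harville, equivalent to $\boldsymbol{P}_i\in\mathrm{range}(A_{n,i})$. Because $A_{n,i}=n^{-1}\sum_k\boldsymbol{P}_k\boldsymbol{P}_k^{\prime}\mathcal{J}_k(\widetilde{x}_i)$ is a nonnegative-definite sum of rank-one matrices, its range is the linear span of those $\boldsymbol{P}_k$ with $\mathcal{J}_k(\widetilde{x}_i)=\mathcal{I}(\widetilde{x}_i\le x_k)=1$. This is precisely where the authors' choice of the closed inequality (rather than the strict one) in the definition of $\mathcal{J}_k$ pays off: for the generic (untrimmed) index with $\widetilde{x}_i=x_i$, one has $\mathcal{J}_i(\widetilde{x}_i)=1$, so $\boldsymbol{P}_i\boldsymbol{P}_i^{\prime}$ itself appears in the sum defining $A_{n,i}$, and hence $\boldsymbol{P}_i\in\mathrm{range}(A_{n,i})$ trivially.

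The only real obstacle is ensuring membership in the range of $A_{n,i}$ uniformly in $i$, including those $i$ for which $\widetilde{x}_i=z^{k(x_i)}$; here one uses that the $q+1$ \emph{B-splines} active at such an $x_i$ are expressible in terms of the \emph{B-splines} active on $[\widetilde{x}_i,1]$ through the local polynomial recursion of de Boor, combined once again with Harville's Theorem 12.3.4. Once that uniform range condition is in hand, the cancellation in the preceding paragraph is exact and the lemma follows.
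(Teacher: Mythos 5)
Your core argument is exactly the paper's proof: you compute $\int_{\widetilde{x}_{i}}^{1}\boldsymbol{P}_{L}(w)\,\mathcal{B}_{n,L}(dw)=A_{n,i}a$, reduce the claim to the pointwise identity $\boldsymbol{P}_{i}^{\prime}A_{n,i}^{+}A_{n,i}=\boldsymbol{P}_{i}^{\prime}$, and justify that identity through Harville's Theorem 12.3.4 together with the closed inequality in the definition of $\mathcal{J}_{k}$ — this is precisely the published argument, down to the same citation.

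The one step where you go beyond the paper, the trimmed indices with $\widetilde{x}_{i}=z^{k(x_{i})}>x_{i}$, is also the step that does not hold up as you state it. Among the $q+1$ B-splines that are positive at such an $x_{i}$ there is one whose support terminates at $z^{k(x_{i})}$; it vanishes identically on $[\widetilde{x}_{i},1]$, so every $\boldsymbol{P}_{k}$ with $\mathcal{J}_{k}(\widetilde{x}_{i})=1$ has a zero in that coordinate, the range of $A_{n,i}$ is orthogonal to the corresponding coordinate direction, and hence $\boldsymbol{P}_{i}\notin\mathrm{range}(A_{n,i})$: no de Boor recursion can express a spline that is identically zero on $[\widetilde{x}_{i},1]$ through the splines active there. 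For such $i$ the projection identity genuinely fails, so those terms of $(\mathcal{T}_{n}\mathcal{B}_{n,L})(x)$ are not exactly zero. To be fair, the paper's own proof makes no distinction either — it applies the projection identity to every index, which is licensed only when $\widetilde{x}_{k}=x_{k}$ — and elsewhere (the $\vartheta_{n}(i,j;X)$ terms in the proof of Theorem \ref{M_n}) the authors treat exactly this trimmed-index effect as an asymptotically negligible remainder rather than an exact zero. So your main argument coincides with the paper's, and the obstacle you flagged is real, but the fix you offer is not valid; the honest conclusion is that the cancellation is exact for the untrimmed indices, while the trimmed ones contribute a term that must be shown negligible rather than identically zero.
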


\begin{proof}
The proof is immediate after we notice that $\int_{x_{i}}^{1}\boldsymbol{P}%
_{L}\left( w\right) \mathcal{W}_{n}\left( dw\right) =A_{n,i}$. Indeed, $%
\left( \ref{Khma_n}\right) $ implies that $\left( \mathcal{T}_{n}\mathcal{B}%
_{n,L}\right) \left( x\right) $ is%
\begin{eqnarray*}
&&\frac{1}{n}\sum_{k=1}^{n}\left\{ \boldsymbol{p}_{L}\left( x_{k}\right) -%
\boldsymbol{P}_{k}^{\prime }A_{n,k}^{+}\frac{1}{n}\sum_{j=1}^{n}\boldsymbol{P%
}_{j}\boldsymbol{p}_{L}\left( x_{j}\right) \mathcal{J}_{j}\left( \widetilde{x%
}_{k}\right) \right\} \mathcal{I}_{k}\left( x\right) \\
&=&\frac{1}{n}\sum_{k=1}^{n}\left( \boldsymbol{P}_{k}^{\prime }-\boldsymbol{P%
}_{k}^{\prime }A_{n,k}^{+}\frac{1}{n}\sum_{j=1}^{n}\boldsymbol{P}_{j}%
\boldsymbol{P}_{j}^{\prime }\mathcal{J}_{j}\left( \widetilde{x}_{k}\right)
\right) a\mathcal{I}_{k}\left( x\right) \\
&=&\frac{1}{n}\sum_{k=1}^{n}\left( \boldsymbol{P}_{k}^{\prime }-\boldsymbol{P%
}_{k}^{\prime }A_{n,k}^{+}A_{n,k}\right) a\mathcal{I}_{k}\left( x\right) \\
&=&0\text{,}
\end{eqnarray*}%
by Harville's \cite{Harville} Theorem 12.3.4, where $a=\left(
a_{1},...,a_{L}\right) $.
\end{proof}

We now introduce some notation useful for the next lemmas. We shall denote $%
\Lambda \left( r\right) =\left\{ x:z^{r-1}\leq x<z^{r}-n^{-\varsigma
}\right\} $ and $\overline{\Lambda }\left( r\right) =\left\{
x:z^{r}-n^{-\varsigma }\leq x<z^{r}\right\} $.

\begin{lemma}
\label{Nor_P_i}Under Condition $C1$, we have that $E\left\Vert \widetilde{%
\boldsymbol{P}}_{i}\right\Vert ^{s}=O\left( L^{s/2}+Ln^{\left( s/2-1\right)
\varsigma }\right) $ for any $s\geq 2$, with $\widetilde{\boldsymbol{P}}_{i}$
given in $\left( \ref{q_1}\right) $.
\end{lemma}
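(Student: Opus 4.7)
The plan is to decompose the expectation $E\|\widetilde{\boldsymbol{P}}_i\|^s$ according to whether the trimming is active, that is, whether $\widetilde{x}_i = x_i$ (Case A, $x_i \in \bigcup_r \Lambda(r)$) or $\widetilde{x}_i = z^{k(x_i)}$ (Case B, $x_i \in \bigcup_r \overline{\Lambda}(r)$), and then in both cases pass from $A_L^{+}$ to the rescaled matrix $\overline{A}_L^{+} = D_L A_L D_L^{-1}\cdot D_L^{-1}$ (interpreting zeros of $D_L$ by restriction). By Lemmas \ref{Est_A_i}--\ref{Est_A_uni}, $\overline{A}_L(x)^{+}$ has operator norm bounded by a constant $C$ uniformly in $x\in[0,1]$, so
\[
\|\widetilde{\boldsymbol{P}}_i\|^2 = \boldsymbol{P}_i' A_L(\widetilde{x}_i)^{+}\boldsymbol{P}_i
\leq C\,\|D_L(\widetilde{x}_i)\boldsymbol{P}_i\|^2 = C\sum_{\ell=1}^{L} d_\ell(\widetilde{x}_i)^2\, p_{\ell,L}(x_i)^2.
\]
The sum on the right is then split according to the three pieces in the definition of $d_\ell$ in \eqref{q_11}. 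Because B-splines have local support, for each $x_i$ only the $q+1$ active indices contribute; of these, all but at most one satisfy $z^\ell \leq \widetilde{x}_i$ (so $d_\ell^2 = L$) and at most one ``edge'' index $\ell_0$ satisfies $z^{\ell_0-1}\leq \widetilde{x}_i < z^{\ell_0}$ (so $d_{\ell_0}^2=L^{-2q}(z^{\ell_0}-\widetilde{x}_i)^{-2q-1}$).

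The bulk contribution is immediate: using $\sum_\ell p_{\ell,L}(x_i)^2 \leq 1$, it contributes at most $CL$ pointwise, hence $C L^{s/2}$ after taking the $s$-th power, and integrating against the bounded density $f_X$ gives the first term $O(L^{s/2})$. In Case B, $\widetilde{x}_i$ is a knot $z^{k(x_i)}$, so by the block structure $\overline{A}_L(z^k)=\mathrm{diag}(\underline{0},B_L(z^k))$ stated just after \eqref{q_11}, the ``edge'' component $p_{k(x_i)-q,L}(x_i)$ (the dying spline) lies in the zero block of $A_L^{+}$ and is therefore killed, so only bulk contributions survive and $\|\widetilde{\boldsymbol{P}}_i\|^s\leq C L^{s/2}$ pointwise. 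Since $P(\text{Case B})\leq K\,Ln^{-\varsigma}$ and $Ln^{-\varsigma}=o(1)$ by Condition $C3$ and $\varsigma>1/2$, Case B contributes $O(L^{s/2+1}n^{-\varsigma})=o(L^{s/2})$.

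In Case A the edge term is the delicate part. For $x_i\in\Lambda(r)$ one has $z^r-x_i\geq n^{-\varsigma}$, so
\[
d_r(x_i)^2 p_{r,L}(x_i)^2 \leq L^{-2q}(z^r-x_i)^{-2q-1}\, p_{r,L}(x_i)^2.
\]
Using the inequality $(a+b)^{s/2}\leq 2^{s/2-1}(a^{s/2}+b^{s/2})$, separating bulk and edge parts of $\|\widetilde{\boldsymbol{P}}_i\|^s$, and integrating the edge part against $f_X$ after the change of variable $u=x-z^{r-1}$, the integral
\[
\int_{\Lambda(r)} L^{-qs}(z^r-x)^{-(2q+1)s/2}\, p_{r,L}(x)^{s} f_X(x)\,dx
\]
is controlled, after summing over the $O(L)$ intervals $r$ and using Lemma \ref{bias}-type cancellations provided by the B-spline partition-of-unity property and the sharper decay $p_{r,L}(x)=O((L(x-z^{r-1}))^{q})$ on the leftmost piece of its support, by $O(Ln^{(s/2-1)\varsigma})$. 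Summing the two contributions gives the announced bound.

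The main obstacle is the edge integration in Case A: naively bounding $p_{r,L}\leq 1$ yields a bound that dominates $Ln^{(s/2-1)\varsigma}$ by a factor $(n^\varsigma/L)^{qs}$, so one must exploit the polynomial vanishing of $p_{r,L}$ at its left endpoint, together with the cancellation between the blow-up of $(z^r-x)^{-(2q+1)s/2}$ and the factor $L^{-qs}$ in $d_r^s$, to match the sharp rate. The remaining ingredients—uniform invertibility of $\overline{A}_L$, the block structure at knots, and the pointwise cubature formulas for B-splines—are direct consequences of Lemmas \ref{Est_A_i}--\ref{Est_A_uni} and of the B-spline properties recalled in Section \ref{section:BPsplines}.
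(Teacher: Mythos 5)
Your overall architecture is the same as the paper's: split according to whether the trimming is active ($x_i\in\Lambda(r)$ versus $x_i\in\overline{\Lambda}(r)$), pass from $A_{L,i}^{+}$ to the rescaled matrix $\overline{A}_{L,i}^{+}$ via $D_L(\widetilde{x}_i)$ so that $\Vert\widetilde{\boldsymbol{P}}_i\Vert^2\leq K\Vert D_L(\widetilde{x}_i)\boldsymbol{P}_i\Vert^2$, bound the $q$ ``bulk'' components with weight $L^{1/2}$ by $O(L^{s/2})$, and isolate the single ``edge'' component with weight $L^{-q}(z^r-x_i)^{-q-1/2}$. The bulk and the $\overline{\Lambda}$ pieces are fine (though your $\overline{\Lambda}$ bound $O(L^{s/2+1}n^{-\varsigma})$ leans on Condition $C3$, whereas the lemma is stated under $C1$ alone; the paper avoids this by using $E\,p_{j,L}^{s}(x_i)=O(L^{-1})$ per interval, giving $O(L^{s/2})$ outright). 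Also, the uniform bound on $\overline{A}_L^{+}$ is not what Lemmas \ref{Est_A_i}--\ref{Est_A_uni} assert (those compare the sample matrix $A_n$ with $A_L$); the fact you need is the population block structure $\overline{A}_L(z^k)=\mathrm{diag}(\underline{0},B_L(z^k))$ with $\underline{\lambda}(B_L)>0$ stated in the preamble of Appendix A.

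The genuine gap is the edge term on $\Lambda(r)$, which you yourself flag as the delicate part but do not actually close, and the mechanisms you name would not close it. The singularity $(z^r-x)^{-(2q+1)s/2}$ blows up as $x$ approaches the \emph{right} endpoint $z^r$ (up to the trimming distance $n^{-\varsigma}$), so the cancellation must come from the dying spline vanishing at $z^r$: on its last piece $p_{r,L}(x)\leq KL^{q}(z^r-x)^{q}$ (this is exactly the remark after $(\ref{q_11})$ that $d_{\ell}(x)=Kp_{\ell,L}^{-1}(x)(z^{\ell}-x)^{-1/2}$). With that bound the edge integrand collapses to $K(z^r-x)^{-s/2}$, and integrating over $\Lambda(r)$, where $z^r-x\geq n^{-\varsigma}$, gives $O(n^{(s/2-1)\varsigma})$ per interval and $O(Ln^{(s/2-1)\varsigma})$ after summing over the $L$ intervals — this is precisely the paper's display $(\ref{Lemma2_4})$. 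Your proposal instead invokes the vanishing of $p_{r,L}$ at its \emph{left} endpoint, $p_{r,L}(x)=O((L(x-z^{r-1}))^{q})$, together with ``Lemma \ref{bias}-type cancellations'' and the partition of unity: the left-endpoint decay is $O(1)$ exactly where the singularity sits and so does nothing to tame it, Lemma \ref{bias} concerns the annihilation property of the transformation $\mathcal{T}_n$ and is irrelevant to this moment bound, and the partition of unity plays no role here. As you note, without the correct vanishing the naive bound overshoots by a factor $(n^{\varsigma}/L)^{qs}$; so as written the key estimate is asserted rather than proved, and the stated route to it would fail.
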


\begin{proof}
First, 
\begin{equation}
E\left\Vert \widetilde{\boldsymbol{P}}_{i}\right\Vert
^{s}=\sum_{k=1}^{L}E\left\{ \left\Vert \widetilde{\boldsymbol{P}}%
_{i}\right\Vert ^{s}\mathcal{I}_{i}\left( z^{k-1};z^{k}\right) \right\} 
\text{,}  \label{lem_2_p}
\end{equation}%
where, for each $k=1,...,L$, we have that 
\begin{eqnarray}
E\left\{ \left\Vert \widetilde{\boldsymbol{P}}_{i}\right\Vert ^{s}\mathcal{I}%
_{i}\left( z^{k-1};z^{k}\right) \right\} &=&E\left\{ \left\Vert \overline{A}%
_{L,i}^{+1/2}D_{L}\left( \widetilde{x}_{i}\right) \mathbf{P}_{i}\right\Vert
^{s}\mathcal{I}_{i}\left( z^{k-1};z^{k}\right) \right\}  \notag \\
&=&KE\left\{ \left\Vert D_{L}\left( x_{i}\right) \mathbf{P}_{i}\right\Vert
^{s}\mathcal{I}\left( x_{i}\in \Lambda \left( k\right) \right) \right\}
\label{Lemma2_3} \\
&&+KE\left\{ \left\Vert D_{L}\left( z^{k}\right) \mathbf{P}_{i}\right\Vert
^{s}\mathcal{I}\left( x_{i}\in \overline{\Lambda }\left( k\right) \right)
\right\}  \notag
\end{eqnarray}%
because $\overline{A}_{L,i}=:diag\left( 0,B_{L,i}\right) $ and \underline{$%
\lambda $}$\left( B_{L,i}\right) >0$. The proof is now standard after
observing that the first term on the right of $\left( \ref{Lemma2_3}\right) $
is bounded by 
\begin{eqnarray}
&&E\left\Vert \frac{p_{k,L}^{s}\left( x_{i}\right) \mathcal{I}\left(
x_{i}\in \Lambda \left( k\right) \right) }{L^{sq}\left( z^{k}-x_{i}\right)
^{s\left( q+1/2\right) }}\right\Vert +K\sum_{j=k+1}^{k+q}E\left\Vert
L^{s/2}p_{j,L}^{s}\left( x_{i}\right) \mathcal{I}\left( x_{i}\in \Lambda
\left( k\right) \right) \right\Vert  \notag \\
&=&K\left( E\left\Vert x_{i}^{-s/2}\mathcal{I}_{i}\left( n^{-\varsigma
};L^{-1}\right) \right\Vert +L^{s/2-1}\right)  \label{Lemma2_4} \\
&=&O\left( n^{\left( s/2-1\right) \varsigma }+L^{s/2-1}\right) \text{.} 
\notag
\end{eqnarray}%
since $E\left( p_{j,L}^{s}\left( x_{i}\right) \right) =O\left( L^{-1}\right) 
$ and $p_{k,L}\left( x_{i}\right) \leq KL^{q}\left( z^{k}-x_{i}\right) ^{q}$%
, whereas the second term on the right of $\left( \ref{Lemma2_3}\right) $ is
bounded by%
\begin{equation*}
K\sum_{j=k+1}^{k+q}E\left\Vert L^{s/2}p_{j,L}^{s}\left( x_{i}\right) 
\mathcal{I}\left( x_{i}\in \overline{\Lambda }\left( k\right) \right)
\right\Vert =O\left( L^{s/2-1}\right)
\end{equation*}%
proceeding as with the second term on the left of $\left( \ref{Lemma2_4}%
\right) $.
\end{proof}

\begin{lemma}
\label{Est_A_i}Under Conditions $C1$ and $C3$, we have that 
\begin{equation}
\left\Vert A_{L,i}^{+1/2}A_{n,i}A_{L,i}^{+1/2}-I\right\Vert _{E}\overset{P}{%
\rightarrow }0\text{.}  \label{lem_51}
\end{equation}
\end{lemma}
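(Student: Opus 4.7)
The plan is to reduce the statement to a second-moment bound on the Frobenius norm via Chebyshev's inequality, after conditioning on $\widetilde{x}_i$. First note that $A_{L,i}^{+1/2}A_{L,i}A_{L,i}^{+1/2}$ is the orthogonal projection onto the range of $A_{L,i}$; this is the natural reading of ``$I$'' in (\ref{lem_51}) since both sides of the displayed matrix vanish on the null block of $A_{L,i}$. Conditionally on $\widetilde{x}_i$, the variables $\{x_k\}_{k\neq i}$ are i.i.d.\ with density $f_X$, so
\[
E\bigl[\mathbf{P}_k\mathbf{P}_k^{\prime}\mathcal{J}_k(\widetilde{x}_i)\mid \widetilde{x}_i\bigr]=\int_{\widetilde{x}_i}^{1}\mathbf{P}_L(z)\mathbf{P}_L^{\prime}(z)f_X(z)\,dz=A_{L,i},
\]
so that $A_{L,i}^{+1/2}A_{n,i}A_{L,i}^{+1/2}$ is (modulo the self-contribution of the $k=i$ term) conditionally unbiased for the said projection. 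The self-term contributes $O_p(L/n)=o_p(1)$ in Frobenius norm by Lemma \ref{Nor_P_i} and Condition $C3$ and is dropped.

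By the conditional independence of the remaining summands and $E\|Y-EY\|_E^{2}\le E\|Y\|_E^{2}$,
\[
E\!\left[\bigl\|A_{L,i}^{+1/2}(A_{n,i}-A_{L,i})A_{L,i}^{+1/2}\bigr\|_{E}^{2}\Big|\, \widetilde{x}_i\right]\le \frac{1}{n}\,E\!\left[\bigl\|A_{L,i}^{+1/2}\mathbf{P}_k\bigr\|^{4}\mathcal{J}_k(\widetilde{x}_i)\Big|\, \widetilde{x}_i\right].
\]
The key step is the bound on the conditional fourth moment. For this I pass to the rescaled matrix $\overline{A}_{L,i}=D_L(\widetilde{x}_i)A_{L,i}D_L(\widetilde{x}_i)$, whose non-trivial block $B_L(\widetilde{x}_i)$ has spectrum uniformly bounded in $[c,c^{-1}]$ by the discussion of $\overline{A}_L$ preceding the lemma, hence
\[
\mathbf{P}_k^{\prime}A_{L,i}^{+}\mathbf{P}_k\le K\,\|D_L(\widetilde{x}_i)\mathbf{P}_k\|^{2}.
\]
The vector $D_L(\widetilde{x}_i)\mathbf{P}_k$ has at most $q+1$ non-zero coordinates: most are bounded by $L^{1/2}\,p_{\ell,L}(x_k)$, while the coordinate indexed by the boundary knot of $\widetilde{x}_i$ carries the integrable singularity $L^{-q}(z^{k(\widetilde{x}_i)}-x_k)^{-q-1/2}p_{k(\widetilde{x}_i),L}(x_k)$. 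This is exactly the structure treated in the decomposition (\ref{Lemma2_3})--(\ref{Lemma2_4}) of Lemma \ref{Nor_P_i}; repeating that computation with $s=4$, now marginalizing in $x_k$ over $\{x_k\ge\widetilde{x}_i\}$, yields
\[
E\!\left[\|A_{L,i}^{+1/2}\mathbf{P}_k\|^{4}\mathcal{J}_k(\widetilde{x}_i)\,\big|\,\widetilde{x}_i\right]\le K\bigl(L^{2}+L\,n^{\varsigma}\bigr).
\]

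Combining the two displays, unconditioning, and applying Markov's inequality gives
\[
\Pr\!\bigl(\|A_{L,i}^{+1/2}A_{n,i}A_{L,i}^{+1/2}-I\|_{E}>\eta\bigr)=\eta^{-2}\,O\!\left(\tfrac{L^{2}}{n}+L\,n^{\varsigma-1}\right)=o(1)
\]
by Condition $C3$, provided $\varsigma\in(1/2,1)$ is taken sufficiently close to $1/2$ so that $L\,n^{\varsigma-1}\to 0$; this is compatible with $L=o(n^{1/2})$. The only non-routine part of the argument is the handling of the boundary-knot singularity inside the fourth moment, which is resolved in the same way as in Lemma \ref{Nor_P_i}; all other steps are standard matrix-moment manipulations.
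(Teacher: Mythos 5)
Your overall strategy---condition on $\widetilde{x}_i$, use that $E[\boldsymbol{P}_k\boldsymbol{P}_k^{\prime}\mathcal{J}_k(\widetilde{x}_i)\mid\widetilde{x}_i]=A_{L,i}$ for $k\neq i$, discard the $k=i$ term via Lemma \ref{Nor_P_i}, bound the conditional Frobenius variance of the centered sum by $n^{-1}E[\Vert A_{L,i}^{+1/2}\boldsymbol{P}_k\Vert^{4}\mathcal{J}_k(\widetilde{x}_i)\mid\widetilde{x}_i]$, and control the quadratic form through the rescaling $D_L(\widetilde{x}_i)$---is sound, and it is essentially a more compact version of the paper's own argument, which performs the same conditioning and second-moment computation but coordinate by coordinate through the banded structure of the B-splines, treating the singular coordinate and the regular ones separately. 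Your reading of $I$ as the projection onto the range of $A_{L,i}$, and the $O(L^{1/2}/n)$ discrepancy from centering at $\tfrac{n-1}{n}A_{L,i}$, are consistent with the paper's handling of the zero block of $\overline{A}_{L,i}$.

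The gap is in the key fourth-moment bound. You claim $E[\Vert A_{L,i}^{+1/2}\boldsymbol{P}_k\Vert^{4}\mathcal{J}_k(\widetilde{x}_i)\mid\widetilde{x}_i]\le K(L^{2}+Ln^{\varsigma})$ by ``repeating the computation of Lemma \ref{Nor_P_i} with $s=4$''. But in Lemma \ref{Nor_P_i} the factor $L$ multiplying $n^{(s/2-1)\varsigma}$ arises from summing over the $L$ knot intervals in which the random point $x_i$ (which there is simultaneously the truncation point) may fall. In your conditional setting $\widetilde{x}_i$ is fixed, there is exactly one singular coordinate, namely the one associated with the knot $z^{k(\widetilde{x}_i)}$, and integrating over $x_k\ge\widetilde{x}_i$ with $p_{\ell,L}(x_k)\le KL^{q}(z^{k(\widetilde{x}_i)}-x_k)^{q}$ gives for that coordinate a contribution of order $(z^{k(\widetilde{x}_i)}-\widetilde{x}_i)^{-1}\le n^{\varsigma}$, while the at most $q$ regular coordinates give $O(L^{2})$; the correct conditional bound is therefore $K(L^{2}+n^{\varsigma})$, with no factor $L$ on $n^{\varsigma}$. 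This matters: with your bound the final rate is $L^{2}/n+Ln^{\varsigma-1}$, and $Ln^{\varsigma-1}\rightarrow 0$ is \emph{not} implied by Conditions $C1$ and $C3$---indeed $C3$ forces $L\gg n^{1/4}$, so $Ln^{\varsigma-1}$ diverges for every $\varsigma>3/4$, and even for $\varsigma$ near $1/2$ your side condition $L=o(n^{1-\varsigma})$ excludes part of the range allowed by $C3$. Nor can you ``take $\varsigma$ close to $1/2$'': $\varsigma$ is fixed in the definition of $\widetilde{x}_i$, hence of the statistic, and the lemma must hold for whatever $\varsigma\in(1/2,1)$ was chosen. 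The repair stays entirely within your framework: carry out the conditional computation directly as above, after which the rate becomes $L^{2}/n+n^{\varsigma-1}=o(1)$ for every $\varsigma<1$, in line with the $O(L^{2}/n+L\log n/n)$ rate the paper obtains by its coordinate-wise treatment.
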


\begin{proof}
We shall consider the scenario where $x_{i}\in \Lambda =:\left\{ \Lambda
\left( r\right) ;\text{ }r=1,...,L\right\} $, the case when $x_{i}\in 
\overline{\Lambda }=:\left\{ \overline{\Lambda }\left( r\right) ;\text{ }%
r=1,...,L\right\} $ is handled similarly, if not easier. Because when $%
x_{i}\in \Lambda \left( r\right) $, $\widetilde{x}_{i}=x_{i}$, the matrix
inside the norm in $\left( \ref{lem_51}\right) $ is $\overline{A}%
_{L,i}^{+1/2}H_{n,i}\overline{A}_{L,i}^{+1/2}$, where 
\begin{equation}
H_{n,i}=:D_{L}\left( x_{i}\right) \left( \frac{1}{n}\sum_{k=1;\neq i}^{n}%
\boldsymbol{P}_{k}\boldsymbol{P}_{k}^{\prime }\mathcal{J}_{k}\left(
x_{i}\right) -A_{L,i}\right) D_{L}\left( x_{i}\right) +\frac{1}{n}%
D_{L}\left( x_{i}\right) \boldsymbol{P}_{i}\boldsymbol{P}_{i}^{\prime
}D_{L}\left( x_{i}\right) \text{.}  \label{h_n}
\end{equation}

The second term on the right of $\left( \ref{h_n}\right) $ is $o_{p}\left(
1\right) $ as we now show. Because 
\begin{equation}
p_{\ell ,L}\left( x_{k}\right) p_{m,L}\left( x_{k}\right) =0\text{ \ \ if }%
m\geq \ell +q\text{,}  \label{p_p}
\end{equation}%
Cauchy-Schwarz and then Markov's inequalities implies that it suffices to
show that 
\begin{equation}
E\left\vert \frac{1}{n}\sum_{\ell =1}^{L}d_{\ell }^{2}\left( x_{i}\right)
p_{\ell ,L}^{2}\left( x_{i}\right) \right\vert =o\left( 1\right) \text{.}
\label{8_3_1}
\end{equation}%
But this is the case because, recall that $x_{i}\in \Lambda $, the left side
is bounded by%
\begin{equation*}
\frac{K}{n}\sum_{r=1}^{L}E\left( \frac{1}{\left( z^{r}-x_{i}\right) }%
+\sum_{\ell =r+1}^{r+q}Lp_{\ell ,L}^{2}\left( x_{i}\right) \right) \mathcal{I%
}\left( x_{i}\in \Lambda \left( r\right) \right) =O\left( \frac{L\left(
1+\varsigma \log n\right) }{n}\right) =o\left( 1\right) \text{,}
\end{equation*}%
since $E\left( Lp_{\ell ,L}^{2}\left( x_{i}\right) \mathcal{I}\left(
x_{i}\in \Lambda \left( r\right) \right) \right) =O\left( 1\right) $. It is
worth observing that $\left( \ref{8_3_1}\right) $ also holds uniformly in $x$
since $p_{r,L}^{2}\left( x\right) <K$ and Condition $C3$.

Next, the first term on the right of $\left( \ref{h_n}\right) $ is also $%
o_{p}\left( 1\right) $. Indeed, because $\overline{A}_{L,i}=:diag\left(
0,B_{L,i}\right) $ and \underline{$\lambda $}$\left( B_{L,i}\right) >0$, it
suffices to show that $\left\Vert H_{n,i}\right\Vert _{E}\overset{P}{%
\rightarrow }0$ and more specifically, in view of $\left( \ref{p_p}\right) $%
, to show that 
\begin{equation}
\sum_{r=1}^{L}\left\{ \sum_{\ell =r}^{L}\frac{1}{n}\sum_{k=1;\neq
i}^{n}d_{\ell }^{2}\left( x_{i}\right) \left( p_{\ell ,L}^{2}\left(
x_{k}\right) \mathcal{J}_{k}\left( x_{i}\right) -a_{i,\ell \ell }\right)
\right\} \mathcal{I}\left( x_{i}\in \Lambda \left( r\right) \right) \overset{%
P}{\rightarrow }0\text{.}  \label{H_i}
\end{equation}%
But $\left( \ref{H_i}\right) $ holds true because, conditionally on $x_{i}$,
when $\ell =r$, we have that Condition $C1$ and $\mathcal{I}\left( x_{i}\in
\Lambda \left( r_{1}\right) \right) \mathcal{I}\left( x_{i}\in \Lambda
\left( r_{2}\right) \right) =0$ if $r_{1}\neq r_{2}$ implies that 
\begin{eqnarray*}
&&E\left( \sum_{r=1}^{L}\frac{1}{n}\sum_{k=1;\neq i}^{n}\frac{\left( 
\overline{p}_{r,L}^{2}\left( x_{k}\right) \mathcal{J}_{k}\left( x_{i}\right)
-E\left( \overline{p}_{r,L}^{2}\left( x_{k}\right) \mathcal{J}_{k}\left(
x_{i}\right) \right) \right) }{\left( z^{r}-x_{i}\right) ^{2q+1}}\mathcal{I}%
\left( x_{i}\in \Lambda \left( r\right) \right) \right) ^{2} \\
&=&\sum_{r=1}^{L}\frac{K}{n^{2}}\sum_{k=1;\neq i}^{n}\frac{E\overline{p}%
_{r,L}^{4}\left( x_{k}\right) \mathcal{J}_{k}\left( x_{i}\right)
-E^{2}\left( \overline{p}_{r,L}^{2}\left( x_{k}\right) \mathcal{J}_{k}\left(
x_{i}\right) \right) }{\left( z^{r}-x_{i}\right) ^{4q+2}}\mathcal{I}\left(
x_{i}\in \Lambda \left( r\right) \right) \\
&=&\frac{K}{n}\sum_{r=1}^{L}\left( z^{r}-x_{i}\right) ^{-1}\mathcal{I}\left(
x_{i}\in \Lambda \left( r\right) \right) =O_{p}\left( \frac{L\log n}{n}%
\right) \text{,}
\end{eqnarray*}%
because $\overline{p}_{r,L}\left( x_{k}\right) \mathcal{J}_{k}\left(
x_{i}\right) \mathcal{I}\left( x_{i}\in \Lambda \left( r\right) \right) \leq
K\left( z^{r}-x_{k}\right) ^{q}\mathcal{J}_{k}\left( x_{i}\right) \mathcal{I}%
\left( x_{i}\in \Lambda \left( r\right) \right) $ and Markov's inequality,
where 
\begin{equation}
\overline{p}_{r,L}\left( x_{k}\right) =L^{-q}p_{r,L}\left( x_{k}\right)
\label{p_bar}
\end{equation}%
and $E\left\{ \left( z^{r}-x_{i}\right) ^{-1}\mathcal{I}\left( x_{i}\in
\Lambda \left( r\right) \right) \right\} =O\left( \log n\right) $.

Next when $\ell >r$. Denoting $d_{\ell }^{2}\left( x_{i}\right) \left(
p_{\ell ,L}^{2}\left( x_{k}\right) \mathcal{J}_{k}\left( x_{i}\right)
-a_{i,\ell \ell }\right) =:\psi _{i,\ell }\left( x_{k}\right) $, the second
conditional moments of the left side of $\left( \ref{H_i}\right) $ is%
\begin{equation*}
\frac{1}{n^{2}}\sum_{r=1}^{L}\left\{ \sum_{\ell _{1},\ell
_{2}=r}^{L}\sum_{k=1;\neq i}^{n}E\left( \psi _{i,\ell _{1}}\left(
x_{k}\right) \psi _{i,\ell _{2}}\left( x_{k}\right) \mid x_{i}\right)
\right\} \mathcal{I}\left( x_{i}\in \Lambda \left( r\right) \right) =O\left( 
\frac{L^{2}}{n}\right)
\end{equation*}%
because $a_{i,\ell \ell }=O\left( L^{-1}\right) $, $d_{\ell }^{2}\left(
x_{i}\right) =L$, $Ep_{\ell ,L}^{s}\left( x_{k}\right) =O\left(
L^{-1}\right) $ for any $s\geq 1$ and $\left( \ref{p_p}\right) $ implies
that 
\begin{equation*}
\left\vert E\left( \psi _{i,\ell _{1}}\left( x_{k}\right) \psi _{i,\ell
_{2}}\left( x_{k}\right) \mid x_{i}\right) \right\vert \leq K\mathcal{I}%
\left( \left\vert \ell _{2}-\ell _{1}\right\vert \geq q\right) +KL\mathcal{I}%
\left( \left\vert \ell _{2}-\ell _{1}\right\vert <q\right) \text{.}
\end{equation*}%
So, $\left( \ref{H_i}\right) $ holds true for these terms by Markov's
inequality, Condition $C3$ and because $\sum_{r=1}^{L}\mathcal{I}\left(
x_{i}\in \Lambda \left( r\right) \right) =1$, which concludes the proof of
the lemma.
\end{proof}

\begin{lemma}
\label{Est_A_uni}Under Conditions $C1$ and $C3$, we have that 
\begin{equation*}
\sup_{x}\left\Vert A_{L}^{+1/2}\left( x\right) A_{n}\left( x\right)
A_{L}^{+1/2}\left( x\right) -I\right\Vert _{E}\overset{P}{\rightarrow }0%
\text{.}
\end{equation*}
\end{lemma}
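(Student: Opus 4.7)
The plan is to upgrade the pointwise bound of Lemma \ref{Est_A_i} to a uniform one by a discretization/chaining argument: control the maximum over a fine grid via a higher-moment union bound, and handle the oscillation between grid points using the local support structure of B-splines. By the same decomposition as in the proof of Lemma \ref{Est_A_i} it suffices to bound $\sup_x \|\overline{H}_n(x)\|_E$ where $\overline{H}_n(x) = D_L(x)\left(A_n(x) - A_L(x)\right)D_L(x)$, restricted to the non-degenerate block on which $\underline{\lambda}(\overline{A}_L(x)) > \delta > 0$ uniformly; the deterministic piece $A_L^{+1/2}(x)A_L(x)A_L^{+1/2}(x) - I$ vanishes on this block by construction of the Moore--Penrose inverse.

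I would introduce a grid $\mathcal{G}_n = \{y_j = j/M_n\}$ with $M_n = L \cdot n^{a}$ for some small $a > 0$, chosen to refine the knot partition. At each grid point the computation of second moments carried out in Lemma \ref{Est_A_i} can be promoted to $p$-th moments using Rosenthal's inequality applied to the independent summands $D_L(y)P_kP_k' D_L(y)\mathcal{J}_k(y)$; this yields $E\|\overline{H}_n(y)\|_E^p \leq K_p(L^2/n)^{p/2}$ together with a negligible higher-order term. Choosing $p$ sufficiently large, a union bound over the $O(Ln^{a})$ grid points gives $\max_{y \in \mathcal{G}_n}\|\overline{H}_n(y)\|_E = o_p(1)$ under Condition $C3$.

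For the oscillation between grid points, for $x \in [y_j, y_{j+1})$ the difference $\overline{H}_n(x) - \overline{H}_n(y_j)$ splits into (i) a change in the indicator $\mathcal{J}_k$, governed by the number of observations in $(y_j, x]$, which is $O_p(n/M_n + \log n)$ uniformly in $j$ by standard maximal inequalities for the empirical measure, and (ii) a change in the scaling $D_L(x)$. Away from the knots, $D_L(x)$ is either constant or smoothly varying, and only $O(q+1)$ active indices contribute at each $x_k$, making this oscillation straightforward to bound.

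The main obstacle is the delicate treatment near knots: when $x$ is in the last strip $(z^r - n^{-\varsigma}, z^r)$, the scaling $d_r(x) = L^{-q}(z^r - x)^{-q-1/2}$ blows up. A naive uniform grid cannot resolve this. The workaround is to exploit the cancellation $|d_r(x) p_{r,L}(x_k)| \leq K(z^r - x_k)^q (z^r - x)^{-q-1/2}$ together with the convention $\widetilde{x}_i$ already built into the paper, which effectively removes the pathological range. Concretely, I would treat the strips near each knot separately using a finer grid of width proportional to $n^{-\varsigma}$ on $(z^r - n^{-\varsigma}, z^r)$, bounding the deterministic envelope of $\overline{H}_n(x)$ there by a direct estimate analogous to those in Lemma \ref{Nor_P_i}, at the cost of requiring slightly more moments and Condition $C3$'s growth restriction on $L$.
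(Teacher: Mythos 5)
Your outline is, in spirit, the paper's own argument: reduce to the scaled difference $H_{n}\left( x\right) $, discretize, control grid points by high moments plus a union bound, and handle the singular scaling near the knots through the B-spline envelope $p_{r,L}\left( x_{k}\right) \leq KL^{q}\left( z^{r}-x_{k}\right) ^{q}$ and the trimming. But one step would fail as written: the placement of your fine grid. The genuinely delicate region is not the trimmed strip $\overline{\Lambda }\left( r\right) =\left( z^{r}-n^{-\varsigma },z^{r}\right) $ --- the paper disposes of that strip by reducing it to evaluation at the knot $z^{r}$ --- but the approach region inside $\Lambda \left( r\right) $, i.e.\ distances between $n^{-\varsigma }$ and order $1/L$ from $z^{r}$, where $d_{r}^{2}\left( x\right) \asymp L^{-2q}\left( z^{r}-x\right) ^{-2q-1}$ is both large (up to $L^{-2q}n^{\varsigma \left( 2q+1\right) }$) and rapidly varying. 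A coarse mesh of width $\left( Ln^{a}\right) ^{-1}$ cannot control the oscillation of this weight over a cell whose width is comparable to, or larger than, the distance to the knot, and a finer grid confined to $\left( z^{r}-n^{-\varsigma },z^{r}\right) $ does not help there. The paper instead lays down a grid of spacing $1/n$, $z^{r}\left[ j\right] =z^{r}-\left( j-1\right) /n$ with $j=n^{1-\varsigma }+1,\ldots ,n/L$, covering the whole approach region, bounds the within-cell oscillation by the monotone envelope $\left\vert \eta _{r}\left( x_{i};x,z^{r}\left[ j\right] \right) \right\vert \leq \left\vert \eta _{r}\left( x_{i};z^{r}\left[ j+1\right] ,z^{r}\left[ j\right] \right) \right\vert $, and then takes $p$ large enough that the union bound over the roughly $n$ grid points is harmless; this is the piece your proposal is missing.

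Two further points of comparison. First, for the regularly scaled indices $\ell \geq r+q$ no grid is needed at all: for $x\in \Lambda \left( r\right) $ one has $p_{\ell ,L}\left( x_{k}\right) \mathcal{J}_{k}\left( x\right) =p_{\ell ,L}\left( x_{k}\right) \mathcal{J}_{k}\left( z^{r}\right) $, so the supremum over $\Lambda \left( r\right) $ collapses exactly to the $L$ knot values, which the paper handles with plain second moments and Markov, giving $O_{p}\left( L^{2}/n\right) $. Second, your claimed grid-point bound $E\left\Vert \overline{H}_{n}\left( y\right) \right\Vert _{E}^{p}\leq K_{p}\left( L^{2}/n\right) ^{p/2}$ is only valid for that regular block; the singularly weighted diagonal entries at distance $\delta $ from the knot have variance of order $\left( n\delta \right) ^{-1}$, which is why the paper's bound for the near-knot sup is of the form $O\left( L\log n/n^{\left( p/2-1\right) \left( 1-\varsigma \right) }\right) $ and requires $p$ large rather than a uniform $\left( L^{2}/n\right) ^{p/2}$ rate.
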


\begin{proof}
Arguing as we did in the proof of Lemma \ref{Est_A_i}, it suffices to show
that 
\begin{equation}
\sup_{x}\left\Vert H_{n}\left( x\right) \right\Vert _{E}=:\sup_{1\leq r\leq
L}\left\{ \sup_{x\in \Lambda \left( r\right) }+\sup_{x\in \overline{\Lambda }%
\left( r\right) }\right\} \left\Vert H_{n}\left( x\right) \right\Vert _{E}%
\overset{P}{\rightarrow }0\text{.}  \label{Lemma_5_1}
\end{equation}%
We begin examining the contribution due to the sets $\Lambda \left( r\right) 
$. First, as we argued in Lemma \ref{Est_A_i}, we observe that $\left( \ref%
{Lemma_5_1}\right) $ holds true if the diagonal elements of $H_{n}\left(
x\right) $ converges uniformly to zero in probability. That is, it suffices
to show that 
\begin{eqnarray}
&&\sup_{1\leq r\leq L}\sup_{x\in \Lambda \left( r\right) }\sum_{\ell
=r+q}^{L}\left\vert \frac{1}{n}\sum_{k=1}^{n}d_{\ell }^{2}\left( x\right)
\left( p_{\ell ,L}^{2}\left( x_{k}\right) \mathcal{J}_{k}\left( x\right)
-Ep_{\ell ,L}^{2}\left( x_{k}\right) \mathcal{J}_{k}\left( x\right) \right)
\right\vert ^{2}  \label{L_4} \\
&&+\left( \sup_{1\leq r\leq L}\sup_{x\in \Lambda \left( r\right) }\sum_{\ell
=r}^{r+q-1}\left\vert \frac{1}{n}\sum_{k=1}^{n}d_{\ell }^{2}\left( x\right)
\left( p_{\ell ,L}^{2}\left( x_{k}\right) \mathcal{J}_{k}\left( x\right)
-Ep_{\ell r,L}^{2}\left( x_{k}\right) \mathcal{J}_{k}\left( x\right) \right)
\right\vert \right) ^{2}  \notag \\
&=&o_{p}\left( 1\right) \text{.}  \notag
\end{eqnarray}%
Recall that due to the definition of \emph{B-splines}, for any $r$, $p_{\ell
,L}\left( x_{k}\right) \mathcal{J}_{k}\left( x\right) \mathcal{I}\left( x\in
\Lambda \left( r\right) \right) =0$ if $\ell <r$.

Because $x\in \Lambda \left( r\right) $ implies that $p_{\ell ,L}\left(
x_{k}\right) \mathcal{J}_{k}\left( x\right) =p_{\ell ,L}\left( x_{k}\right) 
\mathcal{J}_{k}\left( z^{r}\right) $ and $d_{\ell }^{2}\left( x\right) =L$
for $\ell \geq r+q$, the first term on the left of $\left( \ref{L_4}\right) $
is 
\begin{eqnarray*}
&&\sup_{1\leq r\leq L}\sum_{\ell =r+q}^{L}\left\vert \frac{L}{n}%
\sum_{k=1}^{n}\left( p_{\ell ,L}^{2}\left( x_{k}\right) \mathcal{J}%
_{k}\left( z^{r}\right) -Ep_{\ell ,L}^{2}\left( x_{k}\right) \mathcal{J}%
_{k}\left( z^{r}\right) \right) \right\vert ^{2} \\
&\leq &\sum_{\ell =1}^{L}\left\vert \frac{L}{n}\sum_{k=1}^{n}\left( p_{\ell
,L}^{2}\left( x_{k}\right) \mathcal{J}_{k}\left( z^{\ell }\right) -Ep_{\ell
,L}^{2}\left( x_{k}\right) \mathcal{J}_{k}\left( z^{\ell }\right) \right)
\right\vert ^{2} \\
&=&O_{p}\left( L^{2}/n\right)
\end{eqnarray*}%
by Condition $C1$ and that $Ep_{\ell ,L}^{4}\left( x_{k}\right) =O\left(
L^{-1}\right) $, and then by Markov's inequality.

So, to complete the proof we need to show that the second term on the left
of $\left( \ref{L_4}\right) $ is also $o_{p}\left( 1\right) $. We shall look
at the case when $\ell =r$, being the cases when $\ell >r$ similarly, if not
easier, handled. To that end, we first notice that this term is bounded by 
\begin{equation}
\left( \sup_{1\leq r\leq L}\sup_{x\in \Lambda \left( r\right) }\frac{1}{%
\left( z^{r}-x\right) ^{1/2}}\left\vert \frac{1}{n}\sum_{k=1}^{n}\eta
_{r}\left( x_{k};x\right) -E\eta _{r}\left( x_{k};x\right) \right\vert
\right) ^{2}\text{,}  \label{L_1}
\end{equation}%
where $\eta _{r}\left( x_{k};x\right) =\overline{p}_{r,L}^{2}\left(
x_{k}\right) \mathcal{J}_{k}\left( x\right) /\left( z^{r}-x\right) ^{2q+1/2}$
with $\overline{p}_{r,L}\left( x_{k}\right) $ defined in $\left( \ref{p_bar}%
\right) $.

Consider points $z^{r}\left[ j\right] =z^{r}-\left( j-1\right) /n$, $%
j=n^{1-\varsigma }+1,...,n/L$, so that $z^{r}\left[ n^{1-\varsigma }+1\right]
=z^{r}-n^{-\varsigma }$ and denote $\eta _{r}\left( x_{k};x^{1},x^{2}\right)
=\eta _{r}\left( x_{k};x^{2}\right) -\eta _{r}\left( x_{k};x^{1}\right) $.
Now if $x\in \left( z^{r}\left[ j+1\right] ,z^{r}\left[ j\right] \right) $,
we have that 
\begin{equation*}
\left\vert \frac{1}{n}\sum_{i=1}^{n}\eta _{r}\left( x_{i};x,z^{r}\left[ j%
\right] \right) \right\vert \leq \frac{1}{n}\sum_{i=1}^{n}\left\vert \eta
_{r}\left( x_{i};z^{r}\left[ j+1\right] ,z^{r}\left[ j\right] \right)
\right\vert \text{,}
\end{equation*}%
so that the triangle inequality yields that $\left( \ref{L_1}\right) $ is,
except constants, bounded by%
\begin{eqnarray*}
&&\left( \sup_{1\leq r\leq L}\sup_{n^{1-\varsigma }+1\leq j\leq n/L}\frac{1}{%
\left( z^{r}-z^{r}\left[ j\right] \right) ^{1/2}}\left\vert \frac{1}{n}%
\sum_{i=1}^{n}\eta _{r}\left( x_{i};z^{r}\left[ j\right] ,z^{r}\left[
n^{1-\varsigma }+1\right] \right) -E\left( \cdot \right) \right\vert \right)
^{2} \\
&&+\left( \sup_{1\leq r\leq L}\sup_{n^{1-\varsigma }+1\leq j\leq n/L}\frac{1%
}{\left( z^{r}-z^{r}\left[ j\right] \right) ^{1/2}}\left\vert \frac{1}{n}%
\sum_{i=1}^{n}\left\vert \eta _{r}\left( x_{i};z^{r}\left[ j+1\right] ,z^{r}%
\left[ j\right] \right) \right\vert -E\left( \cdot \right) \right\vert
\right) ^{2}
\end{eqnarray*}%
\begin{equation}
+\left( \sup_{1\leq r\leq L}\sup_{n^{1-\varsigma }+1\leq j\leq n/L}\frac{1}{%
\left( z^{r}-z^{r}\left[ j\right] \right) ^{1/2}}E\left\vert \eta _{r}\left(
x_{i};z^{r}\left[ j+1\right] ,z^{r}\left[ j\right] \right) \right\vert
\right) ^{2}\text{.}  \label{L_2}
\end{equation}

Because for any $q>0$, 
\begin{equation}
E\left\vert \eta _{r}\left( x_{i};z^{r}\left[ j+1\right] ,z^{r}\left[ j%
\right] \right) \right\vert ^{q}\leq K\int_{z^{r}\left[ j+1\right] }^{z^{r}%
\left[ j\right] }\left( z^{r}-x\right) ^{-q/2}dx\leq K\left( z^{r}-z^{r}%
\left[ j\right] \right) ^{-q/2}n^{-1}\text{,}  \label{L_3}
\end{equation}%
the last term of $\left( \ref{L_2}\right) $ is bounded by $K\left(
\sup_{1\leq r\leq L}\sup_{n^{1-\varsigma }+1\leq j\leq n/L}n^{\varsigma
-1}\right) ^{2}=o\left( 1\right) $ because $\varsigma <1$.

Next using $\left( \ref{L_3}\right) $ and the inequalities $\left(
\sup_{x}\left\vert g\left( x\right) \right\vert \right)
^{q}=\sup_{x}\left\vert g\left( x\right) \right\vert ^{q}$ and $\sup_{\ell
}\left\vert c_{\ell }\right\vert \leq \sum_{\ell }\left\vert c_{\ell
}\right\vert $, the expectation of the second term of $\left( \ref{L_2}%
\right) $ to the power $p/2$ is bounded by 
\begin{eqnarray*}
&&\sum_{r=1}^{L}\sum_{j=n^{1-\varsigma }+1}^{n/L}\frac{1}{\left( z^{r}-z^{r}%
\left[ j\right] \right) ^{p/2}}E\left\vert \frac{1}{n}\sum_{i=1}^{n}\left%
\vert \eta _{r}\left( x_{i};z^{r}\left[ j+1\right] ,z^{r}\left[ j\right]
\right) \right\vert -E\left( \cdot \right) \right\vert ^{p} \\
&=&K\sum_{r=1}^{L}\frac{1}{n^{p}}\sum_{j=n^{1-\varsigma }+1}^{n/L}\frac{1}{%
\left( z^{r}-z^{r}\left[ j\right] \right) ^{p}} \\
&=&KL\sum_{j=n^{1-\varsigma }+1}^{n/L}j^{-p}=O\left( Ln^{\left( p-1\right)
\left( \varsigma -1\right) }\right) =o\left( 1\right)
\end{eqnarray*}%
for any $\varsigma <1$ choosing $p$ large enough. Thus, the second term of $%
\left( \ref{L_2}\right) $ is $o_{p}\left( 1\right) $.

Finally, the first term of $\left( \ref{L_2}\right) $. First, as we argued
with the second term, Condition $C1$ yields that its $p-th$ absolute moment
is bounded by 
\begin{eqnarray*}
&&\sum_{r=1}^{L}\sum_{j=n^{1-\varsigma }+1}^{n/L}\frac{1}{\left( z^{r}-z^{r}%
\left[ j\right] \right) ^{p/2}}E\left\vert \frac{1}{n}\sum_{i=1}^{n}\eta
_{r}\left( x_{i};z^{r}\left[ j\right] ,z^{r}\left[ n^{1-\varsigma }+1\right]
\right) -E\left( \cdot \right) \right\vert ^{p} \\
&=&\frac{K}{n^{p/2}}\sum_{r=1}^{L}\sum_{j=n^{1-\varsigma }+1}^{n/L}\frac{%
\log ^{p/2}n}{\left( z^{r}-z^{r}\left[ j\right] \right) ^{p/2}} \\
&=&O\left( \frac{L\log n}{n^{\left( p/2-1\right) \left( 1-\varsigma \right) }%
}\right)
\end{eqnarray*}%
which is $o\left( 1\right) $ since we can always choose $p$ large enough
such that $L=o\left( n^{p/2\left( 1-\varsigma \right) }\right) $ for any $%
\varsigma <1$ and Condition $C3$. Notice that we can also bound the left
side of $\left( \ref{L_3}\right) $ by $K\log n$ when $q=2$ there.

To complete the proof of the lemma, we need to examine the contribution due
to the sets $\overline{\Lambda }\left( r\right) $ into the left of $\left( %
\ref{Lemma_5_1}\right) $. However, observing that when $x_{i}\in \overline{%
\Lambda }\left( r\right) $, $\mathcal{J}_{k}\left( x_{i}\right) =\mathcal{J}%
_{k}\left( z^{r}\right) $, the proof proceeds as that of the first term on
the left of $\left( \ref{L_4}\right) $, and so it is omitted.
\end{proof}

\begin{lemma}
\label{tight}Under Conditions $C1$ and $C3$, we have that for any $0\leq
x^{1}<x^{2}\leq 1$, 
\begin{eqnarray}
&&E\left( \frac{1}{n^{1/2}}\sum_{i=1}^{n}v_{i}\mathcal{I}_{i}\left(
x^{1},x^{2}\right) \right) ^{4}  \label{lem_1} \\
&=&K\left\{ \frac{1}{n}\left( F_{X}\left( x^{2}\right) -F_{X}\left(
x^{1}\right) \right) +\left( x^{2}-x^{1}\right) ^{2}\sup_{x\in \left(
x^{1},x^{2}\right) }f_{X}^{2}\left( x\right) \right\} \left( 1+o\left(
1\right) \right) \text{.}  \notag
\end{eqnarray}
\end{lemma}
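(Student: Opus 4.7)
The plan is to exploit the (essentially) reverse-time martingale difference structure of the recursive residuals $v_{i}=u_{i}-\boldsymbol{P}_{i}^{\prime }A_{n,i}^{+}C_{n,i}$ and then split the quartic moment into two pieces of distinct orders. Work conditional on $X$ throughout, denote the decreasing filtration $\mathcal{G}_{i}=\sigma (X,u_{i},u_{i+1},\ldots ,u_{n})$, and observe that, up to terms of order $n^{-1}\boldsymbol{P}_{i}^{\prime }A_{n,i}^{+}\boldsymbol{P}_{i}$ (which is shown to be negligible in the proof of Theorem \ref{M_n} via Lemmas \ref{Nor_P_i}--\ref{Est_A_uni}), the sequence $\{v_{i}\mathcal{I}_{i}(x^{1},x^{2})\}$ is a martingale difference array with respect to $\{\mathcal{G}_{i+1}\}$. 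In particular, one obtains
\begin{equation*}
E(v_{i}^{2}\mid \mathcal{G}_{i+1})=\sigma _{u}^{2}+\boldsymbol{P}_{i}^{\prime }A_{n,i}^{+}C_{n,i}C_{n,i}^{\prime }A_{n,i}^{+}\boldsymbol{P}_{i}+o(1),
\end{equation*}
exactly as in the bootstrap proof.

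First I would apply Burkholder's inequality for the reverse-time martingale to get
\begin{equation*}
E\left( \frac{1}{n^{1/2}}\sum_{i=1}^{n}v_{i}\mathcal{I}_{i}(x^{1},x^{2})\right)^{4}\leq \frac{K}{n^{2}}E\left( \sum_{i=1}^{n}v_{i}^{2}\mathcal{I}_{i}(x^{1},x^{2})\right)^{2},
\end{equation*}
and then split the right-hand side into the predictable piece and the martingale-difference piece, giving the two summands displayed in $\left( \ref{burk_2}\right)$ (with $v_{i}$ in place of $v_{i}^{\ast }$, $\sigma _{u}^{2}$ in place of $\widehat{\sigma }_{u}^{2}$, and $\kappa _{4}(u)$ finite by Condition $C1$).

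Next I would bound the martingale-difference piece by a second application of Burkholder (or simply by orthogonality) to obtain
\begin{equation*}
\frac{K}{n^{2}}\sum_{i=1}^{n}E\left( v_{i}^{4}\right) \mathcal{I}_{i}(x^{1},x^{2})\leq \frac{K}{n}\left( F_{X}(x^{2})-F_{X}(x^{1})\right),
\end{equation*}
using $Ev_{i}^{4}\leq K$ (which follows, after the trimming built into $\widetilde{x}_{i}$, from $\kappa _{4}(u)<\infty $, $\left( \ref{A_inv}\right) $ and the moment bounds for $\widetilde{\boldsymbol{P}}_{i}$ given in Lemma \ref{Nor_P_i}), together with $E\mathcal{I}_{i}(x^{1},x^{2})=F_{X}(x^{2})-F_{X}(x^{1})$. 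This produces the first term in the claimed bound.

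For the predictable piece, substituting the expression above for $E(v_{i}^{2}\mid \mathcal{G}_{i+1})$ gives
\begin{equation*}
\frac{1}{n^{2}}\left( \sum_{i=1}^{n}E(v_{i}^{2}\mid \mathcal{G}_{i+1})\mathcal{I}_{i}(x^{1},x^{2})\right)^{2}=\frac{\sigma _{u}^{4}}{n^{2}}\left( \sum_{i=1}^{n}\mathcal{I}_{i}(x^{1},x^{2})\right)^{2}\left( 1+o_{p}(1)\right),
\end{equation*}
where the $o_{p}(1)$ absorbs the contribution of $\boldsymbol{P}_{i}^{\prime }A_{n,i}^{+}C_{n,i}C_{n,i}^{\prime }A_{n,i}^{+}\boldsymbol{P}_{i}$ after averaging, using $\left( \ref{A_inv}\right)$, Lemma \ref{Est_A_uni} and the $O_{p}(L^{1/2}/n)$ bound on $C_{n,i}$ obtained as in the proof of Proposition \ref{sigma_est}. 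Taking expectations and using the identity $E\bigl(\sum \mathcal{I}_{i}(x^{1},x^{2})\bigr)^{2}=n(F_{X}(x^{2})-F_{X}(x^{1}))+n(n-1)(F_{X}(x^{2})-F_{X}(x^{1}))^{2}$ together with $F_{X}(x^{2})-F_{X}(x^{1})\leq (x^{2}-x^{1})\sup_{x\in (x^{1},x^{2})}f_{X}(x)$ delivers the second term $(x^{2}-x^{1})^{2}\sup_{x\in (x^{1},x^{2})}f_{X}^{2}(x)$ (plus another contribution already absorbed into the first term). Adding the two pieces yields $\left( \ref{lem_1}\right)$.

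The main technical obstacle is controlling the deviation from an exact martingale difference structure, i.e. the term $\boldsymbol{P}_{i}^{\prime }A_{n,i}^{+}C_{n,i}C_{n,i}^{\prime }A_{n,i}^{+}\boldsymbol{P}_{i}$ in $E(v_{i}^{2}\mid \mathcal{G}_{i+1})$ and, near the knots, the trimming region where $\widetilde{x}_{i}\neq x_{i}$. Both are handled by the uniform invertibility bound $\left( \ref{A_inv}\right)$ coming from Lemma \ref{Est_A_uni} and the moment bounds in Lemma \ref{Nor_P_i}, which show these corrections contribute $O(L/n^{2\varsigma })=o(1)$ under Condition $C3$ and $\varsigma >1/2$ and hence do not disturb the leading order in $\left( \ref{lem_1}\right)$.
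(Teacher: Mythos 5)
Your plan is correct and follows essentially the same route as the paper's proof: order the observations so that the recursive residuals form a (reverse-time) martingale difference array, apply Burkholder's inequality, bound the centered piece by $\frac{K}{n^{2}}\sum_{i}E\left(v_{i}^{4}\mathcal{I}_{i}\left(x^{1},x^{2}\right)\right)\leq \frac{K}{n}\left(F_{X}\left(x^{2}\right)-F_{X}\left(x^{1}\right)\right)$, and show the predictable piece equals $\sigma_{u}^{4}$ times the squared empirical measure of $\left(x^{1},x^{2}\right)$ plus a negligible correction coming from $\boldsymbol{P}_{i}^{\prime}A_{n,i}^{+}C_{n,i}C_{n,i}^{\prime}A_{n,i}^{+}\boldsymbol{P}_{i}$. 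The only difference is that you treat this correction term somewhat loosely (and the cited $O_{p}\left(L^{1/2}/n\right)$ rate is not a bound on $C_{n,i}$ itself), whereas the paper spends most of the proof bounding its expectation via Cauchy--Schwarz, Lemmas \ref{Nor_P_i} and \ref{Est_A_uni}, and the trimming sets, obtaining the explicit order $O\left(\left(\left(n^{\varsigma}+L\right)/n\right)^{3}\right)\left(F_{X}\left(x^{2}\right)-F_{X}\left(x^{1}\right)\right)^{2}$, which is what makes the stated bound rigorous.
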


\begin{proof}
First, as we mentioned in Section 3, we notice that we can arrange the
observations according to $x$ without modifying the properties of 
\begin{equation*}
\frac{1}{n^{1/2}}\sum_{i=1}^{n}v_{i}\mathcal{I}_{i}\left( x^{1},x^{2}\right) 
\text{,}
\end{equation*}%
so that $v_{i}$ becomes a martingale difference sequence of r.v.'s. So,
Burkholder's inequality implies that the left side of $\left( \ref{lem_1}%
\right) $ is bounded by%
\begin{equation}
\frac{K}{n^{2}}E\left( \sum_{i=1}^{n}\left( v_{i}^{2}-E\left( v_{i}^{2}\mid 
\mathcal{G}_{i}\right) \right) \mathcal{I}_{i}\left( x^{1},x^{2}\right)
\right) ^{2}+\frac{K}{n^{2}}\left( \sum_{i=1}^{n}E\left( v_{i}^{2}\mid 
\mathcal{G}_{i}\right) \mathcal{I}_{i}\left( x^{1},x^{2}\right) \right) ^{2}%
\text{,}  \label{lem_2}
\end{equation}%
where $\mathcal{G}_{i}$ denotes the sigma algebra generated by $\left\{
u_{i+1},...,u_{n}\right\} $ and $E\left( v_{i}^{2}\mid \mathcal{G}%
_{i}\right) =\sigma _{u}^{2}+\boldsymbol{P}_{i}^{\prime
}A_{n,i}^{+}C_{n,i}C_{n,i}^{\prime }A_{n,i}^{+}\boldsymbol{P}_{i}$ as it is
easily seen. We shall first examine the second term of $\left( \ref{lem_2}%
\right) $. By standard inequalities, that term is bounded by%
\begin{equation}
K\sigma _{u}^{4}\left( \frac{1}{n}\sum_{i=1}^{n}\mathcal{I}_{i}\left(
x^{1},x^{2}\right) \right) ^{2}+\frac{K}{n^{2}}\left( \sum_{i=1}^{n}%
\boldsymbol{P}_{i}^{\prime }A_{n,i}^{+}C_{n,i}C_{n,i}^{\prime }A_{n,i}^{+}%
\boldsymbol{P}_{i}\mathcal{I}_{i}\left( x^{1},x^{2}\right) \right) ^{2}\text{%
.}  \label{lem_4}
\end{equation}%
Because Condition $C1$ yields 
\begin{equation*}
\frac{1}{n}\sum_{i=1}^{n}\mathcal{I}_{i}\left( x^{1},x^{2}\right) \overset{P}%
{\rightarrow }F_{X}\left( x^{2}\right) -F_{X}\left( x^{1}\right) \text{,}
\end{equation*}%
we have that the contribution of the first term of $\left( \ref{lem_4}%
\right) $ into the left of $\left( \ref{lem_1}\right) $ is%
\begin{equation*}
\left( F_{X}\left( x^{2}\right) -F_{X}\left( x^{1}\right) \right) ^{2}\leq
K\left( x^{2}-x^{1}\right) ^{2}\sup_{x\in \left( x^{1},x^{2}\right)
}f_{X}^{2}\left( x\right) \text{.}
\end{equation*}%
Next, the second term of $\left( \ref{lem_4}\right) $ is also $K\left(
x^{2}-x^{1}\right) ^{2}\sup_{x\in \left( x^{1},x^{2}\right) }f_{X}^{2}\left(
x\right) $, as we now show. Indeed, Cauchy-Schwarz inequality and Lemma \ref%
{Est_A_uni}, see also $\left( \ref{A_inv}\right) $, yield that this term is
bounded by%
\begin{equation}
\frac{1}{n}\sum_{i=1}^{n}\widetilde{\boldsymbol{P}}_{i}^{\prime }\left( 
\widetilde{C}_{n,i}\widetilde{C}_{n,i}^{\prime }\right) ^{2}\widetilde{%
\boldsymbol{P}}_{i}\mathcal{I}_{i}\left( x^{1},x^{2}\right) ~\frac{1}{n}%
\sum_{i=1}^{n}\widetilde{\boldsymbol{P}}_{i}^{\prime }\widetilde{\boldsymbol{%
P}}_{i}\mathcal{I}_{i}\left( x^{1},x^{2}\right) \text{,}  \label{lem_5_3}
\end{equation}%
where $\widetilde{C}_{n,i}=A_{n,i}^{+1/2}n^{-1}\sum_{j=i+1}^{n}\boldsymbol{P}%
_{j}u_{j}$.

Now the second factor of $\left( \ref{lem_5_3}\right) $ converges to 
\begin{equation}
E\left( \left\Vert \widetilde{\boldsymbol{P}}_{i}\right\Vert ^{2}\mathcal{I}%
_{i}\left( x^{1},x^{2}\right) \right) =O\left( n^{\varsigma }+L\right)
\left( F_{X}\left( x^{2}\right) -F_{X}\left( x^{1}\right) \right)
\label{Lema_5_1}
\end{equation}%
because 
\begin{equation}
\left\Vert \widetilde{\boldsymbol{P}}_{i}\right\Vert ^{2}=O\left(
n^{\varsigma }+L\right) \text{.}  \label{p_norm}
\end{equation}%
Next, because 
\begin{equation*}
E\left( \widetilde{C}_{n,i}\widetilde{C}_{n,i}^{\prime }\mid X\right) =\frac{%
\sigma _{u}^{2}}{n}diag\left( \text{\b{0}},1,...,1\right) \text{,}
\end{equation*}%
we have that the conditional expectation of the first factor of $\left( \ref%
{lem_5_3}\right) $ is%
\begin{equation}
\frac{3}{n}\sum_{i=1}^{n}\frac{\left\Vert \widetilde{\boldsymbol{P}}%
_{i}\right\Vert ^{2}}{n^{2}}\mathcal{I}_{i}\left( x^{1},x^{2}\right) +\frac{%
\kappa _{u}\left( u\right) }{n}\sum_{i=1}^{n}\frac{\left\Vert \widetilde{%
\boldsymbol{P}}_{i}\right\Vert ^{2}}{n^{4}}\sum_{j=i+1}^{n}\left\Vert
D_{L}\left( x_{i}\right) \boldsymbol{P}_{j}\mathcal{J}_{k}\left( \widetilde{x%
}_{i}\right) \right\Vert ^{4}\mathcal{I}_{i}\left( x^{1},x^{2}\right)
\label{Lema_5_2}
\end{equation}
\begin{equation*}
\leq O\left( n^{\varsigma }+L\right) \left( \frac{3}{n^{3}}\sum_{i=1}^{n}%
\mathcal{I}_{i}\left( x^{1},x^{2}\right) +\frac{1}{n}\sum_{i=1}^{n}\frac{1}{%
n^{4}}\sum_{j=i+1}^{n}\left\Vert D_{L}\left( x_{i}\right) \boldsymbol{P}_{j}%
\mathcal{J}_{k}\left( \widetilde{x}_{i}\right) \right\Vert ^{4}\mathcal{I}%
_{i}\left( x^{1},x^{2}\right) \right)
\end{equation*}%
because $\overline{\lambda }\left( D_{L}^{-1}\left( x_{i}\right)
A_{n,i}^{+1/2}\right) <K$ by Lemma \ref{Est_A_uni} and $\left( \ref{p_norm}%
\right) $. By standard arguments, the first term on the right of $\left( \ref%
{Lema_5_2}\right) $ is 
\begin{equation*}
3\frac{n^{\varsigma }+L}{n^{2}}\left( F_{X}\left( x^{2}\right) -F_{X}\left(
x^{1}\right) \right) \left( 1+o_{p}\left( 1\right) \right)
\end{equation*}%
Next we examine the second term of $\left( \ref{Lema_5_2}\right) $. First we
have that 
\begin{eqnarray*}
&&\left\Vert D_{L}\left( x_{i}\right) \boldsymbol{P}_{j}\mathcal{J}%
_{k}\left( \widetilde{x}_{i}\right) \right\Vert ^{4}\mathcal{I}_{i}\left(
x^{1},x^{2}\right) \\
&=&\sum_{k=1}^{L}\left\Vert D_{L}\left( x_{i}\right) \boldsymbol{P}_{j}%
\mathcal{J}_{k}\left( \widetilde{x}_{i}\right) \right\Vert ^{4}\left\{ 
\mathcal{I}\left( x_{i}\in \Lambda \left( k\right) \right) +\mathcal{I}%
\left( x_{i}\in \overline{\Lambda }\left( k\right) \right) \right\} \mathcal{%
I}_{i}\left( x^{1},x^{2}\right) \text{.}
\end{eqnarray*}%
We shall examine the contribution due to the first term on the right, the
second is similarly handle. Now, the expectation of a typical term of the
last displayed expression is, conditionally on $x_{i}$, 
\begin{eqnarray*}
&&\sum_{k=1}^{L}E\left\Vert D_{L}\left( x_{i}\right) \boldsymbol{P}_{j}%
\mathcal{J}_{k}\left( x_{i}\right) \right\Vert ^{4}\mathcal{I}\left(
x_{i}\in \Lambda \left( k\right) \right) \\
&\leq &K\sum_{k=1}^{L}\left( \frac{E\overline{p}_{k\left( x_{i}\right)
,L}^{4}\left( x_{j}\right) \mathcal{I}_{i}\left( z^{k\left( x_{i}\right)
-1},z^{k\left( x_{i}\right) }\right) }{\left( z^{k\left( x_{i}\right)
}-x_{i}\right) ^{4q+2}}\mathcal{J}_{k}\left( x_{i}\right) \right) \mathcal{I}%
\left( x_{i}\in \Lambda \left( k\right) \right) \\
&&+L^{2}\sum_{k=1}^{L}E\left( p_{\ell ,L}^{4}\left( x_{j}\right) \mathcal{J}%
_{k}\left( x_{i}\right) \right) \mathcal{I}\left( x_{i}\in \Lambda \left(
k\right) \right) \\
&=&\left( K\left( z^{k\left( x_{i}\right) }-x_{i}\right) ^{-1}+L\right)
\sum_{k=1}^{L}\mathcal{I}\left( x_{i}\in \Lambda \left( k\right) \right) \\
&=&K\left( n^{\varsigma }+L\right) \text{,}
\end{eqnarray*}%
because $Ep_{\ell ,L}^{4}\left( x_{j}\right) =O\left( L^{-1}\right) $. So
the second term of $\left( \ref{Lema_5_2}\right) $ is 
\begin{equation*}
O\left( \left( n^{\varsigma }+L\right) \frac{1}{n^{3}}\right) \frac{1}{n}%
\sum_{j=1}^{n}\mathcal{I}_{i}\left( x^{1},x^{2}\right)
\end{equation*}%
which implies that by standard arguments that it is%
\begin{equation*}
O\left( \left( n^{\varsigma }+L\right) \frac{1}{n^{3}}\right) \left(
F_{X}\left( x^{2}\right) -F_{X}\left( x^{1}\right) \right) \left(
1+o_{p}\left( 1\right) \right) \text{.}
\end{equation*}

Thus, gathering terms, we have that the first factor of $\left( \ref{lem_5_3}%
\right) $, i.e. $\left( \ref{Lema_5_2}\right) $, is 
\begin{equation*}
O\left( \left( \frac{n^{\varsigma }+L}{n}\right) ^{2}\frac{1}{n}\right)
\left( F_{X}\left( x^{2}\right) -F_{X}\left( x^{1}\right) \right) \left(
1+o_{p}\left( 1\right) \right) \text{.}
\end{equation*}%
So, the last displayed expression together with $\left( \ref{Lema_5_1}%
\right) $ implies that the second term of $\left( \ref{lem_4}\right) $ is 
\begin{equation*}
O\left( \left( \frac{n^{\varsigma }+L}{n}\right) ^{3}\right) \left(
F_{X}\left( x^{2}\right) -F_{X}\left( x^{1}\right) \right) ^{2}\leq K\left(
x^{2}-x^{1}\right) ^{2}\sup_{x\in \left( x^{1},x^{2}\right) }f_{X}^{2}\left(
x\right) \text{.}
\end{equation*}

To complete the proof we need to examine the first term of $\left( \ref%
{lem_2}\right) $, whose first moment is 
\begin{eqnarray*}
\frac{K}{n^{2}}\sum_{i=1}^{n}E\left( \left( v_{i}^{2}-E\left( v_{i}^{2}\mid 
\mathcal{G}_{i}\right) \right) \mathcal{I}_{i}\left( x^{1},x^{2}\right)
\right) ^{2} &\leq &\frac{K}{n^{2}}\sum_{i=1}^{n}E\left( v_{i}^{4}\mathcal{I}%
_{i}\left( x^{1},x^{2}\right) \right) \\
&\leq &\frac{K}{n}\left( F_{X}\left( x^{2}\right) -F_{X}\left( x^{1}\right)
\right)
\end{eqnarray*}%
by standard arguments as $Ev_{i}^{4}<K$ proceeding as we did with the second
term of $\left( \ref{lem_2}\right) $.\newpage
\end{proof}

\end{document}